\documentclass[runningheads,orivec]{llncs}
\usepackage{amsmath}
\usepackage[T1]{fontenc}
\usepackage{graphicx}
\usepackage{bussproofs}
\usepackage{proof}
\usepackage{doc}
\usepackage{amssymb}    
\usepackage{mathtools}
\usepackage{cite}
\usepackage{xcolor,color}
\usepackage[all]{xy}
\usepackage{esvect}
\usepackage{stmaryrd}
\usepackage{units}
\usepackage{stackrel}
\usepackage{yfonts,upgreek}
\usepackage{relsize,bm}

\usepackage{hyperref}

\newtheorem{exm}[theorem]{Example}

\newtheorem{cor}[theorem]{Corollary}
\newtheorem{defi}[theorem]{Definition}

\DeclareSymbolFont{AMSb}{U}{msb}{m}{n}
\DeclareMathSymbol{\N}{\mathbin}{AMSb}{"4E}
\DeclareMathSymbol{\Z}{\mathbin}{AMSb}{"5A}
\DeclareMathSymbol{\R}{\mathbin}{AMSb}{"52}
\DeclareMathSymbol{\Q}{\mathbin}{AMSb}{"51}
\DeclareMathSymbol{\I}{\mathbin}{AMSb}{"49}
\DeclareMathSymbol{\C}{\mathbin}{AMSb}{"43}

\newcommand{\lk}{{\bf{LK}}}
\newcommand{\LK}{{\bf LK}}
\newcommand{\LKp}{{\bf LK}^{+}}
\newcommand{\LKpp}{{\bf {LK}^{++}}}

\newcommand{\lkqs}{{\bf{LK}_{\texttt{shift}}}}

\newcommand{\lj}{{\mathrm{LJ}}}

\newcommand{\LJpp}{{\bf LJ}^{++}}

\newcommand{\LJp}{{\bf LJ}^{+}}

\newcommand\limp{\to}

\def\LJ{\mathbf{LJ}}

\def\phi{\varphi}

\def\SW{\mathrm{SW}}
\def\CD{\mathrm{CD}}
\def\ED{\mathrm{ED}}

\newcommand{\iqc}{{\sf IQC}}

\newcommand{\SA}{{{S}_A}} %Andrews skolemization
\newcommand{\LE}{{\bf L}\varepsilon}
\newcommand{\IE}{{\bf I}\varepsilon}

\title{Proof Theory of Quantifier Shifts -  An Overview}
\subtitle{(Dedicated to V.P. Orevkov)}
\author{Matthias Baaz\inst{1}\orcidID{0000-0002-7815-2501}\thanks{This research was funded in part by the Austrian Science Fund (FWF) 10.55776/P36571.} \and
Mariami Gamsakhurdia\inst{1}\orcidID{0009-0009-0468-9945} }
\authorrunning{M. Baaz, M. Gamsakhurdia}
\institute{Institute of Discrete Mathematics and Geometry, TU Wien, Austria \\\email{\{baaz,mariami\}@logic.at}
}
\date{July 2026}

\begin{document}

\maketitle

\begin{abstract}
 Quantifier shifts are one of the oldest proof methods, but they are orthogonal to the cut free format:
They can shorten proofs non-elementarily. (The proof is based on the sequence of proofs exhibiting a non-elementary difference between proofs with cuts and cut-free proofs by Orevkov). The cost is, however, that those proofs are globally sound but locally unsound. (Their subderivations might be unsound in contrast to the Hilbert concept of proof). In non-classical logics these calculi are only sound when all quantifier shifts are added. This characterizes all intermediate logics where Skolemization and the epsilon calculus is conservative (in contrast to the epsilon theorem which holds only for finite-valued Gödel logics).

% keywords can be removed
\keywords{Proof Theory \and Quantifier Shifts \and epsilon Calculus\and Non-classical Logics}
\end{abstract}

\section{Introduction}

%Hilbert's epsilon calculus \cite{hilbertbernays39} is one of the oldest proof-theoretic formalism with the first epsilon Theorem (in modern sense the extended first epsilon Theorem) as the first proper proof-theoretic result (the epsilon calculus has been identified as the only reasonable formalization for mathematics by Bourbaki \cite{Bourbaki}).

Hilbert's epsilon calculus \cite{hilbertbernays39} is one of the earliest proof-theoretic formalisms. Its first epsilon theorem (more precisely, the extended first epsilon theorem in modern terminology) constitutes one of the first genuine proof-theoretic results. The epsilon calculus was even regarded by Bourbaki \cite{Bourbaki} as the most natural formalization of mathematics.

The epsilon formalism is based on the replacement of $\exists xA(x)$ by $A(\varepsilon_x A(x))$ and $\forall xA(x)$ by $A(\varepsilon_x \neg A(x))$. Any Hilbert-style proof in a system based on the axiom  $A(t) \rightarrow \exists xA(x)$ and the inference rules
 \begin{center}
\AxiomC{$A(a)    \rightarrow B $}
\UnaryInfC{$\exists xA(x)    \rightarrow B$}
\DisplayProof
\end{center}
can be directly translated to epsilon calculus.
The same holds if $\forall$ is in the language and for the sequent-format. The only requirement is a weak regularity condition: this means that whenever the same characteristic variables are used, the same quantified formula in the conclusion follows.
In due course of such a translation, the formulas $A(t) \rightarrow \exists xA(x)$ are replaced by $A(t) \rightarrow A(\varepsilon_x A(x))$ and characteristic variables are simply substituted by suitable epsilon terms throughout the proof.
Consequently, the translation is independent of whether the underlying propositional calculus is finitely or infinitely axiomatized. Moreover, unlike ordinary first-order calculi, the full deduction theorem remains valid. Consequently, every derivation in the epsilon calculus can be identified with a propositional implication whose antecedent is the conjunction of all critical formulas and whose consequent is the epsilon translation of the final formula.
%This implies that it does not matter whether there are finitely or infinitely many propositional axioms and that the full deduction theorem holds, contrary to the usual first-order formalism. As a consequence, every derivation in the epsilon calculus may be identified with a valid propositional formula, namely the conjunction of critical formulas implying the epsilon translation of the result. \\

    In classical logic $\tau_xA(x)$ could be used for duality to represent  $\forall x A(x)$ instead of $A(\varepsilon_x \neg A(x))$. In this case, we have to add critical formulas $A(\tau_xA(x))\rightarrow A(t)$ directly representing $A(\varepsilon_x \neg A(x))\rightarrow A(t)$.  

The critical formulas immediately yield the classical quantifier-shift principles. For example:
\begin{enumerate}
    \item $\forall x (A(x)\vee B)\rightarrow \forall xA(x)\vee B$
generated by $A(\tau_x (A(x)\vee B))    \rightarrow A(\tau_xA(x))\vee B$.
    \item $\forall xA(x)\vee B\rightarrow\forall x (A(x)\vee B) $ generated by $A(\tau_x A(x))\vee B    \rightarrow  A(\tau_x (A(x)\vee B))$ and 
\begin{center}
\AxiomC{$C    \rightarrow D$}
\UnaryInfC{$C\vee B\rightarrow D\vee B$}
\DisplayProof
\end{center}
        \item $\forall x A(x)    \rightarrow B\rightarrow (\exists xA(x)    \rightarrow B)$ generated by $(A(\tau_x (A(x))    \rightarrow B)    \rightarrow A(\varepsilon_x (A(x)    \rightarrow B))    \rightarrow B$.
    \item $\exists x (A(x)    \rightarrow B)\rightarrow (\forall xA(x)    \rightarrow B)$ generated by $A(\tau_x A(x))\rightarrow A(\varepsilon_x (A(x)    \rightarrow B))    \rightarrow  B)$ and
    \begin{center}
\AxiomC{$C    \rightarrow D$}
\UnaryInfC{$D    \rightarrow B\rightarrow C    \rightarrow B$}
\DisplayProof
\end{center}
    \item $\forall x A(x)    \rightarrow B\rightarrow \exists x(A(x)    \rightarrow B)$ generated by $A(\tau_x (A(x))    \rightarrow B))    \rightarrow A(\varepsilon_x (A(x)    \rightarrow B)    \rightarrow B)$.

    \item $(A\rightarrow \exists x B(x))\rightarrow \exists x(A    \rightarrow B(x))$ generated by $(A    \rightarrow B(\varepsilon_x (B(x))    \rightarrow (A    \rightarrow B(\varepsilon_x (A    \rightarrow B(x)))$.
    \item $\exists x (A    \rightarrow B(x))\rightarrow A    \rightarrow \exists B(x)$ generated by $A    \rightarrow B(\varepsilon_x (A    \rightarrow B(x))    \rightarrow A    \rightarrow B(\varepsilon_x (B(x))$ and 
\begin{center}
\AxiomC{$B    \rightarrow C$}
\UnaryInfC{$(A    \rightarrow B)\rightarrow (A    \rightarrow C)$}
\DisplayProof
\end{center}
    \item   $\wedge, \vee$ similar, non similar when $\neg A$ is considered as $A    \rightarrow \bot$.
\end{enumerate}

 Although Hilbert's epsilon calculus is inherently classical, one may ask what consequences the epsilon formalism has in non-classical logics, i.e., the addition of critical formulas to the propositional axioms of these non-classical logics. From the arguments above it can only be conservative for logics which admit all classically valid quantifier shifts (see the example above). Note, however, that contrary to an argument in Bell\cite{Bell2006},
the propositional structure is never changed as we might simply delete all term-position in the proof and the critical formulas become propositional formulas of the form $A \limp A$.\\

    The first epsilon theorem induces in addition the existence of propositional valid formulas of the form $ A_1 \rightarrow A_2 \vee A_2 \rightarrow A_3 \dots A_{n-1} \rightarrow A_n $, consequently, in intermediate logics the first epsilon theorem is only valid for finite-valued G\"odel logics in Baaz, Zach \cite{MatthiasZach}.\\

%By translating proofs into epsilon calculus via substituting characteristic variables, we observe that the epsilon terms replacing those characteristic variables might occur in the conclusion of those inference rules. Therefore, we will weaken the characteristic variable conditions and this is the core of the article.

When proofs are translated into the epsilon calculus, characteristic variables are replaced by epsilon terms. These epsilon terms may subsequently occur outside the scope of the original quantifier inference, including in the conclusion of the derivation. This observation motivates a relaxation of the traditional characteristic variables conditions and forms the starting point of the present paper.

\section{Sound Proofs With Possible Unsound Subproofs}\label{SectionQI}

Contemporary mathematics is based on the axiomatic method developed by Hilbert \cite{hilbert1899grundlagen}; A \emph{proof} in a formal system is a finite sequence of formulas
\[
A_1, A_2, \dots, A_n
\]
such that each $A_i$ is either (i) an axiom of the system, or (ii) obtained from preceding formulas $A_j$ ($j<i$) by an application of an inference rule. The last formula $A_n$ is the \emph{result} or \emph{theorem} of the proof.

Although this concept is rather recent, the alternatives have been completely forgotten. An alternative is, e.g., the méthode de descente infinie\footnote{Natural deduction can be also considered as a non-hilberian proof system, however, an easy transcription transforms it into a hilbert-style proof system.} of Fermat cf. Edwards\cite{edwards1977fermat}. In the méthode de descente infinie potential counterexamples are transformed in smaller counterexamples, according to a certain order and, separately, this order is shown to be well-founded (The méthode de descente infinie is nowadays partially reflected in the concept of circular proofs %\textbf{[cite]}
).
The axiomatic method induces, however, an asymmetry: whereas soundness is demonstrated by simply induction completeness based on more involved arguments.

In this article, we aim for a more basic representation of this phenomenon using proof-theoretic calculi. We concentrate on first-order logic. The key observation is that characteristic variables in epsilon calculus substituted by epsilon terms are allowed to occur in the conclusion. We concentrate on cut-free proofs measured by size (number of symbols). Recall that a function on the natural numbers is \textit{elementary} if it can be defined by a quantifier-free formula from $+$, $\cdot$, and the function $x\mapsto 2^x$. The main results deal with non-elementary differences between different concepts of cut-free proofs. %\textbf{A non-elementary function is a function not bound by any elementary function.}

Consider the following argument:
\begin{enumerate}
\item \label{Int1} That Kurt G\"odel is Austrian entails that Kurt G\"odel is Austrian.
\item \label{Int2} Hence, that Kurt G\"odel is Austrian entails that everyone is Austrian.
\item \label{Int3} That is, if Kurt G\"odel is Austrian, then all people are Austrian.
\item \label{Int4} Therefore, there exists a person such that, if that person is Austrian, then all people are Austrian.
\end{enumerate}
This argument is globally sound but locally unsound and can be formalized as an instance of the following proof schema:
\begin{equation}  \label{ProofIntro}
\infer[]{\vdash \exists x\,\big( A(x) \to \forall y\, A(y)\big)}
{
	\infer[]{\vdash A(a) \to \forall y\, A(y)}
	{
		\infer[]{A(a) \vdash \forall y\, A(y)}{A(a) \vdash A(a)}
	}
}
\end{equation}
%\subsection{Quantifier Inferences}

We consider derivations in sequent calculi. Sequents are expressions of the form $\Gamma \vdash \Delta$, where $\Gamma$ and $\Delta$ represent collections of formulae. The sequent $\Gamma \vdash \Delta$ is usually interpreted as `if all of $\Gamma$ hold, then at least one of the formulae in $\Delta$ holds.' Equivalently, if $\Gamma = A_0, A_1, \hdots, A_n$ and $\Delta = B_0, B_1, \hdots, B_m$, then $\Gamma \vdash \Delta$ is interpreted as: 
\begin{equation}\label{seqdef}
\bigvee_{i\leq n} \lnot A_i \vee \bigvee_{i\leq m} B_i
\end{equation}

Inferences are expressions of the form 
\begin{equation*}
\infer[]{\Gamma' \vdash \Delta'}{\Gamma \vdash \Delta}
\end{equation*}
which are to be interpreted as `the sequent $\Gamma' \vdash \Delta'$ follows from the sequent $\Gamma \vdash \Delta$.' Derivations are trees whose leaves are axioms and whose non-leaf nodes are sequents obtained from their predecessors by inferences; they are to be interpreted as \textit{proofs} of the root sequent. In fact, we frequently refer to derivations as `proofs.'

If a formula is changed by an inference, we say that the formula is an \textit{auxiliary} formula of the inference. If so, then the resulting formula in the conclusion is called the \textit{critical} formula. An inference is a \textit{quantifier inference} if the (only) critical formula has a quantifier as its outermost logical symbol. Normally, quantifier inferences consist of substituting a formula\footnote{We will frequently use $Q$ to denote any unspecified quantifier.} $Qx\, A(x)$ for an instance thereof in a sequent, e.g., as in 
\begin{equation*}
\infer[]{\Gamma \vdash \Delta, \forall x\, A(x)}{\Gamma \vdash \Delta,  A(a)}
\end{equation*}
The \textit{polarity} of a quantifier in a sequent $\Gamma \vdash \Delta$ is defined as follows: rewrite $\Gamma \vdash \Delta$ in the form \eqref{seqdef} and then rewrite the resulting expression without implication symbols. A quantifier is \textit{positive} if it is universal (resp. existential) and under the scope of an even (resp. odd) number of negation symbols, and \textit{negative} otherwise.
%a quantifier is \textit{positive} in a formula rewritten without implication symbols if it is on the right-hand side (resp. left-hand side) of a sequent and under the scope of an even (resp. odd) number of negation symbols, and \textit{negative} otherwise. 
A quantifier is \textit{strong} if it is positive and on the right-hand side of a sequent or negative and on the left-hand side of a sequent; it is \textit{weak} otherwise. 

If an inference yields a strongly-quantified formula $Qx\, A(x)$ from $A(a)$, where $a$ is a free variable, we say that $a$ is the \textit{characteristic variable} of the inference. We will denote free variables by letters $a, b, c, ...$ (in contrast, we denote bound variables by $x, y, z, ...$). We denote closed terms by $t, s, r, ...$ and variants thereof.

Let $\pi$ be any derivation. We say $b$ is a \textit{side-variable} of $a$ in $\pi$ (written $a <_\pi b$) if $\pi$ contains a strong-quantifier inference of the form:
\begin{equation*}
\infer[]{\Gamma \vdash \Delta, \forall x\, A(x, b, \vec c)}{\Gamma \vdash \Delta,  A(a, b, \vec c)}
\end{equation*}
or of the form:
\begin{equation*}
\infer[]{\exists x\, A(x, b, \vec c), \Gamma \vdash \Delta}{A(a, b, \vec c), \Gamma \vdash \Delta}
\end{equation*}

\begin{defi}[Suitable quantifier inference]
We say a quantifier inference is \emph{suitable for a proof $\pi$} if either it is a weak-quantifier inference, or the following three conditions are satisfied:
\begin{itemize}
\item (substitutability) the characteristic variable does not appear in the conclusion of $\pi$.
\item (side-variable condition) the relation $<_\pi$ is acyclic.
\item (weak regularity) the characteristic variable is not the characteristic variable of another strong-quantifier inference in $\pi$.
\end{itemize}
\end{defi}

We will work with various sequent calculi extending \textit{Propositional $\mathrm{LK}$}, the sequent calculus whose only axioms are $\bot \vdash$ and $A \vdash A$ (with $A$ atomic) and whose rules are:
\begin{align*}
\text{Structural rules:}\\
	\infer[WL]{\Gamma, A \vdash \Delta}{\Gamma \vdash \Delta} && 
	\infer[WR]{\Gamma \vdash \Delta, A}{\Gamma \vdash \Delta} \\
	\infer[CL]{\Gamma, A \vdash \Delta}{\Gamma, A, A \vdash \Delta} && 
	\infer[CR]{\Gamma \vdash \Delta, A}{\Gamma \vdash \Delta, A, A} \\	
	\infer[EL]{\Gamma_1, B, A, \Gamma_2 \vdash \Delta}{\Gamma_1, A, B, \Gamma_2 \vdash \Delta} && 
	\infer[ER]{\Gamma \vdash \Delta_1, B, A, \Delta_2}{\Gamma \vdash \Delta_1, A, B, \Delta_2} \\
	\infer[Cut]{\Gamma \vdash \Delta}{\Gamma \vdash \Delta, A & A, \Gamma \vdash \Delta} &&\\
\end{align*}
\begin{align*}
\text{Logical rules:}\\
	\infer[\wedge_1L]{\Gamma, A\wedge B \vdash \Delta}{\Gamma, A \vdash \Delta} && 
	\infer[\vee_1R]{\Gamma \vdash \Delta, A\vee B}{\Gamma \vdash \Delta, A} \\
	\infer[\wedge_2L]{\Gamma, A\wedge B \vdash \Delta}{\Gamma, B \vdash \Delta} && 
	\infer[\vee_2R]{\Gamma \vdash \Delta, A\vee B}{\Gamma \vdash \Delta, B} \\
	\infer[\vee L]{\Gamma, A\vee B \vdash \Delta}{\Gamma, A \vdash \Delta & \Gamma, B \vdash \Delta} && 
	\infer[\wedge R]{\Gamma \vdash \Delta, A\wedge B}{\Gamma \vdash \Delta, A & \Gamma \vdash \Delta, B} \\
	\infer[\to L]{\Gamma, A\to B \vdash \Delta}{\Gamma \vdash \Delta, A & \Gamma, B \vdash \Delta} && 
	\infer[\to R]{\Gamma \vdash \Delta, A\to B}{\Gamma, A \vdash \Delta, B}
\end{align*}
Negation $\lnot A$ is defined as $A \to \bot$. The names of the structural rules stand, respectively, for `weakening,' `contraction,' and `exchange.' First-order $ \LK$ is the extension of Propositional LK obtained by adding quantifier inferences: 
\begin{align*}
	\infer[\forall L]{\Gamma, \forall x\, A(x) \vdash \Delta}{\Gamma, A(t) \vdash \Delta} && 
	\infer[\exists R]{\Gamma \vdash \Delta, \exists x\, A(x)}{\Gamma \vdash \Delta, A(t)} \\
	\infer[\exists L]{\Gamma, \exists x\, A(x) \vdash \Delta}{\Gamma, A(a) \vdash \Delta} && 
	\infer[\forall R]{\Gamma \vdash \Delta, \forall x\, A(x)}{\Gamma \vdash \Delta, A(a)}
\end{align*}
with the only restriction that in the inferences $\forall R$ and $\exists L$, the variable $a$ is not allowed to appear in the conclusion. In particular, note that every quantifier inference is suitable for every regular $ \LK$-proof. Gentzen's famous \textit{Cut-Elimination Theorem} states that the cut rule is redundant in both propositional and first-order $ \LK$.

\begin{defi}[  $ \LKp$]
The calculus   $ \LKp$ is defined like $ \LK$, except that we instead allow all weak and strong quantifier inferences 
with the proviso that they be suitable for the proof.
\end{defi}

A further weakening of the characteristic-variable conditions gives rise to the notion of weak suitability:
\begin{defi}[Weakly suitable quantifier inference]
A quantifier inference is \textit{weakly suitable for a proof $\pi$} if either it is a weak-quantifier inference or it satisfies substitutability, the side-variable condition, and:
\begin{itemize}
\item (very weak regularity) whenever the characteristic variable is also the characteristic variable of another strong-quantifier inference in $\pi$, then it has the same critical formula.
\end{itemize}
\end{defi}

\begin{defi}[    $ \LKpp$]
The calculus     $ \LKpp$ is the extension of   $ \LKp$ that results from allowing all weakly suitable quantifier inferences.
\end{defi}

\begin{remark}
    Note that any $ \LK$ proof is already $\LKp$ and $\LKpp$ proof, therefore compound axioms can be easily cutfree derived from atomic axioms with linear increase of size.
\end{remark}

It is easy to find examples of sequents that are more easily provable in   $ \LKp$ than in $ \LK$:

\begin{exm} \label{example25}
The sequent 
$$\forall x\, A(x) \rightarrow B \vdash \exists x\, (A(x) \rightarrow B)$$ is cut-free provable in $ \LK$:
\begin{prooftree}
\def\fCenter{\mbox{\ $\vdash$\ }}
\Axiom$A(a) \fCenter A(a)$
\UnaryInf$A(a) \fCenter A(a), B$
\UnaryInf$\fCenter A(a), A(a) \rightarrow B$
\UnaryInf$\fCenter A(a), \exists x\, (A(x) \rightarrow B)$
\UnaryInf$\fCenter \exists x\, (A(x) \rightarrow B), A(a)$
\UnaryInf$\fCenter \exists x\, (A(x) \rightarrow B), \forall x\, A(x)$
\Axiom$B \fCenter B$
\BinaryInf$\forall x\, A(x) \rightarrow B \fCenter \exists x\, (A(x) \rightarrow B), B$
\UnaryInf$\forall x\, A(x) \rightarrow B, A(b) \fCenter \exists x\, (A(x) \rightarrow B), B$
\UnaryInf$\forall x\, A(x) \rightarrow B \fCenter \exists x\, (A(x) \rightarrow B), A(b) \rightarrow B$
\UnaryInf$\forall x\, A(x) \rightarrow B \fCenter \exists x\, (A(x) \rightarrow B), \exists x\, (A(x) \rightarrow B)$
\UnaryInf$\forall x\, A(x) \rightarrow B \fCenter \exists x\, (A(x) \rightarrow B)$
\end{prooftree}
However, one can find a shorter   $ \LKp$-proof:
\begin{prooftree}
\def\fCenter{\mbox{\ $\vdash$\ }}
\Axiom$A(a) \fCenter A(a)$
\UnaryInf$A(a) \fCenter \forall x\, A(x)$
\Axiom$B \fCenter B$
\BinaryInf$A(a), \forall x\, A(x) \rightarrow B \fCenter B$
\UnaryInf$\forall x\, A(x) \rightarrow B, A(a) \fCenter B$
\UnaryInf$\forall x\, A(x) \rightarrow B \fCenter A(a) \rightarrow B$
\UnaryInf$\forall x\, A(x) \rightarrow B \fCenter \exists x\, (A(x) \rightarrow B)$
\end{prooftree}
\qed
\end{exm}

There is an astonishing asymmetry between the proofs for soundness and completeness in the usual proof systems: whereas soundness in Hilbert-style proof systems is simply demonstrated by induction on the length of proofs, completeness is a different matter. Here induction is not sufficient.

\begin{theorem} \label{TheoremCorrectness}
If a sequent is     $ \LKpp$-derivable, then it is already $ \LK$-derivable.
\end{theorem}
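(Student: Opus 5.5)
We argue semantically, via Skolemization, since the local unsoundness of $\LKpp$-subderivations rules out a direct induction on the proof. By completeness of $\LK$, it suffices to show that the end-sequent $S$ of an $\LKpp$-proof $\pi$ is valid. Equivalently, we show that every structure $\mathcal M$ assigning values to the free variables of $S$ satisfies $S$. We extend $\mathcal M$ by interpreting the characteristic variables of $\pi$ by suitable witnesses and then show that \emph{every} sequent of $\pi$ is true in the extended interpretation. This is the Hilbert--Bernays translation of characteristic variables into $\varepsilon$-terms from the introduction, done semantically.

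\textbf{Step 1 (interpreting characteristic variables).} By very weak regularity, each characteristic variable $a$ has a unique critical formula $Q x\,A(x,\vec b)$, where $\vec b$ lists the remaining free variables. Let $\mathcal M$ be given, together with a choice function on the domain. By the side-variable condition, $<_\pi$ is acyclic, so we can process the characteristic variables in an order compatible with $<_\pi$. We assign to $a$ the value
\[
a := \varepsilon_x \neg A(x,\vec b) \quad\text{for } \forall R, \qquad a := \varepsilon_x A(x,\vec b) \quad\text{for } \exists L,
\]
computed once the values of $\vec b$ are fixed. Concretely, $a$ is a counterexample to $A$ if one exists and an arbitrary element otherwise, respectively a witness. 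Free variables that are neither characteristic variables nor in $S$ receive arbitrary values. Acyclicity makes this a well-founded simultaneous definition; this is exactly where the side-variable condition is used. Substitutability guarantees that no characteristic variable occurs in $S$, so the values of the free variables of $S$ are those given by $\mathcal M$.

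\textbf{Step 2 (local soundness under this interpretation).} We show by induction on $\pi$ that every sequent is true in the extended interpretation $\mathcal M^*$. Axioms, propositional rules, structural rules and weak quantifier inferences are sound under any fixed assignment. For a $\forall R$ inference from $\Gamma\vdash\Delta,A(a,\vec b)$ to $\Gamma\vdash\Delta,\forall x A(x,\vec b)$, suppose the premise is true. If some formula in $\Gamma\vdash\Delta$ already makes it true, so does the conclusion. Otherwise $A(a,\vec b)$ holds, and since $a$ was chosen as a counterexample whenever one exists, $\forall x\,A(x,\vec b)$ holds. The case $\exists L$ is dual. Here very weak regularity is essential: one value of $a$ must serve every inference with characteristic variable $a$, and it does because all of them share the same critical formula. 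Hence $S$ is true in $\mathcal M^*$, and therefore in $\mathcal M$.

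\textbf{Main obstacle.} The delicate point is Step 1. The critical formula of $a$ may contain side variables $\vec b$ that are themselves characteristic variables, so the witness for $a$ depends on values not yet fixed. Acyclicity of $<_\pi$ is what makes the recursion well-founded, and it must be checked that the ordering used matches the direction of $<_\pi$ as the paper defines it. A subtler issue is that $\vec b$ might contain variables that are not side variables in the paper's syntactic sense, for instance variables introduced by a $\forall L$ instance term. I would handle these by treating every free variable of the critical formula as a side variable, which is how I read $A(a,b,\vec c)$ in the definition. An alternative route avoids semantics by replacing each characteristic variable with an $\varepsilon$-term as in the introduction, turning $\pi$ into an $\varepsilon$-calculus proof, and then appealing to the (extended) first epsilon theorem to recover an $\LK$-derivation. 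The semantic argument is simpler, and I would use it.
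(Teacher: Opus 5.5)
Your proof is correct, but it takes a different route from the paper's.

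\textbf{The paper's route.} The paper argues syntactically. Each strong quantifier inference with characteristic variable $a$ is replaced by an $\to L$ inference against the axiom $\forall x\,A(x)\vdash\forall x\,A(x)$ (dually $\exists x\,A(x)\vdash\exists x\,A(x)$). As a result, the end-sequent only acquires extra antecedent formulas $A(a)\to\forall x\,A(x)$, resp.\ $\exists x\,A(x)\to A(a)$. These are then discharged by an $\exists L$ inference on $a$, followed by a cut with the $\LK$-provable sequents $\vdash\exists y\,(A(y)\to\forall x\,A(x))$, resp.\ $\vdash\exists y\,(\exists x\,A(x)\to A(y))$. The three conditions are exactly what make those eigenvariable inferences legal: substitutability, regularity (copies belonging to the same $a$ coincide and can be contracted), and acyclicity of $<_\pi$ (which gives an admissible order for the $\exists L$ steps).

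\textbf{How the two relate.} Your choice of $a$ as a counterexample, resp.\ witness, is precisely the semantic content of these two cut formulas. You also use the three conditions at the corresponding points, so the two proofs are mirror images. Your worry about the direction of $<_\pi$ is moot: you only need a linear extension of its converse, and that exists by acyclicity.

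\textbf{What each buys.} Your version is shorter and makes the role of each condition very transparent. The paper's version gives two things yours does not.
\begin{itemize}
\item It avoids completeness and produces an explicit $\LK$-proof with cuts whose size is polynomially bounded in that of $\pi$. This is what underlies the paper's later remark that the three calculi with cuts simulate each other cheaply.
\item It isolates the drinker-type quantifier-shift principles as the only non-$\LK$ content. That is exactly what is reused in the intermediate-logic section: there no classical choice semantics is available, and soundness of $\LJpp$ is recovered by adding those principles to the antecedent.
\end{itemize}

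Your fallback via the first epsilon theorem would not suffice if $S$ contains strong quantifiers; that case needs the second epsilon theorem. Since you do not rely on it, this does not affect your proof.
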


\begin{proof}
    
Let $\pi$ be an     $ \LKpp$-proof. Replace every unsound universal quantifier inference by a $\to L$ inference:
\begin{prooftree}
\def\fCenter{\mbox{\ $\vdash$\ }}
\Axiom$\Gamma \fCenter \Delta, A(a)$
\Axiom$\forall x\, A(x) \fCenter \forall x\, A(x)$
\BinaryInf$\Gamma, A(a) \to \forall x\, A(x) \fCenter \Delta, \forall x\, A(x)$
\end{prooftree}
Similarly replace every unsound existential quantifier by an $\to L$ inference
\begin{prooftree}
\def\fCenter{\mbox{\ $\vdash$\ }}
\Axiom$\exists x\, A(x) \fCenter \exists x\, A(x)$
\Axiom$A(a), \Gamma \fCenter \Delta$
\BinaryInf$\Gamma, \exists x\, A(x), \exists x\, A(x) \to A(a) \fCenter \Delta$
\end{prooftree}

By doing this, we obtain a proof of the desired sequent, together with many formulae of the form $A(a) \to \forall x\, A(x)$ or $\exists x\, A(x) \to A(a)$ on the left-hand side. However, we can eliminate each of them by adding an existential quantifier inference and cutting with formulae of the form
$$\vdash \exists y\, \big(A(y) \to \forall x\, A(x)\big)$$
or of the form
$$\vdash \exists y\, \big(\exists x\, A(x) \to A(y)\big),$$
both of which are easily derivable. Note that the existential quantifier inferences can be carried out in a way that is permissible by $ \LK$ because the initial proof satisfied substitutability, weak regularity and the side-variable condition (recall that an $\LKpp $ derivation is already an $\LKp$ derivation).\qed
\end{proof}

\begin{cor}
If a sequent is derivable in   $ \LKp$, then it is already derivable in $ \LK$.
\end{cor}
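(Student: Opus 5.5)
This corollary follows from Theorem~\ref{TheoremCorrectness} once we check that every $\LKp$-derivation is already an $\LKpp$-derivation. The plan is therefore a short inclusion argument followed by a direct appeal to the theorem.

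First, we show that every quantifier inference suitable for a proof $\pi$ is also weakly suitable for $\pi$. Weak-quantifier inferences are both suitable and weakly suitable by definition, so only strong-quantifier inferences need checking. Substitutability and the side-variable condition appear verbatim in both definitions. It remains to see that weak regularity implies very weak regularity. If the characteristic variable of the inference is never the characteristic variable of another strong-quantifier inference in $\pi$, then the hypothesis of very weak regularity never applies. Hence very weak regularity holds vacuously.

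Second, we conclude that any $\LKp$-derivation $\pi$ of a sequent $S$ is an $\LKpp$-derivation of $S$. The propositional and structural rules of the two calculi coincide, and every quantifier inference in $\pi$ is suitable for $\pi$, hence weakly suitable for $\pi$. Both notions refer to the same proof $\pi$, so no re-indexing of variables is needed. Theorem~\ref{TheoremCorrectness} then gives that $S$ is $\LK$-derivable.

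No step here is a real obstacle. The only point needing care is that suitability is a global property of the whole proof, not of individual inferences. The definition of $\LKpp$ as an extension of $\LKp$ is phrased relative to the same proof $\pi$, so the inclusion of calculi is immediate. (The proof of Theorem~\ref{TheoremCorrectness} itself relies on the regularity reasoning for $\LKp$, but the corollary only uses the theorem's statement.)
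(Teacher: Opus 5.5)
Your proof is correct and matches the paper's argument. The paper states the corollary without proof because $\LKpp$ is defined as an extension of $\LKp$, so Theorem~\ref{TheoremCorrectness} applies directly. Your check that weak regularity vacuously implies very weak regularity simply makes that inclusion explicit.
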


%\textbf{add examples for each condition}

Each of the three conditions for suitable quantifier inferences have to be fulfilled for $\LKpp$ and first and second condition for $\LKp$.

\begin{enumerate}
\item
(substitutability is not fulfilled) \ \ \ \ \qquad
\AxiomC{$A(a)\vdash A(a)$}
\UnaryInfC{$A(a)\vdash \forall x A(x)$}
\DisplayProof
incorrect result
\item
(side-variable condition is not fulfilled)
\begin{center}
\AxiomC{$A(a,b)\vdash A(a,b)$}
\RightLabel{$*_1$}
\UnaryInfC{$\exists y A(a,y)\vdash A(a, b)$}
\RightLabel{$*_2$}
\UnaryInfC{$\exists y A(a,y)\vdash \forall x A(x, b)$}
\UnaryInfC{$\forall x \exists y A(x,y)\vdash \forall x A(x, b)$}
\UnaryInfC{$\forall x \exists y A(x,y)\vdash \exists y \forall x A(x, y)$}
\DisplayProof
\end{center}
Note that in $*_1$ and $*_2$ we have $b <_\pi a$ and $a <_\pi b$, respectively, incorrect result. %$x$ is a side-variable of $y$ and in $*_2$, $y$ is a side-variable of $x$. Therefore, $x <_\pi y$ and $y <_\pi x$.
\item
(very weak regularity is not fulfilled)
\begin{center}
\AxiomC{$A(a)\vdash A(a)$}
\UnaryInfC{$\exists x A(x)\vdash A(a)$}
\AxiomC{$A(a)\vdash A(a)$}
\UnaryInfC{$A(a)\vdash \forall x A(x)$}
\BinaryInfC{$\exists x A(x)\vdash \forall x A(x)$}
\DisplayProof
\end{center}
incorrect result.
\end{enumerate}

\begin{proposition}%[proof-theoretic observations]
\label{propcontr}

Contrary to $ \LK$  in cut-free proofs in $\LKpp$ weakenings can be moved to the top-most part of the proof and  all contractions can be postponed to the lowest part of the proof with minimal increase of the size. 
\end{proposition}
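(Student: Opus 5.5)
We argue by local permutations of inferences inside a cut-free $\LKpp$-proof $\pi$, using the fact that the eigenvariable conditions in $\LKpp$ are \emph{global}: a strong quantifier inference only needs substitutability with respect to the end-sequent of $\pi$, the acyclicity of $<_\pi$, and very weak regularity. None of these refers to the context $\Gamma,\Delta$ of the premise. This is what fails in $\LK$. There, moving a weakening of a formula containing $a$ above a $\forall R$ or $\exists L$ with eigenvariable $a$ destroys the local eigenvariable condition. Duplicating a context in order to postpone a contraction has the same effect.

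\emph{Weakenings.} We proceed by induction on the number of weakening inferences that are not directly below axioms. Take an uppermost such weakening, introducing $A$, and permute it upward over the inference $\rho$ directly above it. If $\rho$ is unary (structural, $\wedge_iL$, $\vee_iR$, $\to R$, or a quantifier rule), we simply interchange the two inferences. If $\rho$ is binary ($\vee L$, $\wedge R$, $\to L$), we push the weakening into one premise. In the sequent calculus as given, the contexts of binary rules are shared, so we must weaken both premises. This is the only source of size increase, and it is at most linear in the proof height times $|A|$. To get \emph{minimal} increase, it is better to use the variant where one weakens only one premise, since $\Gamma$ must agree in both premises anyway. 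I would phrase it so: weakenings are moved to directly below axioms, producing axioms of the form $\Gamma, A\vdash A,\Delta$. The one thing to check is that after permutation every strong quantifier inference is still weakly suitable. The end-sequent is unchanged, the set of strong inferences with their critical formulas is unchanged, and $<_\pi$ is unchanged. Hence suitability is preserved, even though $A$ may now contain the characteristic variable inside the context of the quantifier inference.

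\emph{Contractions.} Symmetrically, we take a lowermost contraction that still has non-contraction inferences below it, and permute it downward. Below a unary logical or quantifier inference with a different principal formula, the contraction just commutes. Below a binary inference, the contracted formula appears in the shared context, so the contraction moves to the conclusion unchanged. The critical case is when the contracted formula $A$ is itself auxiliary in the inference below. In that case we do not contract; instead we let the two copies each be processed. The lower inference is applied twice, once to each copy, and the resulting two critical formulas are contracted later. For a binary rule, this means duplicating the inference but not the subproofs, by applying the rule to one copy, then to the other with the other premise reused. Since the contexts are shared and the premise proof must be copied, I would avoid copying by working in the multiplicative context-splitting variant of the rules. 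For quantifier inferences, duplication yields two strong inferences with the same characteristic variable and the same critical formula. This is exactly what very weak regularity permits, and it is why the statement is for $\LKpp$ rather than $\LKp$. The relation $<_\pi$ and the end-sequent are again unchanged.

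\emph{Main obstacle.} The delicate step is the size bound in the contraction case for binary rules with shared contexts, together with the pair where $\to L$ or $\wedge R$ acts on both contracted copies. I expect to handle this by translating to the additive/multiplicative variant in which weakenings sit at the top. With that variant, each contraction permutation only duplicates single inferences, never subproofs, which yields at most a linear increase. Termination follows from a measure counting, for each contraction, the number of inferences below it.
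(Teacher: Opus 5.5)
Your two structural observations are the whole content of the paper's two-line proof. First, the conditions on strong inferences in $\LKpp$ are global: substitutability with respect to the end-sequent, acyclicity of $<_\pi$, and very weak regularity. So a side formula containing a characteristic variable can never block an inference, and weakenings can move up. Second, very weak regularity lets a strong inference be applied twice with the same characteristic variable and the same critical formula, so a contraction can pass below the strong inference acting on the contracted formula. This part of your proposal is correct and matches the paper, whose proof does not go beyond these two points.

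The gap is in the size accounting you add on top of this.

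\emph{Binary rules force copying.} When the contracted formula is auxiliary in a binary rule below it ($\wedge R$, $\to L$, $\vee L$), each of the two applications of the rule needs its own derivation of the sibling premise. A tree-shaped proof cannot ``reuse'' a premise. The multiplicative variant does not help either, since each application of multiplicative $\wedge R$ consumes its own proof of $\Gamma_2\vdash\Delta_2,B$. So whole sibling subproofs get copied, and the copies contain further contractions that must then be pushed down too. Your claim that only single inferences are duplicated, with at most linear growth, is therefore false. For the same reason your termination measure has to cope with contractions multiplying.

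\emph{The blow-up is intrinsic.} It already occurs propositionally, and identically in $\LK$. Let $F=\bigwedge_{i=1}^n(p_i\vee\neg p_i)$. Then $\vdash F$ has a polynomial proof with a contraction directly below each pair $\vee_1R,\vee_2R$. Now suppose all contractions are at the bottom, so the contraction-free part proves $\vdash F,\dots,F$. Along a branch, each copy of $F$ reaches at most one conjunct, and each conjunct occurrence yields only one of $p_i,\neg p_i$. Hence every axiom $p_i\vdash p_i$ needs two copies routed to the same conjunct. The $\wedge R$-skeleton is then a decision tree that finds a pigeonhole collision, and a Prover--Delayer argument shows it has at least $2^{n-1}$ leaves.

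\emph{Smaller slips.} Pushing a weakening up adds the formula to every sequent of the subproof above it, not just along one path. The cost is therefore governed by the size of that subproof, not its height. With shared contexts you cannot weaken only one premise. Conversely, moving a contraction of a context formula below a binary rule requires first weakening the other premise.

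What your argument does establish, and what the paper's proof addresses, is this: all the permutations are admissible in $\LKpp$, unlike in $\LK$, and each duplicated quantifier inference costs only one inference. The propositional part of the cost is the same as in $\LK$ and is not polynomial in general.
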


\begin{proof}
    characteristic variables in $\LKpp$ cannot be blocked by an occurrence outside the scope. Strong quantifiers may be inferred for the same formula from the same characteristic variable (very weak regularity).\qed
\end{proof}

\begin{definition}
    A Herbrand expansion of a sequent with weak quantifiers only is a valid replacement inside-out of all subformulas $\forall x  A(x) $ by some $\bigwedge A(t_i)$ and all subformulas $\exists x  B(x) $ by some $\bigvee B(t_i)$. 
\end{definition}

\begin{proposition}(cf. \cite{DBaazLeitsch} for the prenex case)
    The extraction of a Herbrand expansion from a cut-free proof with weak quantifiers only is elementary.
\end{proposition}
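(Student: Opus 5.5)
The plan is to read off the Herbrand expansion directly from the cut-free proof $\pi$ of $\Gamma \vdash \Delta$, where the end-sequent contains only weak quantifiers. First observe that, by the subformula property of cut-free proofs, every quantifier inference in $\pi$ is a weak one ($\forall L$ or $\exists R$). Consequently no eigenvariable conditions play a role, and the terms introduced are arbitrary. The idea is to replace every quantified formula occurring in $\pi$ by a finite conjunction (for $\forall$ on the left) or disjunction (for $\exists$ on the right) of the instances that are actually used in the proof. We trace these instances along the ancestor relation of formula occurrences.

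\textbf{Step 1: collect instances per occurrence.} For each occurrence of a formula $\forall x\,A(x)$ (resp.\ $\exists x\,B(x)$) in the end-sequent, consider all its ancestors in $\pi$. Ancestors can arise through contractions, through propositional rules (which preserve the subformula occurrence inside a larger formula), and finally through quantifier inferences introducing an instance $A(t)$. Define $T$ to be the set of all terms $t$ used at these quantifier inferences. Nested quantifiers are handled inside-out. The instance $A(t)$ may itself contain weak quantifier subformulas, and their ancestors are in turn collected, separately for each $t$. This yields a tree of instance sets matching the definition of Herbrand expansion.

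\textbf{Step 2: transform $\pi$ into a propositional proof.} Proceed by induction over $\pi$, replacing each ancestor occurrence of $\forall x\,A(x)$ by $\bigwedge_{t\in T} A^*(t)$, where $A^*$ is the expansion of the instance. Correspondingly, each $\exists x\,B(x)$ is replaced by $\bigvee_{t\in T} B^*(t)$. Each rule is then adjusted as follows:
\begin{itemize}
\item A quantifier inference from $A(t)$ becomes a chain of $\wedge L$ (resp.\ $\vee R$) inferences that select the conjunct $A^*(t)$.
\item Contractions remain contractions, since both contracted occurrences receive the same big conjunction.
\item Weakenings introduce the expanded formula.
\item Propositional rules go through unchanged.
\end{itemize}
The result is a cut-free propositional $\LK$-proof of the expanded end-sequent. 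By soundness, this proof witnesses that the replacement is valid, i.e.\ that it is a Herbrand expansion.

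\textbf{Step 3: the size bound, which is the main obstacle.} In the prenex case of \cite{DBaazLeitsch} the number of instances is bounded by the number of quantifier inferences, so the bound is immediate. In the non-prenex case, nested expansion could in principle multiply sizes at each nesting level. I would bound the total number of terms by the number of quantifier inferences in $\pi$, which is at most $|\pi|$, and bound each term's size by $|\pi|$ as well. The nesting depth is bounded by the logical complexity of the end-sequent, which is also at most $|\pi|$. Since every instance at each level corresponds to a distinct quantifier inference, the expanded sequent has size at most roughly $|\pi|\cdot|\pi|^{d}$ with $d\le|\pi|$. This gives a bound of at most exponential, and hence elementary, size. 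The same estimate, multiplied by the number of sequents, bounds the rewritten proof.

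The delicate point is to ensure that the inside-out instance sets are well defined when the same subformula occurrence has ancestors with different outer instantiations. I would handle this by indexing the collected sets by the path of outer terms, rather than taking a global union. If that proves cumbersome, the fallback is to take the global union at each level. This uses possibly more instances than necessary, but keeps the same elementary bound.
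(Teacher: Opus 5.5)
Your proof is correct, but it is organized differently from the paper's. As the paper sketches it, the transformation is local and top-down. A quantified formula is represented by the instances it was inferred from, and each contraction of $\forall x\,A(x)$ (resp.\ $\exists x\,A(x)$) is replaced by the conjunction (disjunction) of the two already-expanded copies. Since each such step at most doubles the size, the bound is exponential, hence elementary.

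You instead do the bookkeeping globally and in advance. Using the ancestor relation, you compute for each end-sequent occurrence the instance sets indexed by subformula position and path of outer terms, and then rewrite every occurrence uniformly. Contractions then stay contractions, because both contracted occurrences have the same position and path and so literally the same expansion. The contexts of binary rules match for the same reason. Only the quantifier inferences change, into chains of $\wedge_i L$ / $\vee_i R$ selecting the right conjunct or disjunct.

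What your route buys is that the non-prenex case becomes transparent. There a contracted compound formula may contain quantifiers expanded differently in its two copies; the paper's ``doubling'' remark covers this only implicitly, by revisiting subproofs. Your route also gives a better bound than you claim. With path indexing, each triple (position, path, term) is witnessed by a distinct quantifier inference. Hence every subformula position of the end-sequent gets at most $|\pi|+1$ copies, and the expansion is polynomial in $|\pi|$. Your $|\pi|^{d}$ estimate is only needed for the global-union fallback. The paper's route, for its part, needs no global ancestor analysis and no indexing.

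One small detail to add: if some instance set is empty (the quantified formula stems only from weakenings), pick an arbitrary dummy term so that the conjunction or disjunction is nonempty.
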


\begin{proof}

Contractions of  $\forall x  A(x) $ ($\exists x  A(x) $) are replaced top-down by conjunctions (disjunctions). This at most doubles the size at any step.   \qed 
\end{proof}

\section{The Relation of $\LKp$ and $\LKpp$ Proofs with Respect to Complexity}

In this article, we use size (number of symbols) as complexity measure of proofs as a length or the number of steps is misleading: even a short proof in this sense might encode a larger proof using high term-complexity.

 $ \LK$, $\LKp$, $\LKpp$ proofs can be easily transformed by addition of cuts into a proof of one of the other formats, with linear increase of complexity. This is not trivial for cut-free proofs.

%Lk++ proofs with cuts can be easily transformed into Lk+ proofs with cuts: note that in the soundness proof the only difference betweem Lk+ and Lk++ proofs ois the potentional contraction of the prostponed formulas before applying cut.

\begin{theorem} \label{th.3}
There is a transformation of a cut-free $\LKpp$-proof $\varphi$ of the end-sequent $\Pi \vdash \Gamma$ into a cut-free $\LKp$-proof of $\Pi \vdash \Gamma$ with an elementary bound in the complexity of $\varphi$.
\end{theorem}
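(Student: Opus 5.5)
The plan is to go through the weak-quantifier setting and its Herbrand expansion, so that the elementary bound comes from the Proposition on Herbrand extraction. The only feature separating $\LKpp$ from $\LKp$ is very weak regularity: one characteristic variable $a$ may serve several strong-quantifier inferences, all with the same critical formula $Q x\, A(x,\vec c)$. Since substitutability and the side-variable condition hold in both calculi, and a characteristic variable may already occur outside its inference in $\LKp$, it suffices to rebuild the proof so that every characteristic variable is introduced by exactly one strong inference. Step one applies Proposition~\ref{propcontr} to the cut-free $\LKpp$-proof $\varphi$: weakenings are moved to the top and contractions postponed to the bottom, at minimal size increase. This gives a contraction-normal proof $\varphi'$.

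Step two neutralises the strong inferences, following the proof of Theorem~\ref{TheoremCorrectness} but without cuts. Each strong inference with characteristic variable $a$ becomes an $\to L$ inference that adds $A(a)\to\forall x\,A(x)$ (resp.\ $\exists x\,A(x)\to A(a)$) to the antecedent. By very weak regularity, all formulas produced for the same $a$ are syntactically identical, so they can be contracted into a single occurrence at the bottom. Treating all characteristic variables as constants, the result is a cut-free proof of
\[
\Theta,\ \Pi \vdash \Gamma,
\]
where $\Theta$ contains exactly one such formula per characteristic variable. In this proof all quantifier inferences are weak (after the usual Skolem/Andrews-style treatment of the strong quantifiers of $\Pi\vdash\Gamma$, which are now only introduced from eigenterms). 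The Proposition on Herbrand expansions then gives, elementarily in $|\varphi|$, a Herbrand expansion of this sequent. This is a propositionally valid sequent whose size is elementary in $|\varphi|$.

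Step three rebuilds a cut-free $\LKp$-proof from that expansion. Derive the propositional tautology cut-free; its size is at most exponential in the size of the expansion. Then reintroduce quantifiers bottom-up in the inside-out order given by the expansion, contracting the instance disjunctions and conjunctions. Each element of $\Theta$ is handled by first reversing the $\to L$ step into the original quantifier inference $A(a)\vdash \forall x\,A(x)$ (dually $\exists x\,A(x)\vdash A(a)$). Because $\Theta$ holds a single copy per variable, this happens once per characteristic variable, so weak regularity holds. Substitutability and acyclicity of $<_\pi$ are inherited from $\varphi$, since the side-variable relation among characteristic variables is unchanged. The total blow-up is a bounded tower of exponentials, hence elementary.

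I expect the main obstacle to be step three, specifically the single quantifier inference per variable. The copies of $A(a)$ that were contracted in $\Theta$ sat in different contexts in $\varphi$. After the Herbrand step they must all feed one occurrence of $A(a)$ that is converted into $\forall x\,A(x)$ before the weak quantifiers above it are rebuilt. I would first try to fix this by choosing the reconstruction order to follow $<_\pi$, using its acyclicity, and introducing each characteristic-variable inference at the lowest point where $A(a)$ still occurs unembedded. If that fails, the fallback is to rename the copies apart as $a_1,\dots,a_k$ and duplicate the dependent parts of the expansion. The cost is then controlled by the depth of $<_\pi$, which still yields an elementary bound.
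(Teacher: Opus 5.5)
There is a genuine gap, and it sits exactly where the theorem's content lies.

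\textbf{The false step.} In Step~3 you argue that because $\Theta$ contains a single copy of $A(a)\to\forall x\,A(x)$, it is ``reversed'' into a single strong inference. This is false. One occurrence of a formula in the end-sequent of a cut-free proof can be principal in many $\to L$ inferences: one per branch, and more after contractions. Reversing them just reproduces the several strong inferences with the same characteristic variable $a$ that $\varphi$ already had, so you get an $\LKpp$-proof back.

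\textbf{The Herbrand detour does not help, and is circular.} $\Theta$ and $\Pi\vdash\Gamma$ still contain strong quantifiers; the quantifiers of $A$ occur with both polarities in $A(a)\to\forall x\,A(x)$. So you must Skolemize and later deSkolemize. The deSkolemization described at the end of Section~\ref{skol} relies on very weak regularity and therefore lands in $\LKpp$. DeSkolemization via \cite{Hetzl} only returns a cut-free $\LK$-proof of $\Theta,\Pi\vdash\Gamma$, which you already had after Step~2.

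\textbf{Your first fix fails in general.} The inferences sharing $a$ can lie in different premises of a binary rule, as in both branches of the $\vee$-left inference in Example~\ref{ex.lkpp}. There $\forall x\,A(x)$ may be embedded into compound formulas before the branches meet, so no single occurrence of $A(a)$ can feed all of them without cut. Also, once you rename variables apart, $<_\pi$ is not ``unchanged'', so acyclicity needs a fresh argument.

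\textbf{What is missing.} Everything therefore rests on your fallback, which is the actual content of the paper's proof but which you only name. The missing mechanism is as follows.
\begin{itemize}
\item Process strong inferences innermost first.
\item For one with characteristic variable $a$, extract from the whole proof a proof $\varphi'(a)$ of $\Pi\vdash\Gamma,A(a)$ by deleting the inferences acting on $A(a)$ and its descendants.
\item At each position $i$ where $\forall x\,A(x)$ is inferred from $A(a)$, replace the subproof of the premise by the renamed copy $\varphi'(x_{f_i(\ldots)})$. Apply the strong inference there with the fresh variable, weaken the old context $\Pi_j,\Gamma_j$ back in, and continue.
\item At the bottom, contract the accumulated copies of $\Pi\vdash\Gamma$.
\end{itemize}
The side context of each new strong inference is the end-sequent, which does not contain $a$, so substitutability holds. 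Indexing new variables by terms, and substituting $f_i$ for $a$ inside the indices already created, preserves weak regularity. It also gives acyclicity, since $x_f<x_g$ only if $g$ is a proper subterm of $f$. Leftover non-characteristic occurrences of $a$ are replaced by a constant. The blow-up is exponential in the number of variables relative to the height, which is what gives the elementary bound. Without this construction, your claim that ``the cost is controlled by the depth of $<_\pi$'' is unsupported.
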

\begin{proof}
Let $\varphi$ be a cut-free $\LKpp$-derivation with end-sequent $\Pi \vdash \Gamma$ and characteristic variables $a_1, \ldots, a_n$ of strong quantifier inferences.
We replace the old inferences stepwise by inferences with the stronger regularity conditions of $\LKp$. These new variables are of the form 
$$x_{f(a_{i_1}, \ldots, a_{i_k})},$$
where $a_{i_1}, \ldots, a_{i_k}$ are the old variables still present at the time of the transformation.
Consider all inferences $\forall_r$ of the form
\begin{prooftree}
\AxiomC{$\Pi_j \vdash \Gamma_j, A(a)$}
\UnaryInfC{$\Pi_j \vdash \Gamma_j, \forall x A(x)$}
\end{prooftree}
where this inference is an innermost inference not yet changed. Extract a proof $\varphi'(a)$ of 
$$\Pi \vdash \Gamma, A(a)$$
by deleting all inferences working on $A(a)$ and resulting formulas. Locate all places in the proof where $\forall x A(x)$ is inferred from $A(a)$. 
Let $i$ be such a position. We replace the sub-proof leading to the premise 
$$\Pi \vdash \Gamma, A(a)$$
of this inference by $\varphi'(x_{f_i(a_{i_1}, \ldots, a_{i_k})})$ of 
$$\Pi(x_{f_i(a_{i_1}, \ldots, a_{i_k})}) \vdash \Gamma(x_{f_i(a_{i_1}, \ldots, a_{i_k})}), A(x_{f_i(a_{i_1}, \ldots, a_{i_k})}).$$
Then we replace $a$ in the indices of new variables everywhere by $f_i$. By weakening we obtain 
$$\Pi, \Pi(x_{f_i(a_{i_1}, \ldots, a_{i_k})}) \vdash \Gamma(x_{f_i(a_{i_1}, \ldots, a_{i_k})}), \Gamma, A(x_{f_i(a_{i_1}, \ldots, a_{i_k})})$$
and continue with the proof. ($\exists$ left inferences are dual.) \\[1ex]
Note that by replacing $a$ in the indices inferences already transformed to regular inferences remain regular. 
The side-variable condition is extended correspondingly. Note that the final proof of the original end-sequent after contraction which obviously fulfils the strong regularity condition of $\LKp$ also fulfils the non-occurrence of strong inference variables in the end-sequent and the side-variable conditions are acyclic, as $x_f < x_g$ only if $g$ is a proper subterm of $f$.\\[1ex]
Finally, we replace all occurrences of $a$ (which are no characteristic variables of any inference anymore) by a fixed constant $c$.\\[1ex]
The transformation is exponential in the number of variables w.r.t. the height if weakenings and contractions are not counted. It is therefore elementary in the complexity.\qed
\end{proof}

\begin{example} \label{ex.lkpp}
Consider the following $\LKpp$-proof $\varphi$:

\begin{prooftree}
\AxiomC{$A(a,a,b) \vdash A(a,a,b)$}
\UnaryInfC{$A(a,a,b) \vdash \forall z A(a,a,z)$}
\UnaryInfC{$A(a,a,b) \vdash \exists y \forall z A(a,y,z)$}
\UnaryInfC{$A(a,a,b) \vdash \forall x \exists y \forall z A(x,y,z)$}
\UnaryInfC{$C, A(a,a,b) \vdash \forall x \exists y \forall z A(x,y,z)$}

\AxiomC{$A(a,a,b) \vdash A(a,a,b)$}
\UnaryInfC{$A(a,a,b) \vdash \forall z A(a,a,z)$}
\UnaryInfC{$A(a,a,b) \vdash \exists y \forall z A(a,y,z)$}
\UnaryInfC{$A(a,a,b) \vdash \forall x \exists y \forall z A(x,y,z)$}
\UnaryInfC{$D, A(a,a,b) \vdash \forall x \exists y \forall z A(x,y,z)$}

\BinaryInfC{$C \lor D, A(a,a,b) \vdash \forall x \exists y \forall z A(x,y,z)$}
\UnaryInfC{$C \lor D, \forall y A(a,a,y) \vdash \forall x \exists y \forall z A(x,y,z)$}
\UnaryInfC{$C \lor D, \forall x \forall y A(x,x,y) \vdash \forall x \exists y \forall z A(x,y,z)$}
\end{prooftree}
with $b <_{\varphi, \LK} a$.\\[1ex]
In the following we will use the abbreviations $\Pi$ for $C \lor D, \forall x \forall y A(x,x,y)$ and $\Gamma$ for $\forall x \exists y \forall z A(x,y,z)$ (our end-sequent is $\Pi \vdash \Gamma$). \\[1ex]
$\varphi(b)$ is:
\begin{prooftree}
\AxiomC{$A(a,a,b) \vdash A(a,a,b)$}
\UnaryInfC{$C, A(a,a,b) \vdash A(a,a,b)$}
\AxiomC{$A(a,a,b) \vdash A(a,a,b)$}
\UnaryInfC{$D, A(a,a,b) \vdash A(a,a,b)$}
\BinaryInfC{$C \lor D, A(a,a,b) \vdash A(a,a,b)$}
\UnaryInfC{$C \lor D, \forall y A(a,a,y) \vdash A(a,a,b)$}
\UnaryInfC{$C \lor D, \forall x \forall y A(x,x,y) \vdash A(a,a,b)$}
\UnaryInfC{$\Pi \vdash \Gamma, A(a,a,b)$}
\end{prooftree}
The new proof is obtain from $\varphi(x_{f(a)})$ and $\varphi(x_{g(a)})$:
\begin{prooftree}
\AxiomC{$(\varphi(x_{f(a)}))$}
\UnaryInfC{$\Pi \vdash \Gamma, A(a,a, x_{f(a)})$}
\UnaryInfC{$\Pi \vdash \Gamma, \forall z A(a,a, z)$}
\UnaryInfC{$A(a,a,b), \Pi \vdash \Gamma, \forall z A(a,a, z)$}
\AxiomC{$(\varphi(x_{g(a)}))$}
\UnaryInfC{$\Pi \vdash \Gamma, A(a,a, x_{g(a)})$}
\UnaryInfC{$\Pi \vdash \Gamma, \forall z A(a,a, z)$}
\UnaryInfC{$A(a,a,b), \Pi \vdash \Gamma, \forall z A(a,a, z)$}
\doubleLine
\BinaryInfC{$\Pi, \Pi \vdash \Gamma, \Gamma$}
\UnaryInfC{$\Pi \vdash \Gamma$}
\end{prooftree}
where $x_{f(a)} < a$ and $x_{g(a)} < a$.\\[1ex]
$\varphi'(a, x_{f(a)}, x_{g(a)})$ is:
\begin{prooftree}
\AxiomC{$(\varphi(x_{f(a)}))$}
\UnaryInfC{$\Pi \vdash \Gamma, A(a,a, x_{f(a)})$}
\UnaryInfC{$\Pi \vdash \Gamma, \forall z A(a,a, z)$}
\UnaryInfC{$A(a,a,b), \Pi \vdash \Gamma, \forall z A(a,a, z)$}
\UnaryInfC{$A(a,a,b), \Pi \vdash \Gamma, \exists y \forall z A(a,y,z)$}
\UnaryInfC{$C, A(a,a,b), \Pi \vdash \Gamma, \exists y \forall z A(a,y,z)$}
\AxiomC{$(\varphi(x_{g(a)}))$}
\UnaryInfC{$\Pi \vdash \Gamma, A(a,a, x_{g(a)})$}
\UnaryInfC{$\Pi \vdash \Gamma, \forall z A(a,a, z)$}
\UnaryInfC{$A(a,a,b), \Pi \vdash \Gamma, \forall z A(a,a, z)$}
\UnaryInfC{$A(a,a,b), \Pi \vdash \Gamma, \exists y \forall z A(a,y,z)$}
\UnaryInfC{$D, A(a,a,b), \Pi \vdash \Gamma, \exists y \forall z A(a,y,z)$}
\BinaryInfC{$C \lor D, A(a,a,b), \Pi \vdash \Gamma, \exists y \forall z A(a,y,z)$}
\UnaryInfC{$C \lor D, \forall y A(a,a,y), \Pi \vdash \Gamma, \exists y \forall z A(a,y,z)$}
\UnaryInfC{$C \lor D, \forall x \forall y A(x,x,y), \Pi \vdash \Gamma, \exists y \forall z A(a,y,z)$}
\end{prooftree}
Finally,
\begin{prooftree}
\AxiomC{$(\varphi'(x_d, x_{f(d)}, x_{g(d)}))$}
\UnaryInfC{$\Pi \vdash \Gamma, \exists y \exists z A(x_d,y,z)$}
\UnaryInfC{$\Pi \vdash \Gamma, \forall x \exists y \exists z A(x,y,z)$}
\UnaryInfC{$A(a,a,b), \Pi \vdash \Gamma, \forall x \exists y \exists z A(x,y,z)$}
\UnaryInfC{$C, A(a,a,b), \Pi \vdash \Gamma, \forall x \exists y \exists z A(x,y,z)$}

\AxiomC{$(\varphi'(x_e, x_{f(e)}, x_{g(e)}))$}
\UnaryInfC{$\Pi \vdash \Gamma, \exists y \exists z A(x_e,y,z)$}
\UnaryInfC{$\Pi \vdash \Gamma, \forall x \exists y \exists z A(x,y,z)$}
\UnaryInfC{$A(a,a,b), \Pi \vdash \Gamma, \forall x \exists y \exists z A(x,y,z)$}
\UnaryInfC{$D, A(a,a,b), \Pi \vdash \Gamma, \forall x \exists y \exists z A(x,y,z)$}

\BinaryInfC{$C \lor D, A(a,a,b), \Pi \vdash \Gamma, \forall x \exists y \exists z A(x,y,z)$}
\UnaryInfC{$C \lor D, \forall y A(a,a,y), \Pi \vdash \Gamma, \forall x \exists y \exists z A(x,y,z)$}
\UnaryInfC{$C \lor D, \forall x \forall y A(x,x,y), \Pi \vdash \Gamma, \forall x \exists y \exists z A(x,y,z)$}
\doubleLine
\UnaryInfC{$\Pi \vdash \Gamma$}
\end{prooftree}
where $x_{f(e)} < x_e, x_{g(e)} < x_e, x_{f(d)} < x_d, x_{g(d)} < x_d$.
\end{example}

\section{Quantifier Shifts in the Cut-free Contexts}\label{quant}

The aboriginal proof strategy of quantifier shifts occurring e.g. in prenexation in classic logic is orthogonal to the concept of cut-free $ \LK$ proofs as shown below.   $ \LKp$ however allows to simulate quantifier shifts with low increase of complexity. For this reason, we introduce a calculus for quantifier shifts

\begin{defi}
The calculus $\lkqs$ is obtained by extending $ \LK$ with the following rules:
\begin{align*}
\infer[]{\Gamma, \kappa[Q'x\, (A \lhd B)] \vdash \Delta}{\Gamma, \kappa[Qx\, A \lhd B] \vdash \Delta} &&
\infer[]{\Gamma, \kappa[Q'x\, (A \lhd B)] \vdash \Delta}{\Gamma, \kappa[A \lhd Qx\,B] \vdash \Delta}\\
\infer[]{\Gamma \vdash \Delta, \kappa[Q'x\, (A \lhd B)]}{\Gamma \vdash \Delta, \kappa[Qx\, A \lhd B]} &&
\infer[]{\Gamma \vdash \Delta, \kappa[Q'x\, (A \lhd B)]}{\Gamma \vdash \Delta, \kappa[A \lhd Qx\, B]}
\end{align*}
where $\kappa[\cdot]$ is a context, $\lhd \in \{\wedge, \vee, \to\}$ and $Q' = Q$, except if $\lhd$ is $\to$ and $Q$ is taken from the antecedent, in which case $Q'$ is opposite. We refer to these rules as \textit{deep quantifier shifts}. 
\end{defi}

Here, recall that, as per our conventions, the syntax of first-order logic includes separate symbols for free and bound variables.

\begin{theorem} (cf. Aguilera, Baaz\cite{DBLP:journals/jsyml/AguileraB19}) \label{PropLKQS}
Cut-free   $ \LKp$ simulates cut-free $\lkqs$ double-exponentially, i.e., every $\lkqs$-provable sequent is   $ \LKp$-provable and there is a double exponential function that bounds the length of the least cut-free   $ \LKp$-proof of a   $ \LKp$-provable sequent in terms of its least cut-free $\lkqs$-proof.
\end{theorem}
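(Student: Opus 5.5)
The plan is to eliminate the deep quantifier shifts from a cut-free $\lkqs$-proof $\pi$ one at a time. Each shift is pushed upward along the ancestors of the shifted subformula until it meets the inference that builds the connective $\lhd$. There the shift is absorbed by re-ordering: introduce $\lhd$ first, then apply a (possibly unsound) quantifier inference for $Q'x\,(A\lhd B)$. Soundness of the result is not an issue, since by Theorem~\ref{TheoremCorrectness} (or its corollary) an $\LKp$-proof is all we need. The work lies in checking that the re-ordered quantifier inferences are suitable, and in bounding the size.

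\textbf{Preprocessing.} Before touching any shift, I normalize $\pi$ in two ways.
\begin{itemize}
\item Weakenings are moved upward. If an ancestor of $\kappa[Qx\,A\lhd B]$ is introduced by weakening, the shifted version $\kappa[Q'x\,(A\lhd B)]$ can simply be weakened in instead.
\item Contractions on ancestors of the shifted subformula are moved downward below the inferences that decompose it. Unlike the case of Proposition~\ref{propcontr}, $\pi$ is a plain $\LK$-style proof, so this step may duplicate subproofs. A downward-moved contraction over a binary $\lhd$-inference doubles one premise. Iterating this over the nesting depth of the formulas yields an exponential in the proof size, and a second exponential comes from the number of shifts processed. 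This is where I expect the double-exponential bound to come from.
\end{itemize}

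After preprocessing, every ancestor occurrence of $Qx\,A\lhd B$ (resp.\ $A\lhd Qx\,B$) has a clean history. The quantifier $Qx$ is introduced at a single place by an inference with premise $A(s)$. Here $s$ is a term for a weak quantifier and an eigenvariable $a$ for a strong one. Later, the $\lhd$-inference combines $Qx\,A$ with $B$, possibly in another branch. Deep context $\kappa$ is handled the same way, because in a cut-free proof the surrounding connectives of $\kappa$ are introduced strictly below this $\lhd$-inference. I now delete the $Q$-inference and carry $A(s)$ down to the $\lhd$-inference, producing $A(s)\lhd B$. I then apply a $Q'$-inference to obtain $Q'x\,(A\lhd B)$. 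The polarity convention for $Q'$ in the definition of $\lkqs$ is exactly what makes this $Q'$-inference of the same strength (weak or strong) as the deleted $Q$-inference, with the same instance $s$. Everything below is unchanged, and the shift rule itself disappears.

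\textbf{The main obstacle.} The difficulty is that the new strong inference may violate the $\LK$ eigenvariable condition. When the $Q'$-inference is strong, the eigenvariable $a$ may now occur in $B$ or in side formulas from the other $\lhd$-premise. This is precisely what $\LKp$ permits, provided the inference is suitable. I check the three conditions in turn.
\begin{itemize}
\item \emph{Substitutability.} The original proof satisfied the eigenvariable condition, so $a$ occurs neither in the end-sequent nor in anything inherited from it.
\item \emph{Weak regularity.} Duplicating subproofs during contraction postponement creates several inferences with the same eigenvariable. I restore regularity by renaming the copies apart, which is possible since each copy is a separate subproof.
\item \emph{Acyclicity of $<_\pi$.} New side-variable pairs $a<_\pi b$ arise only for free variables $b$ of $B$. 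I would argue that each such $b$ either is a weak-term variable or was introduced as an eigenvariable strictly below, in the original $\LK$-order. Hence the order in which eigenvariables were discharged in $\pi$ linearizes the new relation and rules out cycles.
\end{itemize}
If acyclicity fails in some case, my fallback is to first pass to an $\LKpp$-proof, whose very weak regularity lets the duplicated copies share the same eigenvariable and critical formula. I would then invoke Theorem~\ref{th.3}, accepting that its elementary bound must still be shown to fit within the double-exponential estimate.
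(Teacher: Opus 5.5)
Your construction is the paper's. You push each shift up to the $\lhd$-inference and postpone the quantifier inference on $Qx\,A$ past it, merging $\sigma_1$ and $\sigma_2$. You then apply $\lhd$ to the instance, follow with a $Q'$-inference of the same strength, and pay for extra contracted occurrences of $Qx\,A$ with copies of the other premise. Your acyclicity argument via original positions is fine, and more explicit than the paper's one-liner. Doing all duplication on the still-regular proof is also what makes renaming the copies apart legitimate.

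The gap is the quantitative part, which is the actual content of the theorem. ``An exponential from nesting depth plus a second one from the number of shifts'' is not a derivation, and it points the wrong way. If normalizing for one shift costs an exponential in the current size, iterating over up to $n$ shifts gives a tower of exponentials. The fallback through Theorem~\ref{th.3} yields only an elementary bound, so it cannot deliver the statement either.

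What is missing is the paper's per-shift estimate. Only contractions of $Qx\,A$ between its introduction and the $\lhd$-inference need moving. All ancestors of $Qx\,A\lhd B$ or of $\kappa[\ldots]$ are replaced uniformly by the shifted formula, so their contractions remain valid and nothing is iterated over $\kappa$. Each extra occurrence costs one copy of the other premise $\sigma_3$, of size at most the current size $m$; for unary $\lhd$-inferences such as $\to R$ it costs merely a weakening of $B$. So one shift takes $m$ to $O(m^2)$. Process the $\lhd$-inferences top-down, so that copied subproofs are already normalized and never reprocessed. Then at most $n$ squarings give $n^{2^{O(n)}}$.

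There are two further points. First, binary rules here share contexts, so different branches of $\sigma_1$ can each introduce their own $Qx\,A$, with distinct eigenvariables or terms, and merge them without any explicit contraction; the paper flags exactly this. After postponement these become distinct instances $A(a)$, $A(b)$. They must be weakened across branches, and each needs its own $\lhd$-inference, copy of $\sigma_3$ and $Q'$-inference. Without this, your claim that $Qx$ is introduced at a single place is false. Likewise, if $Qx\,A$ itself is weakened in, weaken in $A(a)$ with fresh $a$ instead.

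Second, your main obstacle is misplaced. In a regular proof the eigenvariable $a$ occurs only above the original inference, so it cannot occur in $B$ or in the other premise. What genuinely leaves $\LK$ is the weak case (the paper's Case II). There the carried instance $A(t)$ may contain the eigenvariable of a strong inference inside $\sigma_1$, whose conclusion then contains its own eigenvariable. Those inferences stay suitable, because the end-sequent and their quantified formulas (hence their side-variable pairs) are unchanged. That is the check that has to be made.
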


\begin{proof}
    
Let $\pi$ be a cut-free $\lkqs$-proof of size $n$. We transform it into a cut-free   $ \LKp$-proof. Assume by induction that there is only one application of a quantifier shift and this is the last inference. We proceed by cases. For simplicity, we assume $\kappa$ does not change the polarity of the quantifiers. We also assume $\lhd$ is $\to$; the other connectives are treated similarly. \\

\noindent\textsc{Case I.} The last inference is: 
\begin{align*}
\infer[]{\Gamma, \kappa[\exists x\, (A(x) \to B)] \vdash \Delta}{\Gamma, \kappa[\forall x\, A(x) \to B] \vdash \Delta} 
\end{align*}
so that the proof has the following structure:
\begin{prooftree}
\def\fCenter{\mbox{\ $\vdash$\ }}
\AxiomC{$\vdots\sigma_2$}
\UnaryInf$\Gamma'' \fCenter \Delta'', A(a)$
\RightLabel{($*$)}
\UnaryInf$\Gamma'' \fCenter \Delta'', \forall x\, A(x)$
\UnaryInfC{$\vdots\sigma_1$}
\UnaryInf$\Gamma' \fCenter \Delta', \forall x\, A(x)$
\AxiomC{$\vdots\sigma_3$}
\UnaryInf$\Gamma', B\fCenter \Delta'$
\BinaryInf$\Gamma', \forall x\, A(x) \to B \fCenter \Delta'$
\UnaryInfC{$\vdots\sigma_0$}
\UnaryInf$\Gamma, \kappa[\forall x\, A(x) \to B] \fCenter \Delta$
\UnaryInf$\Gamma, \kappa[\exists x\, (A(x) \to B)] \fCenter \Delta$
\end{prooftree}
where the $\sigma_i$ denote subproofs. The subproof $\sigma_0$ might split into several branches not indicated in the diagram, each of which could potentially include its own copy of $\forall x\, A(x)$ being inferred and its own subproof $\sigma_3'$ with conclusion of the form $\Pi, B \vdash \Lambda$. Each branch can be dealt with the same way and so we will only focus on the indicated parts of $\pi$.

Similarly, the subproof $\sigma_1$ might split into several branches not indicated in the diagram, each of which could potentially include its own copy of $\forall x\, A(x)$ being inferred. This we need to bear in mind. Note that each of those copies of $\forall x\, A(x)$ is necessarily inferred from a different characteristic variable. In fact, the variable $a$ does not appear below $(*)$, by the regularity of $\lkqs$.

We modify the proof. Our approach is as follows: we would like to merge the subproofs $\sigma_1$ and $\sigma_2$ simply by postponing the inference $(*)$ as follows:
\begin{prooftree}
\def\fCenter{\mbox{\ $\vdash$\ }}
\AxiomC{$\vdots\sigma_2 + \sigma_1$}
\UnaryInf$\Gamma' \fCenter \Delta', A(a)$
\AxiomC{$\vdots\sigma_3$}
\UnaryInf$\Gamma', B\fCenter \Delta'$
\BinaryInf$\Gamma', A(a) \to B \fCenter \Delta'$
\RightLabel{($**$)}
\UnaryInf$\Gamma', \exists x\, ( A(x) \to B) \fCenter \Delta'$
\UnaryInfC{$\vdots\sigma_0$}
\UnaryInf$\Gamma, \kappa[\exists x\, (A(x) \to B)] \fCenter \Delta$
\end{prooftree}
The problem that might arise is that some occurrence of $\forall x\, A(x)$ that would be contracted in $\sigma_1$ with the indicated occurrence is unable to be contracted. We describe how to deal with each of them.

Notice that the problematic occurrence of $\forall x\, A(x)$ must originate from a strong-quantifier inference. Omit that inference and drag the unquantified formula $A(b)$ until after $(**)$, so that at that point we have a derivation of 
$$\Gamma'_0, A(b), \Gamma'_1, \exists x\, ( A(x) \to B), \vdash \Delta',$$
where $\Gamma' = \Gamma'_0, \Gamma'_1$. Add to that some exchanges to obtain:
$$\Gamma', \exists x\, ( A(x) \to B), A(b) \vdash \Delta',$$
and a subproof
\begin{align} \label{eqqsCaseI}
\infer[]{\Gamma', B \vdash \Delta'}{\vdots\sigma_3}
\end{align}
in order to infer 
$$\Gamma', \exists x\, ( A(x) \to B), A(b) \to B \vdash \Delta'.$$
with an application of $\exists L$ and a contraction, we are left again with 
$$\Gamma', \exists x\, ( A(x) \to B) \vdash \Delta',$$
to which we can apply $\sigma_0$ as desired. The proof grows by adding a copy of \eqref{eqqsCaseI}---of size at most $n$---for each problematic occurrence of $\forall x\, A(x)$ (of which there are at most $n$). Hence, it grows quadratically.\\

\noindent\textsc{Case II.} The last inference is:
\begin{align*}
\infer[]{\Gamma, \kappa[\forall x\, (A(x) \to B)] \vdash \Delta}{\Gamma, \kappa[\exists x\, A(x) \to B] \vdash \Delta} 
\end{align*}
The proof has the following structure:
\begin{prooftree}
\def\fCenter{\mbox{\ $\vdash$\ }}
\AxiomC{$\vdots\sigma_2$}
\UnaryInf$\Gamma'' \fCenter \Delta'', A(t)$
\UnaryInf$\Gamma'' \fCenter \Delta'', \exists x\, A(x)$
\UnaryInfC{$\vdots\sigma_1$}
\UnaryInf$\Gamma' \fCenter \Delta', \exists x\, A(x)$
\AxiomC{$\vdots\sigma_3$}
\UnaryInf$\Gamma', B\fCenter \Delta'$
\BinaryInf$\Gamma', \exists x\, A(x) \to B \fCenter \Delta'$
\UnaryInfC{$\vdots\sigma_0$}
\UnaryInf$\Gamma, \kappa[\exists x\, A(x) \to B] \fCenter \Delta$
\UnaryInf$\Gamma, \kappa[\forall x\, (A(x) \to B)] \fCenter \Delta$
\end{prooftree}
As in Case I, the subproof $\sigma_0$ could branch off into several subtrees each of which is taken care of in the same way, and so we only focus on the indicated parts of $\pi$.
We would like to merge the subproofs $\sigma_1$ and $\sigma_2$ as follows:
\begin{prooftree}
\def\fCenter{\mbox{\ $\vdash$\ }}
\AxiomC{$\vdots\sigma_2 + \sigma_1$}
\UnaryInf$\Gamma' \fCenter \Delta', A(t)$
\AxiomC{$\vdots\sigma_3$}
\UnaryInf$\Gamma', B\fCenter \Delta'$
\BinaryInf$\Gamma', A(t) \to B \fCenter \Delta'$
\UnaryInf$\Gamma', \forall x\, ( A(x) \to B) \fCenter \Delta'$
\UnaryInfC{$\vdots\sigma_0$}
\UnaryInf$\Gamma, \kappa[\forall x\, (A(x) \to B)] \fCenter \Delta$
\end{prooftree}
As before, we face the problem of circumventing a contraction of (possibly several occurrences of) $\exists x\, A(x)$ in $\sigma_1$ and solve it in the same way. The proof grows quadratically again. We note the following: it might happen that the term $t$ contains some free variable $a$. Hence, the proof of this case does not necessarily go through in $ \LK$, as $a$ could be the characteristic variable of a strong-quantifier inference in $\sigma_1$. Note that the side-variable condition is satisfied. \\

\noindent\textsc{Case III.} The last inference is: 
\begin{align*}
\infer[]{\Gamma, \kappa\big[\forall x\, (A \to B(x))\big] \vdash \Delta}{\Gamma, \kappa\big[A \to \forall x\,B(x)\big] \vdash \Delta} 
\end{align*}
This is analogous to \textsc{Case II.}\\

\noindent\textsc{Case IV.} The last inference is: 
\begin{align*}
\infer[]{\Gamma, \kappa\big[\exists x\, (A \to B(x))\big] \vdash \Delta}{\Gamma, \kappa\big[A \to \exists x\,B(x)\big] \vdash \Delta} 
\end{align*}
This is analogous to \textsc{Case I.}\\

\noindent\textsc{Case V.} The last inference is: 
\begin{align*}
\infer[]{\Gamma \vdash \Delta, \kappa\big[\exists x\, (A(x) \to B)\big]}
{\Gamma \vdash \Delta, \kappa\big[\forall x\, A(x) \to B\big]} 
\end{align*}
The proof has the following form, modulo qualifications as in Cases I and II:
\begin{prooftree}
\def\fCenter{\mbox{\ $\vdash$\ }}
\AxiomC{$\vdots\sigma_2$}
\UnaryInf$\Gamma'', A(t) \fCenter \Delta''$
\RightLabel{($*$)}
\UnaryInf$\Gamma'', \forall x\, A(x) \fCenter \Delta''$
\UnaryInfC{$\vdots\sigma_1$}
\UnaryInf$\Gamma', \forall x\, A(x) \fCenter \Delta', B$
\UnaryInf$\Gamma' \fCenter \Delta', \forall x\, A(x) \to B$
\UnaryInfC{$\vdots\sigma_0$}
\UnaryInf$\Gamma \fCenter \Delta, \kappa\big[\forall x\, A(x) \to B\big]$
\UnaryInf$\Gamma \fCenter \Delta, \kappa\big[\exists x\, (A(x) \to B)\big]$
\end{prooftree}
We would like to postpone the inference $(*)$, thus merging $\sigma_1$ and $\sigma_2$:
\begin{prooftree}
\def\fCenter{\mbox{\ $\vdash$\ }}
\AxiomC{$\vdots\sigma_2 + \sigma_1$}
\UnaryInf$\Gamma', A(t) \fCenter \Delta', B$
\RightLabel{($**$)}
\UnaryInf$\Gamma' \fCenter \Delta', A(t) \to B$
\UnaryInf$\Gamma' \fCenter \Delta', \exists x\, ( A(x) \to B)$
\UnaryInfC{$\vdots\sigma_0$}
\UnaryInf$\Gamma \fCenter \Delta, \kappa[\exists x\, (A(x) \to B)]$
\end{prooftree}
However, we might have---similarly to the previous cases---an occurrence of $\forall x\, A(x)$ that would be contracted on the left-hand side as part of $\sigma_1$. For simplicity, assume there is only one. This occurrence is inferred from a formula $A(s)$. Postpone this occurrence until after $(**)$ so that at that point we have a derivation of 
$$\Gamma'_0, A(s), \Gamma'_1 \vdash \Delta', \exists x\, (A(x) \to B),$$
where $\Gamma' = \Gamma'_0, \Gamma'_1$. Add to that some exchanges to obtain:
$$\Gamma', A(s) \vdash \Delta', \exists x\, (A(x) \to B),$$
and a weakening:
\begin{align} \label{eqqsCaseV}
\infer[WR]{\Gamma', A(s) \vdash \Delta', \exists x\, (A(x) \to B), B}
{\Gamma', A(s) \vdash \Delta', \exists x\, (A(x) \to B)}
\end{align}
in order to infer 
$$\Gamma' \vdash \Delta', \exists x\, (A(x) \to B), A(s) \to B.$$
with an application of $\exists R$ and a contraction, we are left again with 
$$\Gamma' \vdash \Delta', \exists x\, (A(x) \to B),$$
to which we can apply the rest of $\sigma_0$ as desired. In this case the proof only grows linearly. As in \textsc{Case II}, the proof of this case would not necessarily work for $ \LK$, but it is correct from the point of view of   $ \LKp$, as follows from the regularity of $\lkqs$ and the fact that the conclusion of the proof is $\pi$.\\

\noindent\textsc{Case VI.} The last inference is: 
\begin{align*}
\infer[]{\Gamma \vdash \Delta, \kappa\big[\forall x\, (A(x) \to B)\big]}
{\Gamma \vdash \Delta, \kappa\big[\exists x\, A(x) \to B\big]} 
\end{align*}
The proof has the following form, modulo qualifications as in Cases I and II:
\begin{prooftree}
\def\fCenter{\mbox{\ $\vdash$\ }}
\AxiomC{$\vdots\sigma_2$}
\UnaryInf$\Gamma'', A(a) \fCenter \Delta''$
\RightLabel{($*$)}
\UnaryInf$\Gamma'', \exists x\, A(x) \fCenter \Delta''$
\UnaryInfC{$\vdots\sigma_1$}
\UnaryInf$\Gamma', \exists x\, A(x) \fCenter \Delta', B$
\UnaryInf$\Gamma' \fCenter \Delta', \exists x\, A(x) \to B$
\UnaryInfC{$\vdots\sigma_0$}
\UnaryInf$\Gamma \fCenter \Delta, \kappa\big[\exists x\, A(x) \to B\big]$
\UnaryInf$\Gamma \fCenter \Delta, \kappa\big[\forall x\, (A(x) \to B)\big]$
\end{prooftree}
As before, we would like to postpone the inference $(*)$, thus merging $\sigma_1$ and $\sigma_2$:
\begin{prooftree}
\def\fCenter{\mbox{\ $\vdash$\ }}
\AxiomC{$\vdots\sigma_2 + \sigma_1$}
\UnaryInf$\Gamma', A(a) \fCenter \Delta', B$
\UnaryInf$\Gamma' \fCenter \Delta', A(a) \to B$
\UnaryInf$\Gamma' \fCenter \Delta', \forall x\, ( A(x) \to B)$
\UnaryInfC{$\vdots\sigma_0$}
\UnaryInf$\Gamma \fCenter \Delta, \kappa[\forall x\, (A(x) \to B)]$
\end{prooftree}
We deal with contractions of $\exists x\, A(x)$ on the left-hand side as in the previous case. The proof grows linearly.\\

\noindent\textsc{Case VII.} The last inference is: 
\begin{align*}
\infer[]{\Gamma \vdash \Delta, \kappa\big[\forall x\, (A \to B(x))\big]}
{\Gamma \vdash \Delta, \kappa\big[A \to \forall x\,B(x)\big]} 
\end{align*}
This is analogous to \textsc{Case VI.}\\

\noindent\textsc{Case VIII.} The last inference is: 
\begin{align*}
\infer[]{\Gamma \vdash \Delta, \kappa\big[\exists x\, (A \to B(x))\big]}{\Gamma \vdash \Delta, \kappa\big[A \to \exists x\,B(x)\big]} 
\end{align*}
This is analogous to \textsc{Case V.}\\

Hence, we have dealt with all the cases. Note that for each quantifier shifted, the proof grows at most quadratically. Since there are at most $n$ deep quantifier shifts, the size of the resulting proof is bounded by 
$$n^{2^n} = 2^{2^n\cdot \log(n)} \approx 2^{2^n}$$ and this is elementary.\qed
\end{proof}

\section{Standard Skolemization and Andrews Skolemization}\label{skol}

Proof-theoretic Skolemization relates to the replacement of characteristic variables in strong quantifier inferences by Skolem terms and preserves validity. In structural Skolemization, i.e., standard case, the Skolem functions depend on all weak quantifiers in whose scope the transformed subformula occurs.

\begin{definition}\label{Def: St Skolem} 
 Let $A$ be a closed first-order formula. Whenever $A$ does not contain strong quantifiers, we define its \emph{structural Skolem form} as $A^S := A$. Suppose now that $A$ contains strong quantifiers. Let $Q \in \{\forall, \exists\}$ and $(Q y)$ be the first strong quantifier occurring in $A$. If $(Q y)$ is not in the scope of weak quantifiers, then its structural Skolem form is
$$ A^S := (A_{-(Q y)}\{y \leftarrow c\})^S,$$
where $A_{-(Q y)}$ is the formula $A$ after omission of $(Q y)$ and $c$ is a new constant symbol not occurring in $A$.
If $(Q y)$ is in the scope of the weak quantifiers $(Q_1 x_1) \ldots (Q_n x_n)$, then its structural Skolemization is
$$ A^S:= (A_{-(Q y)}\{y \leftarrow f(x_1, \ldots , x_n)\})^S,$$
where $f$ is a fresh (Skolem) function symbol and does not occur in $A$. 
Given $F = \bigwedge \Gamma \to \bigvee \Delta$ and $F^S= \bigwedge \Pi \to \bigvee \Lambda$, define the structural Skolemization of the sequent $\Gamma\vdash \Delta$ as $$(\Gamma\vdash \Delta)^S := \Pi\vdash \Lambda.$$
%the Skolemization of a first-order formula A is defined by replacing every strongly quantified variable y with a new function symbol fA y (x1,...,xn), where x1,...,xn are the weakly quantified variables such that Qy appears in the scope of their quantifiers, and removing the quantifier Qy.
\end{definition}

This substitutions increases the size of cut-free proofs in $Lk$ polynomially. In $\LKp$ and $\LKpp$ the standard Skolemization for cut-free proofs in Baaz, Leitsch\cite{baaz2011methods} is not possible (as seen later), i.e., cuts have to be added as in the following example.

\begin{example}\label{example}

%\begin{center}
%\begin{prooftree}
%\def\fCenter{\mbox{\ $\vdash$\ }}
\AxiomC{$A(t,f(t))\limp A(t,f(t))$}
\UnaryInfC{ $\forall y A(t,y)\limp A(t,f(t)) $}
\UnaryInfC{$\vdots$}
\UnaryInfC{$\Pi\limp \Gamma, \forall y A(t,y)$  $\forall y A(t,y),\Pi\limp \Gamma, A(t,f(t))  $}
\UnaryInfC{$\Pi\limp \Gamma, A(t,f(t))$}
\UnaryInfC{$\Pi\limp \Gamma, \exists x A(x,f(x))$}
%\end{prooftree}
\DisplayProof
%\end{center}
\end{example}

\begin{definition}\label{Def: And Skolem} 
For a closed first-order formula $A$, \emph{Andrews Skolemization} $A^\SA$ is defined similar to Definition \ref{Def: St Skolem}.
The only difference is that the introduced Skolem functions in Andrews method do not depend on the weak quantifiers $(Q_1 x_1) \ldots (Q_n x_n)$ that dominate the strong quantifier $(Q x)$, but depend on the subset of $\{x_1, \ldots, x_n\}$ appearing (free) in the subformula dominated by $(Q x)$.
\end{definition}
Andrews Skolemization may allow shorter Herbrand sequent.

\begin{example}
Consider the sequent 
$$S \colon \forall x \forall y (\exists z P(x,z) \lor Q(y,x)) \vdash.$$
Then its structural Skolemization is
$$\forall x \forall y (P(x,f(x,y)) \lor Q(y,x)) \vdash$$
and its Andrews Skolemization is
$$\forall x \forall y (P(x,g(x)) \lor Q(y,x)) \vdash.$$
\end{example}

\begin{remark}
Herbrand disjunctions are generally shortened, cf the formula
$$ \exists y (\exists x A(x) \to A(y)).$$
The structural Skolemization of this formula is $ \exists y (A(f(y)) \to A(y))$, with shortest Herbrand disjunction
$$ A(f(f(a)) \to A(f(a)) \lor A(f(a)) \to A(a).$$
The Andrews Skolemization of the above formula is $\exists y (A(c) \to A(y))$, with shortest Herbrand disjunction
$$ A(c) \to A(c).$$
\end{remark}

\begin{remark}
   In the satisfiability sense, Structural Skolemization is obtained by addition of 
    \[ 
\forall x_1\dots x_n(\exists y A(x_1\dots x_n, y)\rightarrow \exists y A(x_1\dots x_n, y))    
    \]
for corresponding formulas and subformulas.
Skolemize them using the axiom of choice   \[ 
\forall x_1\dots x_n(\exists y A(x_1\dots x_n, y)\rightarrow  A(x_1\dots x_n, f(x_1\dots x_n)))   
    \]
    and replace strong quantifiers of the original formulas by Skolem terms.
    The difference is that it verifies in addition
    \[ 
    \bigwedge x_i=x'_i\rightarrow f(x_1\dots x_n)=f(x'_1\dots x'_n).
    \]

    This is not included in the proof-theoretic Skolemization as the proof-theoretic deSkolemization methods fail (note that Skolemization in this satifiability context is dual to Skolemization in proof-theoretic context. In the first case, the challenge is Skolemization and in the second case - deSkolemization).
\end{remark}

\begin{remark}
    Andrews Skolemization has been originally introduced by the equal satisfiability of \[A(\exists x B(x_1\dots x_n))\]
    and 
\[A(P(x_1\dots x_n))\wedge \forall x_1\dots x_n(\exists x B( x_1\dots x_n)\leftrightarrow P(x_1\dots x_n)) \]
$x_1\dots x_n$ bound by weak quantifiers outside $B$ and occuring in $B$, $P$ new predicate. Skolemize the right component with structural Skolemization and use the equivalence to obtain \[A(B(f(x_1\dots x_n),x_1\dots x_n)).\]
\end{remark}

Structural Skolemization can be elementarily deSkolemized in cut-free proofs in $ \LK$ by Baaz, Hetzel, Weller \cite{Hetzl}. (This deSkolemization is double exponential contrary to the deSkolemization in the second epsilon theorem which is exponentially bounded but relates only to prenex formulas). In $\LKpp$ the following deSkolemization is possible:
\begin{itemize}
    \item Move all contractions to the bottom according to the Proposition \ref{propcontr}.
    \item Replace the Skolem terms by variables indexed by the Skolem terms.
     \item Replace the weakening by the adequate formulas with strong quantifiers.
     \item  Order the original Skolem terms from large size to small size.
     \item Infer weak quantifiers step by step at any stage wherever it is possible.
      \item Infer strong quantifiers step by step at the top-most possibility for the characteristic variables $x_s$ where $s$ is a maximal Skolem term indexing a still free characteristic variable.
      \item Repeat this procedure until all free characteristic variables are exhausted.
            \end{itemize}

The soundness is guaranteed by the dependencies of the Skolem functions in the Skolem terms (the side-variable condition is satisfied and very weak regularity enables the top-most inferences).
This is an extended form of the second epsilon theorem allowing for the deSkolemization of all sorts of strong quantifiers also with infix occurrences. The price is, however, that it does not lead immediately to a cut-free proof to $ \LK$ format.

Structural Skolemization is elementarily associated with cut-free proofs in $ \LK$, it is not elementarily associated with cut-free $\LKpp$ proofs, i.e., there is no elementary translation of cut-free $\LKpp$ proofs to the cut-free structural Skolemized form (see the next section). Such an elementary translation exists, however, for Andrews Skolemization.

\section{ Non-elementary Speed-ups of Cut-free $\LKp$ and $\LKpp$ Proofs Relative to Cut-free $\LK$-Proofs  }\label{speed}

%\textbf{definition of elementary.}
 The arguments in this section are based on the well-known result of Orevkov\cite{orevkov1982lower}.

\begin{theorem}\label{th2}
   There is a specific family of sequents $\{S_i \}_{i<\omega}$ and specific $ \LK$-proofs such that they have the following properties:
    \begin{enumerate}
        \item the size of $S_i$ is polynomial in $i$;
        \item there is no bound on the size of their smallest cut-free $ \LK$-proofs that is elementary in $i$;
        \item the size of these proofs (with cuts), however, is polynomially bounded in $i$.
    \end{enumerate}
 
\end{theorem}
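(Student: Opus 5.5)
This is Orevkov's theorem, so the plan is to reconstruct his construction rather than find a new argument. I would take $S_i$ to be Orevkov's sequents: a small equational theory of addition with a predicate $p(x,y,z)$ meaning ``$x+2^y=z$'', and the conclusion asserting the existence of a value reached after $i$ nested exponentiations. Concretely, the antecedent consists of
\[
\forall y\, p(y,0,s(y)),\qquad \forall x\forall y\forall z\forall v\forall w\,\big(p(y,x,z)\wedge p(z,x,v)\to p(y,s(x),v)\big),
\]
and the succedent is $\exists v_i\exists v_{i-1}\cdots\exists v_0\,\big(p(0,0,v_i)\wedge p(0,v_i,v_{i-1})\wedge\cdots\wedge p(0,v_1,v_0)\big)$. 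Property 1 is then immediate, since the sequent consists of a fixed antecedent and a succedent with $i+1$ conjuncts, so its size is $O(i)$, and only a linear or log-linear factor is added for variable names.

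For property 3, I will build short proofs with cuts via the standard iterated-induction trick, as in Orevkov and Statman. Define formulas $R_0(x):=\forall y\exists z\, p(y,x,z)$ and $R_{k+1}(x):=\forall y\,(R_k(y)\to\exists z\,(R_k(z)\wedge p(y,x,z)))$. These say that $x$ lies in a class closed under $y\mapsto y+2^x$, and at higher levels under the corresponding higher operations. The proof then has three steps, each with a proof of size polynomial (indeed linear up to a constant) in $i$:
\begin{enumerate}
\item Show that each $R_k$ holds at $0$ and is closed under $s$, using the two antecedent axioms.
\item Derive $R_k(x)\to \exists z\,(R_k(z)\wedge p(0,x,z))$ from $R_{k+1}$.
\item Chain these with cuts to reach the existential conclusion.
\end{enumerate}
There are $O(i)$ lemmas, each of size $O(i)$ because the formulas $R_k$ have size $O(k)$. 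Every lemma uses only a fixed number of cuts and quantifier inferences, which gives a polynomial bound.

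For property 2, I will argue through Herbrand's theorem, i.e.\ via the midsequent or the extraction of a Herbrand expansion from a cut-free proof. The sequent is prenexable with only weak quantifiers, and the earlier proposition lets me extract a Herbrand expansion elementarily from a cut-free proof. Any Herbrand disjunction must instantiate $v_i,\dots,v_0$ with terms over $0,s$, and the propositional tautology forces a witness for $v_0$ that must syntactically equal the numeral $2_i$ (a tower of twos of height $i$). This holds because in the free term model the only derivable $p$-facts are the true ones, so a terms-only argument applies. Therefore the cut-free proof contains a term of size at least $2_i$, and its size is not elementarily bounded.

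The main obstacle is this last step, namely making precise that the witnessing term must be the numeral of size $2_i$. I would argue semantically: interpret everything in $\N$ with $p$ as the true relation, so the antecedent holds and every instantiated conjunct holds. The existential witnesses in a valid Herbrand disjunction cannot be arbitrary, however. One takes the Herbrand model on terms where $p(t_1,t_2,t_3)$ holds exactly when the values agree, and one needs the terms to be ground and built from $0$ and $s$ alone, so that value equals length. Since the language contains only $0$ and $s$, this works: the disjunct that is true must contain a term denoting $2_i$, and such a term has size at least $2_i$. Combined with the elementary bound on Herbrand extraction, the proof size cannot be elementary in $i$.
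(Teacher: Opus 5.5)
\textbf{Gap in item 3.} Your overall plan is the right one: the paper gives no argument of its own and simply attributes the theorem to Orevkov, and your sequents, cut formulas and lower-bound idea follow that construction. However, your argument does not establish item 3 as stated.

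Your cut formula $R_{k+1}(x)=\forall y\,(R_k(y)\to\exists z\,(R_k(z)\wedge p(y,x,z)))$ contains \emph{two} copies of $R_k$. Hence $|R_{k+1}|\geq 2|R_k|$, so $|R_k|$ is exponential in $k$, not $O(k)$ as you claim. Your proof with cuts must contain the cut formulas up to level about $i$ in full. Moreover, the paper's $\LK$ has only atomic axioms, so each identity sequent $R_k(y)\vdash R_k(y)$ you need for feeding hypotheses into $\to L$ must itself be derived, at cost proportional to $|R_k|$. Consequently your proofs have size exponential in $i$ in the paper's measure (number of symbols). They are polynomial only in the number of inferences, and only if non-atomic identity axioms are allowed.

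To get the polynomial bound of item 3 you need an additional idea that removes the duplication of $R_k$. The usual device $\forall u\,((u=y\vee u=z)\to R_k(u))$ needs equality, which your language does not have. Otherwise, state explicitly that you prove the version with an elementary (exponential) bound. That weaker version is all the paper actually uses later: Theorem \ref{th3} only claims elementary bounds.

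\textbf{Step 2 is mis-indexed.} From $R_{k+1}(x)$ together with $R_k(0)$ you obtain $\exists z\,(R_k(z)\wedge p(0,x,z))$. The formula $R_k(x)\to\exists z\,(R_k(z)\wedge p(0,x,z))$ you wrote is not what $R_{k+1}$ gives. The chain must therefore drop one level per link, ending with $R_0(v_1)$, which yields $v_0$.

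\textbf{The lower bound is correct.} It can be stated more cleanly than through a ``free term model'':
\begin{itemize}
\item A cut-free proof of your sequent contains no strong quantifier inferences, so you may replace all free variables by $0$.
\item Evaluate the resulting Herbrand sequent in $\mathbb{N}$, reading $p(a,b,c)$ as $a+2^b=c$. Every antecedent instance is true there.
\item Hence some instance $(t_i,\ldots,t_0)$ of the succedent matrix is true, which forces $t_0$ to be the numeral $s^{2_i}(0)$.
\end{itemize}
This term already occurs in the cut-free proof, as the witness of an $\exists$-inference. So you do not even need the elementary extraction of Herbrand expansions to conclude that the proof has size at least $2_i$.
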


\begin{corollary}
    Each worst-case sequence as formulated in Theorem \ref{th2} generates a worst-case
sequence, where the end-sequents contain weak quantifiers only.
\end{corollary}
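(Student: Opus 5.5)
The approach is to transform an arbitrary worst-case sequence $\{S_i\}$ as in Theorem \ref{th2} into one whose end-sequents contain only weak quantifiers, using structural Skolemization (Definition \ref{Def: St Skolem}). Set $S_i' := S_i^S$. Skolemization removes all strong quantifiers, replacing them by Skolem terms, so $S_i'$ contains weak quantifiers only. Its size is at most polynomial in the size of $S_i$: each strong quantifier is replaced by a term of size bounded by the number of weak quantifiers dominating it. Hence property 1 is preserved.

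For property 3, I would take the polynomial $\LK$-proofs with cuts of $S_i$ and produce proofs with cuts of $S_i'$. The simplest route is to derive, by short cut-free $\LK$-proofs of size polynomial in $|S_i|$, the sequents $A \vdash A^S$ for each antecedent formula and $B^S \vdash B$ for each succedent formula. These follow from the usual correspondence: strong quantifiers of $S_i$ become weak after moving to the other side, so they can be instantiated with the Skolem terms. Cutting these against the given proof of $S_i$ yields a proof of $S_i'$ with only polynomial overhead. This step is routine.

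For property 2, I argue by contradiction. Suppose $S_i'$ had cut-free $\LK$-proofs of elementary size in $i$. By the deSkolemization result of Baaz, Hetzl, Weller \cite{Hetzl}, structural Skolemization can be elementarily deSkolemized in cut-free $\LK$. We would then obtain cut-free $\LK$-proofs of $S_i$ of elementary size, contradicting property 2 of Theorem \ref{th2} for $S_i$. Since compositions of elementary functions are elementary, this closes the argument.

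The main obstacle is the precise form of the deSkolemization bound: it must be elementary in the size of the cut-free proof of the Skolemized sequent, uniformly in the sequent. I would cite the double-exponential bound mentioned before Theorem \ref{th2}. If a direct citation is unsatisfactory, an alternative is to use Herbrand expansions for weak-quantifier sequents, whose extraction from cut-free proofs is elementary by the earlier proposition. One then rebuilds a cut-free proof of $S_i$ from the Herbrand sequent by reintroducing strong quantifiers along the ordering of Skolem terms by size. That reconstruction is the delicate part.
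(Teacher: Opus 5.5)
\textbf{The transfer step for property~3 fails as written.} The auxiliary sequents point the wrong way. For an antecedent formula $A$ of $S_i$ you need $A^S \vdash A$, and for a succedent formula $B$ you need $B \vdash B^S$. The sequents you wrote are not valid in general:
\begin{itemize}
\item for $A=\exists x\,P(x)$ in the antecedent, $A^S=P(c)$, and $\exists x\,P(x)\vdash P(c)$ fails;
\item for $B=\forall x\,P(x)$ in the succedent, $B^S=P(c)$, and $P(c)\vdash \forall x\,P(x)$ fails.
\end{itemize}
They also have the wrong shape for the cuts you want. To cut away $A$ from $A,\Gamma\vdash\Delta$, the other premise must contain $A$ in its succedent, so $A\vdash A^S$ cannot be used.

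Your own justification, that the strong quantifiers of $S_i$ ``become weak after moving to the other side'', actually describes the correct sequents. In $A^S\vdash A$ and $B\vdash B^S$:
\begin{itemize}
\item the quantifiers that were weak in $S_i$ become strong in the unskolemized copy and receive eigenvariables $a_1,\dots,a_n$;
\item the Skolemized copy instantiates the same quantifiers with the same $a_1,\dots,a_n$;
\item the quantifiers that were strong in $S_i$ are now weak and are instantiated by $f(a_1,\dots,a_n)$.
\end{itemize}
Induction on the formula then gives cut-free proofs of size linear in the formula times the Skolem-term size. Cutting these against the given proof of $S_i$ yields polynomial proofs with cuts of $S_i^S$, as you intended.

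\textbf{With the orientation fixed, your proof is essentially the paper's.} The paper's proof is the one line: Skolemize according to Section~\ref{skol}, and deSkolemize cut-free proofs elementarily according to \cite{Hetzl}. Your contradiction argument for property~2 is exactly that second step.

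The only real difference is property~3. The paper Skolemizes the proofs themselves, replacing eigenvariables of end-sequent ancestors by Skolem terms. You instead leave the original proof untouched and compose it with polynomially many short auxiliary cuts. Your route is more self-contained and makes the polynomial bound immediate. It introduces new closed cut formulas (the end-sequent formulas), which the statement of Theorem~\ref{th2} does not forbid. Your fallback via Herbrand expansions is unnecessary once \cite{Hetzl} is cited.
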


\begin{proof}
    Skolemize the proofs in the worst-case sequence according to Section \ref{skol} and deSkolemize according to \cite{Hetzl}.\qed
\end{proof}

\begin{remark}
   % In the forth-coming proofs any worst case sequence can be used as a distinction between elementary and non-elementary.
    We could use any worst case sequence, e.g., that of Statman \cite{statman1979lower} or that of Pudl\'ak \cite{Pavel}. In Statman's case, the formalization of Statman result in $LK$ should be used\cite{baaz2011methods}.
\end{remark}

\begin{proposition}\label{size}
For every proof with cuts of size $\sigma$ 
 there is a cut-free proof of size at most $2\sigma^2)$ where $\forall \bar{x}$ closed is added to the end-sequent.
    \end{proposition}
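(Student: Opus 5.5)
The plan is to replace every cut by an implication introduction on the left, so that the cut formula turns into a tautological side formula $A\to A$ that is carried down to the end-sequent. The closing of this side formula is what the phrase ``$\forall \bar{x}$ closed is added to the end-sequent'' in the statement refers to. The resulting proof is then cut-free in $\LK$ itself, and a fortiori in $\LKp$ and $\LKpp$.

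\emph{Replacing a single cut.} Take a proof $\pi$ with cuts of size $\sigma$, and an uppermost cut
\[
\infer[Cut]{\Gamma \vdash \Delta}{\Gamma \vdash \Delta, A & A, \Gamma \vdash \Delta}.
\]
The two premises are exactly the premises of an $\to L$ inference with principal formula $A\to A$. So we replace the cut by
\[
\infer[\to L]{\Gamma, A\to A \vdash \Delta}{\Gamma \vdash \Delta, A & A,\Gamma \vdash \Delta}.
\]
Let $A = A(t_1,\dots,t_k)$, where $t_1,\dots,t_k$ are the maximal subterms containing free variables. Immediately after the $\to L$ step we apply $k$ inferences $\forall L$ to obtain the closed formula $\forall \bar{x}\,(A(\bar{x})\to A(\bar{x}))$.

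\emph{Propagating the side formula.} This closed formula is carried as an extra antecedent formula through the rest of the proof. At binary inferences below, the other premise is weakened so that the contexts match. If two branches carry the same closed formula, the duplicate can be contracted.

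\emph{Why the quantifier restrictions survive.} The only point needing care is the eigenvariable condition of the strong quantifier inferences below the former cut. A variable $a$ occurring in $A$ may be the characteristic variable of a $\forall R$ or $\exists L$ further down. Because we close $A\to A$ by $\forall L$ directly at the place of the cut, the formula carried downward contains no free variables at all. Hence it never violates substitutability or the usual $\LK$ eigenvariable restriction. The side-variable condition and regularity are untouched, since no strong inferences are added. Repeating this for all cuts, from top to bottom, yields a cut-free proof of
\[
\forall\bar{x}_1(A_1\to A_1),\dots,\forall\bar{x}_m(A_m\to A_m),\ \Gamma \vdash \Delta.
\]

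\emph{Size bound.} Each cut is replaced by one $\to L$ plus at most $|A_i|$ many $\forall L$ steps, and the number of inferences is at most $\sigma$. The main cost is that every sequent of the new proof carries the added closed formulas in its antecedent, together at most $O(\sigma)$ extra symbols, plus the weakening steps needed to equalize contexts. Since there are at most $\sigma$ sequents, the total is bounded by $\sigma + \sigma\cdot\sigma$, plus the linear overhead of the weakenings and the $\forall L$ steps, which stays below $2\sigma^2$.

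\emph{Main obstacle.} The step I expect to need the most care is this size accounting: one must check that the added formulas are counted once per sequent, rather than duplicated along branches. This is where contraction of identical closed side formulas at binary inferences is used, and it is what keeps the bound quadratic rather than exponential.
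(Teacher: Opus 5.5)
Your construction is the paper's own: replace each cut by $\to L$ with principal formula $A\to A$, then close under the free variables. Closing immediately by $\forall L$ at the site of the cut is exactly what makes ``close under free variables'' compatible with the eigenvariable conditions of strong inferences further down, which the paper leaves implicit.

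\textbf{The gap is in the size estimate.} The weakenings are not a linear overhead. The binary rules of this $\LK$ share contexts. So at every binary inference below a replaced cut, each premise must be weakened by the formulas accumulated in the other branch, one $WL$ per formula. Each such step creates a new sequent of size up to $\Theta(\sigma)$.

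\emph{A concrete counterexample to your bound.} Re-derive $R\vdash R$ $L$ times. The $j$-th time, cut on a fresh atom $P_j$, with premises $R\vdash R,P_j$ (weakening the previous $R\vdash R$) and $P_j,R\vdash R$ (weakening an axiom). Then $\sigma=O(L\log L)$. In your transformed proof, the right premise of the $j$-th former cut must be weakened by $P_1\to P_1,\dots,P_{j-1}\to P_{j-1}$. This creates $j-1$ new sequents of total size $\Theta(j^2)$, hence $\Omega(L^3)$ symbols overall, which exceeds $2\sigma^2$ for large $L$. The $\forall L$ steps likewise contribute $O(\sigma^2)$ symbols rather than a linear overhead; that is harmless, but it is not what you claim.

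Your ``main obstacle'' paragraph also misplaces the difficulty. Nothing is ever duplicated along branches, so contraction is not what controls the size. The cost is the number of weakening inferences times the size of the sequents they produce.

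\textbf{What your argument does establish.} It yields a polynomial (cubic) bound. That suffices for the later uses of the proposition: Theorem~\ref{th3} and the $\LE$ section need only elementary bounds. The paper's one-line proof does no size accounting at all. To reach the stated $2\sigma^2$, you have two options:
\begin{itemize}
\item do not count weakenings, as the paper does in the proof of Theorem~\ref{th.3}; or
\item organize the bookkeeping more economically. For example, carry a single accumulated conjunction of the added closed formulas, so that each binary inference costs only $O(1)$ extra sequents. Afterwards, split it in the end-sequent by inverting $\wedge L$ in the cut-free proof.
\end{itemize}
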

 \begin{proof}
     Replace cuts by implication left inferences, and close under free variables.\qed
 \end{proof}

This makes it possible to refine the worst-case sequent of Theorem \ref{th2}.

\begin{theorem}\label{th3}
   There is a specific family of sequents $\{S_i \}_{i<\omega}$ and specific $ \LK$-proofs such that they have the following properties:
    \begin{enumerate}
        \item the size of $S_i$ is elementary in $i$, and  $S_i$ contains only weak quantifiers;
        \item there is no bound on the size of their smallest cut-free $ \LK$-proofs that is elementary in $i$;
        \item the size of these proofs (with cuts), however, is elementary bounded in $i$;
        \item all cut formulas are closed and prenex.
    \end{enumerate}
 
\end{theorem}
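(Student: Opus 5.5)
We start from the worst-case sequence of Theorem~\ref{th2}, sequents $S_i$ of size polynomial in $i$ with proofs $\pi_i$ of polynomial size and no elementary bound on cut-free $\LK$-proofs. By the Corollary following Theorem~\ref{th2}, we may assume that the end-sequents contain weak quantifiers only. Concretely, we Skolemize the proofs with cuts structurally (Section~\ref{skol}) and then deSkolemize the resulting cut-free or cut-containing proofs by the elementary method of \cite{Hetzl}. This keeps the proof sizes elementary in $i$ and preserves the non-elementary lower bound up to an elementary function, since Skolemization of cut-free $\LK$-proofs costs only polynomially. This step yields items 1--3, with ``polynomial'' weakened to ``elementary''.

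It remains to make all cut formulas closed and prenex. First we close the cuts: each cut formula $A(\bar a)$ containing free variables is replaced by its universal closure $\forall \bar x\, A(\bar x)$. Any free variable in a cut formula that is not an eigenvariable below the cut can be replaced by a constant. For each cut we derive $\Gamma\vdash\Delta,\forall\bar x A(\bar x)$ by $\forall R$ from the left premise, renaming eigenvariables to enforce regularity. From the right premise we derive $\forall\bar x A(\bar x),\Gamma\vdash\Delta$ by $\forall L$ with the terms $\bar a$. This increases the size only linearly. Second, we prenex each closed cut formula $C$ into a prenex form $C'$. We replace the cut on $C$ by a cut on $C'$, using cut-free $\LK$-proofs of $C\vdash C'$ and $C'\vdash C$, combined through two auxiliary cuts on $C$. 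Those auxiliary cuts are on non-prenex formulas, however, so instead we push the transformation into the premises. A cleaner route is to derive $\Gamma\vdash\Delta,C'$ directly: cut the left premise against the cut-free proof of $C\vdash C'$. We then reduce this cut by cut-elimination, whose cost is elementary because the proof of $C\vdash C'$ consists only of quantifier-shift and propositional steps of bounded depth. The right premise is treated dually.

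The main obstacle is this prenexing step. Eliminating the auxiliary cut between an arbitrary proof of $\Gamma\vdash\Delta,C$ and the shift proof $C\vdash C'$ must not blow up non-elementarily. I would argue that the cut formula has logical complexity bounded by $|C|$, and the standard Gentzen reduction costs a tower whose height is the cut degree. Since the cut formulas in Orevkov's (or Statman's) sequence have polynomially bounded degree, this seems to give only an elementary bound per level, but that is exactly the danger. So I would first try an alternative that avoids cut-elimination altogether. We choose the worst-case sequence so that its cut formulas are already prenex up to propositional structure, as in Orevkov's sequence, where the cuts are on formulas of the form $\forall x\exists y\ldots$ over atoms. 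We then apply the closure step together with the elementary prenexing only to the short propositional connectives around the quantifier prefix, which is harmless. Finally, we check that the lower bound (item~2) is preserved, because the end-sequents are unchanged by these modifications apart from the elementary transformations of the first step.
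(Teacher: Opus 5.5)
Your proposal has two genuine gaps, one in each of the cut-normalisation steps.

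\emph{Closing the cuts.} After you replace the inessential free variables by constants, the variables $\bar a$ left in a cut formula $A(\bar a)$ are exactly eigenvariables of strong inferences below the cut. Such variables in general also occur in the context of the cut: for example, the cut has conclusion $E(a),\Gamma\vdash\Delta$, and $E(a)$ is later turned into $\exists x\,E(x)$ by $\exists L$. Applying $\forall R$ to $\Gamma\vdash\Delta,A(\bar a)$ then violates the eigenvariable condition. Renaming cannot repair this, because $\bar a$ is shared with the context of the right premise.

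The paper uses no strong inference at this point. By Proposition~\ref{size}, each cut is replaced by $\to L$. The resulting antecedent formula $A(\bar a)\to A(\bar a)$ is immediately closed by the \emph{weak} inference $\forall L$ to $\forall\bar x\,(A(\bar x)\to A(\bar x))$ and carried down to the end-sequent. This closed formula is then cut against an easy cut-free proof of $\vdash\forall\bar x\,(A(\bar x)\to A(\bar x))$.

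A smaller point about your first step: the lower bound is transferred by deSkolemizing putative short cut-free proofs of the Skolemized sequents via \cite{Hetzl}. The polynomial cost of Skolemizing cut-free proofs points in the wrong direction.

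\emph{Prenexing.} This is the substance of the theorem, and your proposal does not prove it. You abandon the route of eliminating a cut against a derivation of $C\vdash C'$ yourself. Your fallback rests on an assumption that does not hold: the lemma formulas of the standard worst-case sequences nest quantified formulas inside antecedents of implications, to a depth growing with $i$. In Statman's sequence, for instance, each $H_{i+1}$ contains $H_i$ in the antecedent of an implication under a universal quantifier. In any case, once the cuts are closed as above, the cut formulas have the form $\forall\bar x\,(A\to A)$, which is not prenex as soon as $A$ contains a quantifier. Even for a single shift, you do not say how to trade a cut on $C$ for a cut on $C'$ without an auxiliary non-prenex cut.

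The missing idea, which the paper takes from \cite{DBLP:journals/apal/BaazL99}, is to move the quantifiers of the cut formula outward one at a time \emph{without} any cut-elimination, by cutting out the infix quantifier inference. At every introduction point of an ancestor of the cut formula, the quantified subformula $Qx\,D(x)$ is cut against a tiny derivation, and the shifted quantifier is introduced with a fresh variable.

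For example, take $B\vee\forall x\,D(x)$ in the succedent of the left premise.
\begin{itemize}
\item Each premise $\Pi\vdash\Lambda,\forall x\,D(x)$ of a $\vee R$ introducing an ancestor is cut against $\forall x\,D(x)\vdash D(a)$.
\item All ancestors thereby become $B\vee D(a)$ for one fresh $a$, and a single $\forall R$ directly above the cut yields $\forall x\,(B\vee D(x))$.
\item On the antecedent side, each $\vee L$ introducing an ancestor is rebuilt from its premises, again using a cut on $\forall x\,D(x)$.
\end{itemize}
Choosing at each step a quantifier whose scope is already prenex keeps the new cuts prenex. Each step only adds a small derivation per introduction point, so the whole transformation is elementary.
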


\begin{proof}
    We use the cut-free proofs of Proposition \ref{th2}. We derive corresponding sequent  $\forall x A    \rightarrow A$ and cut them out. Now we make the cuts prenex. This is done by stepwise deriveing outside the shift of quantifiers and cutting the infix inference of quantifier as in Baaz, Leisch \cite{DBLP:journals/apal/BaazL99}.    
    Similar for all other connectives (for details see \cite{DBLP:journals/apal/BaazL99}).
    Therefore the size of the proofs in the worst case sequence remains elementary in $i$.
    \qed
\end{proof}

In Section \ref{quant}, we have shown that $\LKp $and $ \LKpp$ simulate quantifier shifts in the cut-free format. The following theorem follows as a consequence of this fact

\begin{theorem}(cf. Aguilera, Baaz\cite{DBLP:journals/jsyml/AguileraB19}) \label{mainqs}
There is no elementary function bounding the length of the shortest cut-free $ \LK$-proof of a formula in terms of its shortest cut-free $\lkqs$-proof.
\end{theorem}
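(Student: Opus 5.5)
We plan to derive Theorem \ref{mainqs} from the worst-case sequence of Theorem \ref{th3}. The idea is to turn each proof with cuts into a short cut-free $\lkqs$-proof of a slightly modified end-sequent. Cut-free $\LK$, by contrast, still needs a proof of non-elementary size for that sequent, because the modification can be undone cheaply inside cut-free $\LK$.

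Fix the family $S_i$ with proofs $\pi_i$ as in Theorem \ref{th3}. The end-sequents contain only weak quantifiers, and all cut formulas $C_1,\dots,C_k$ are closed and prenex. The proof then proceeds in three steps.

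First, apply Proposition \ref{size} to remove the cuts. Each cut on $C_j$ is replaced by an $\to L$ inference introducing $C_j \to C_j$ into the antecedent. This yields a cut-free $\LK$-proof of $C_1\to C_1,\dots,C_k\to C_k, S_i$ of size at most quadratic in the size of $\pi_i$, hence elementary in $i$. Because the $C_j$ are closed, no closure under free variables is needed.

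Second, the key step: transform the antecedent formulas $C_j \to C_j$ using deep quantifier shifts. We shift the quantifiers of $C_j \to C_j$ outward, alternately from the left and from the right occurrence of $C_j$. The result is a prenex formula $D_j$ in which the quantifiers of the two copies of $C_j$ are interleaved. We choose the order so that, once this formula is moved to the antecedent, each strong quantifier of one copy sits outside the matching weak quantifier of the other copy. Appending these shift inferences to the cut-free proof costs only a linear number of extra steps, since each shift is a single rule application. This gives a cut-free $\lkqs$-proof of $D_1,\dots,D_k,S_i$ of elementary size.

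Third, and this is where the main obstacle lies: we must show that $D_1,\dots,D_k,S_i$ still has no cut-free $\LK$-proof of elementary size. The danger is that the interleaved prenex forms $D_j$ might make cut-free proofs cheap again. We argue as follows.
\begin{itemize}
\item The chosen interleaving makes each $D_j$ cut-free $\LK$-provable on the right with a proof of size linear in $|C_j|$. The proof instantiates each weak quantifier with the eigenvariable of the matching strong quantifier, which the quantifier order permits.
\item Suppose $D_1,\dots,D_k,S_i$ had a cut-free $\LK$-proof of elementary size. Deleting the $D_j$ from the antecedent would then require cutting against these short proofs of $D_j$, which is not available cut-free.
\item Instead we plan to remove the $D_j$ directly, using the Herbrand-expansion machinery. $S_i$ has only weak quantifiers, so by the earlier Proposition on elementary extraction of Herbrand expansions we obtain a propositional expansion. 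Each instance of $D_j$ occurring in it is a tautology-like implication between instances of $C_j$. Resolving these instances yields a Herbrand expansion of $S_i$ alone, with elementary overhead, as in the extended first epsilon theorem. This would contradict Theorem \ref{th3}, item 2.
\end{itemize}
If the elementary bound in the last bullet fails, we would fall back on the fact that every $\lkqs$-derivable sequent is $\LKp$-derivable with elementary overhead (Theorem \ref{PropLKQS}). We would then compare directly with Orevkov's sequence, following \cite{DBLP:journals/jsyml/AguileraB19}.

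Finally, convert the sequent $D_1,\dots,D_k,S_i$ into a single formula by the standard translation, which changes sizes only linearly. This completes the argument.
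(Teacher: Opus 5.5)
Your reduction matches the paper's. You take the sequence of Theorem~\ref{th3}, replace each cut on $C_j$ by an antecedent formula $C_j\to C_j$, prenex these by pairs of deep quantifier shifts to get short cut-free $\lkqs$-proofs, and then need a non-elementary lower bound for cut-free $\LK$. The gap is that lower bound, which is the actual content of the theorem.

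First, the order of the interleaved prefix is decisive, and you state it backwards. In the antecedent, each \emph{weak} quantifier must dominate its matching \emph{strong} one, as in the paper's prefix $\forall x_0\exists x_1\forall x_2\exists x_3\cdots(\hat C_j\to\hat C_j)$. Only then does the Skolem term replacing a strong variable contain the matching weak variable as an argument. Your first bullet in fact needs this same order. On the right it becomes strong-outside-weak, which is what allows instantiating each weak quantifier with the matching eigenvariable. With ``strong outside matching weak'' in the antecedent, that instantiation violates the eigenvariable condition of $\LK$.

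Second, your removal of the $D_j$ is not an argument. The Proposition on Herbrand expansions requires weak quantifiers only, so you must first Skolemize, which is elementary for cut-free $\LK$. The resulting instances $\hat C_j(\vec t)\to\hat C_j(\vec s)$ are not tautologies, so they cannot simply be discharged, and the extended first epsilon theorem plays no role. The argument must use the shape of the shifted prefix: the unshifted $C_j\to C_j$ also yields ``implications between instances of $C_j$'', yet your own first step gives polynomial cut-free $\LK$-proofs of that sequent.

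The missing idea is a collapsing substitution. In each instance, for every position $j$ one of $t_j,s_j$ is a Skolem term $f(s_j,\vec r)$, resp.\ $f(t_j,\vec r)$, containing the other. Replace every such Skolem term, innermost first, by its matching argument. This is a well-defined renaming of atoms, so it preserves propositional validity, and it turns every instance into $\hat C_j(\vec u)\to\hat C_j(\vec u)$. These tautologies can then be deleted, leaving a valid Herbrand sequent of $S_i$ alone. That gives a cut-free $\LK$-proof of $S_i$ of elementary size, contradicting Theorem~\ref{th3}.

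Finally, your fallback via Theorem~\ref{PropLKQS} does not help. It bounds cut-free $\LKp$-proofs from above and yields no lower bound for cut-free $\LK$.
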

\begin{proof}
We will make use of the family of sequents $\{S_i\}_{i < \omega}$ described in Theorem \ref{th3} %and due to Statman \cite{statman1979lower}, and specific $ \LK$-proofs thereof. The sequents and the proofs themselves are not important for our proof. What is relevant is that they have the following properties:
%\begin{enumerate}
%\item the size of $S_i$ is polynomial in $i$;
%\item there is no bound on the size of their smallest cut-free $ \LK$-proofs that is elementary in $i$;
%\item the size of these proofs (with cuts), however, is polynomially bounded in $i$;
%\item all cut formulae are closed; we can also assume they are prenex by, e.g., \cite[Theorem 3.3]{DBaazLeitsch}.
%\item the total amount of quantifiers in cut formulae in each proof is linear in $i$.
%\item each free variable is the characteristic variable of at most one strong-quantifier inference.
%\end{enumerate}
%Let $\Gamma_i \vdash \Delta_i$ be one of the sequents.

%We transform the cuts into closed prenex form according to Baaz, Leitsch \cite{DBLP:journals/apal/BaazL99 }.

We are left with a cut-free proof $\pi_0$ whose end sequent is of the form:
\begin{align} \label{eqSpeedup1}
A_0 \to A_0, \hdots, A_m \to A_m, \Gamma_i \vdash \Delta_i.
\end{align}

Choose any occurrence of a quantifier in $A_0$ that is not in the scope of another quantifier. Since it appears once in the antecedent of $A_0 \to A_0$ and once in the consequent, it appears once with each polarity. Because $A_0$ is assumed to be prenex, we can apply two deep quantifier shifts, so that we are left with a proof $\pi_0'$ of the sequent:
\begin{align} \label{eqSpeedup1.5}
\forall x^0_0\, \exists x^0_1\, \big(A_0' \to A_0'\big), \hdots, A_m \to A_m, \Gamma_i \vdash \Delta_i,
\end{align}
Continue choosing any occurrence of a quantifier in $A_0'$ that is not in the scope of another quantifier (within $A_0'$) and applying deep quantifier shifts in pairs until we obtain a proof $\pi_1$ of the sequent:
\begin{align*} 
\forall x^0_0\, \exists x^0_1\, \hdots \big(\hat A_0 \to \hat A_0\big), \hdots, A_m \to A_m, \Gamma_i \vdash \Delta_i,
\end{align*}
where $\hat A_0$ is quantifier-free and the prefix of $\hat A_0 \to \hat A_0$ consists of alternating quantifiers. Repeat this procedure for each of the $A_i$ to obtain a proof $\pi_m$ where each formula $A_i \to A_i$ is replaced by an expression of the form:
\begin{align*}
\forall x^i_0\, \exists x^i_1\, \hdots \big(\hat A_i \to \hat A_i\big),
\end{align*}
where $\hat A_i$ is quantifier free and the quantifier prefix is alternating.
Let $\hat\Gamma^i \vdash \hat\Delta^i$ be the sequent:
\begin{align} \label{eqSpeedup2}
\forall x^0_0\, \exists x^0_1\, \hdots \big(\hat A_0 \to \hat A_0\big), \hdots, \forall x^m_0\, \exists x^m_1\, \hdots \big(\hat A_m \to \hat A_m\big), \Gamma_i \vdash \Delta_i.
\end{align}
Note that, since the size of the initial proof was polynomial in $i$, there were polynomially many quantifiers in the proof; hence, we added only polynomially many deep quantifier shifts, and so the size of the resulting proof $\pi_m$ of \eqref{eqSpeedup2} is bounded polynomially in $i$. Moreover, it is cut-free. Consequently, it suffices to show:

\begin{claim} \label{ClaimSU}
There is no elementary function bounding the size of the smallest cut-free $ \LK$-proofs of \eqref{eqSpeedup2}.
\end{claim} 
\begin{proof}
    
Let $\{\sigma_i\}_{i < \omega}$ be a sequence of such proofs. First, we transform it into a sequence of proofs of the Skolemizations of the sequents \eqref{eqSpeedup2}, so that for each sequent, each of the implications $\hat A_i \to \hat A_i$ remains only universally quantified. Then, by Herbrand's theorem, there are propositional proofs $\{\theta_i\}_{i < \omega}$ of the Herbrand sequents of \eqref{eqSpeedup2}, each of which is of the form:
\begin{align} \label{eqSpeedup3}
\bigwedge_{k}B^0_k, \hdots, \bigwedge_{k}B^m_k, \Gamma' \vdash \Delta'.
\end{align}
Moreover, the lengths of the Herbrand sequents are bounded exponentially by the lengths of $\{\theta_i\}_{i < \omega}$ (see \cite{DBaazLeitsch}). 
Each conjunct $B^p_k$ is a quantifier-free implication of the form:
\begin{align} \label{eqSpeedup4}
A(t_1, \hdots, t_l) \rightarrow A(s_1, \hdots, s_l),
\end{align}
for some terms $t_1, \hdots, t_l, s_1, \hdots, s_l$. The key point is that, in the process of Skolemizing the sequents \eqref{eqSpeedup2}, exactly one term in each pair $(t_j, s_j)$ was weakly quantified and the other (which was under the scope of the strong quantifier) was replaced by a Skolem function. Hence, either $t_j$ is of the form $f(s_j, \vec r)$ or $s_j$ is of the form $f(t_j, \vec r)$, for some tuple of terms $\vec r$.

Transform the sequent \eqref{eqSpeedup3} as follows: pick a pair $(t_j, s_j)$ of some disjunct $B^p_k$ such that the element thereof that was replaced by a Skolem term---say, $f(s_j, \vec r) = t_j$---does not have any other of $t_1, \hdots, t_l, s_1, \hdots, s_l$ as an argument. Such a term of course exists---it is the first term whose quantifier was  shifted in \eqref{eqSpeedup1.5}. Substitute $s_j$ for $t_j$ throughout the sequent. 

By repeating this process sufficiently-many times, each formula \eqref{eqSpeedup4} is transformed into a propositional tautology of the form $A(\vec t) \to A(\vec t)$. This means that the sequent
\begin{align*}
\vdash \bigwedge_{k}B^0_k, \hdots, \bigwedge_{k}B^m_k.
\end{align*}
is also a propositional tautology, whereby so too is
$\Gamma' \vdash \Delta'$,
from which follows that it is $ \LK$-provable. By \cite{Hetzl}, there is an $ \LK$-proof of the unskolemized sequent of length exponential in that of $\Gamma' \vdash \Delta'$. But this is impossible, since---by assumption---there exist no short proofs of the unskolemized sequent.
\end{proof}

%\noindent This establishes Theorem \ref{mainqs} and thus Theorem \ref{main}.
%\qed
\end{proof}

\begin{corollary}\label{cor.2}
There is a specific family of sequents $\{S_i \}_{i<\omega}$ and specific $\LKpp$-proofs such that they have the following properties:
    \begin{enumerate}
        \item the size of $S_i$ is elementary in $i$, and  $S_i$ contains only weak quantifiers;
        \item there is no bound on the size of their smallest cut-free $\LKpp$-proofs that is elementary in $i$;
        \item the size of these proofs (with cuts), however, is elementary bounded in $i$;
        \item all cut formulas are closed and prenex.
    \end{enumerate}\end{corollary}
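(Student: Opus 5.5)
The natural choice is the family of Theorem \ref{th3} itself: sequents $S_i$ containing only weak quantifiers, with $\LK$-proofs using closed prenex cuts whose size is elementary in $i$. Items 1, 3 and 4 then carry over at once. Every $\LK$-proof is an $\LKpp$-proof (see the Remark after the definition of $\LKpp$), so the $\LK$-proofs with cuts from Theorem \ref{th3} already witness items 3 and 4. Item 1 is literally item 1 of Theorem \ref{th3}.

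The work is in item 2: we must show that cut-free $\LKpp$-proofs of $S_i$ cannot be elementarily shorter than cut-free $\LK$-proofs. I would argue by contradiction. Suppose there are cut-free $\LKpp$-proofs $\varphi_i$ of $S_i$ whose size is bounded by an elementary function $e(i)$. I then build cut-free $\LK$-proofs of $S_i$ of elementary size, contradicting item 2 of Theorem \ref{th3}, in three steps.

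\textbf{Step 1.} By Theorem \ref{th.3}, turn $\varphi_i$ into a cut-free $\LKp$-proof $\varphi_i'$ of $S_i$. Its size is elementary in $|\varphi_i|$, hence elementary in $i$.

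\textbf{Step 2.} Exploit that $S_i$ has only weak quantifiers, so every strong-quantifier inference in $\varphi_i'$ introduces a formula that disappears before the end-sequent. Following Proposition \ref{propcontr}, push contractions to the bottom. Then extract a Herbrand expansion of $S_i$ in the manner of the Proposition on Herbrand expansions: replace contracted weak quantifier occurrences top-down by conjunctions and disjunctions of their instances. In this extraction the strong-quantifier inferences and their characteristic variables only produce formulas that are later erased, so their eigenvariables become ordinary free variables in the instance terms. The key check is that the sequent of instances obtained this way is still a propositional tautology. It is, because strong-quantifier inferences act only on subformulas that vanish and never alter the propositional skeleton of the weak instances. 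The expansion has size elementary in $|\varphi_i'|$.

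\textbf{Step 3.} Replace the remaining free variables by a constant $c$. The result is a valid propositional instance sequent, which has a propositional $\LK$-proof of at most exponential size. Reattach the weak quantifier inferences $\forall L$ and $\exists R$, contracting as needed. This yields a cut-free $\LK$-proof of $S_i$ of size elementary in $i$, the desired contradiction.

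The main obstacle is Step 2, namely justifying that Herbrand extraction works for $\LKp$-proofs, where strong inferences may be locally unsound. I would handle it by first applying the soundness translation of Theorem \ref{TheoremCorrectness} only locally: each unsound strong inference becomes an $\to L$ with an auxiliary formula $A(a)\to\forall xA(x)$. These auxiliary formulas are later removed by tracing them to the point where the critical formula disappears. If that tracing turns out to be delicate, the fallback is to go through the Andrews Skolemization route of Section \ref{skol}, which the paper states is elementarily connected to cut-free $\LKpp$-proofs. Since $S_i$ contains no strong quantifiers, Skolemization leaves $S_i$ unchanged, and Herbrand's theorem applies directly.
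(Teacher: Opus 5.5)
Your choice of family (that of Theorem~\ref{th3}), your treatment of items 1, 3 and 4 (every $\LK$-proof is an $\LKpp$-proof), and the contrapositive shape of item 2 all agree with the paper. The problem is Step 2. There you assert that each strong-quantifier inference ``introduces a formula that disappears before the end-sequent,'' and you rest the tautology check on that claim. The claim is false, because in a cut-free proof nothing disappears. Without cut, every formula occurrence has a descendant in the end-sequent: contraction merges occurrences but never erases them. The logical and quantifier rules also preserve whether a quantifier occurrence is strong or weak. So a strong-quantifier inference anywhere in a cut-free $\LKpp$-proof would leave a strong quantifier in the end-sequent.

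Since $S_i$ contains only weak quantifiers, a cut-free $\LKpp$-proof of $S_i$ contains \emph{no} strong-quantifier inferences at all. This is the same subformula argument the paper uses in Example~\ref{unsound-ex}. Weak-quantifier inferences carry no variable conditions, so such a proof is verbatim a cut-free $\LK$-proof of the same size. This is what the paper's one-line proof means by ``cut-free proofs of sequents with weak quantifiers only coincide for $\lk$, $\LKp$ and $\LKpp$.'' Item 2 then transfers from Theorem~\ref{th3} with no change in size.

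Consequently your Steps 1--3 are a detour, and the ``main obstacle'' you single out (locally unsound strong inferences during Herbrand extraction) is vacuous. Your proposed repair, tracing auxiliary formulas $A(a)\to\forall x\,A(x)$ ``to the point where the critical formula disappears,'' has nothing to trace, and the Andrews-Skolemization fallback is likewise unnecessary. Your chain can be salvaged once you notice the vacuity: Step 2 then reduces to the Herbrand-extraction proposition for proofs with weak quantifiers only, and every blow-up is elementary. As written, though, the decisive step is argued from a property that cut-free proofs do not have. Replace it with the polarity observation above.
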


\begin{proof}
    Note that cut-free proofs of sequents with weak quantifiers only coincide for $\lk,\LKp$ and $\LKpp$  as shown in Theorem \ref{th.3}.\qed
\end{proof}

\begin{corollary}
There is no elementary structural Skolemization of $\LKpp$-proofs.     
\end{corollary}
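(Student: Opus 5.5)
We argue by contradiction, composing transformations already established in the paper so that an elementary structural Skolemization would give elementary cut-free $\LK$-proofs where Claim~\ref{ClaimSU} says none exist. By an \emph{elementary structural Skolemization of $\LKpp$-proofs} we mean a procedure that sends every cut-free $\LKpp$-proof $\varphi$ of $\Gamma \vdash \Delta$ to a cut-free proof of $(\Gamma\vdash\Delta)^S$ of size elementary in $|\varphi|$. Since $(\Gamma\vdash\Delta)^S$ contains weak quantifiers only, it does not matter in which of $\LK$, $\LKp$, $\LKpp$ the Skolemized proof lives: by Theorem~\ref{th.3} (and Corollary~\ref{cor.2}), cut-free proofs of such sequents can be taken to be cut-free $\LK$-proofs at elementary cost.

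The witnesses are the sequents $\hat\Gamma^i \vdash \hat\Delta^i$ of \eqref{eqSpeedup2} from the proof of Theorem~\ref{mainqs}.

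\emph{Step 1.} Each $\hat\Gamma^i \vdash \hat\Delta^i$ has a cut-free $\lkqs$-proof $\pi_m$ of size polynomial in $i$. By Theorem~\ref{PropLKQS} it therefore has a cut-free $\LKp$-proof, hence a cut-free $\LKpp$-proof, of size at most double exponential in $i$, which is elementary in $i$.

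\emph{Step 2.} Apply the hypothetical Skolemization to this proof. This yields a cut-free $\LK$-proof of $(\hat\Gamma^i \vdash \hat\Delta^i)^S$ of size elementary in $i$, since a composition of elementary functions is elementary.

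\emph{Step 3.} Remove the Skolem functions. One route is to run the argument in the proof of Claim~\ref{ClaimSU} on this proof:
\begin{itemize}
\item extract a Herbrand sequent, exponentially bounded by \cite{DBaazLeitsch};
\item repeatedly substitute the innermost Skolem terms in the instances $A(\vec t)\to A(\vec s)$ until these become tautologies $A(\vec t)\to A(\vec t)$;
\item conclude that $\Gamma'\vdash\Delta'$ is a propositional tautology of elementary size;
\item deSkolemize with \cite{Hetzl} to obtain an elementary-size cut-free $\LK$-proof of the original $S_i$ from Theorem~\ref{th3}.
\end{itemize}
The other route is to apply the elementary deSkolemization of cut-free $\LK$-proofs of \cite{Hetzl} directly, which gives a cut-free $\LK$-proof of $\hat\Gamma^i \vdash \hat\Delta^i$ of size elementary in $i$. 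Either route contradicts Claim~\ref{ClaimSU}, or equivalently property~2 of Theorem~\ref{th3}.

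The main obstacle is Step~2: making sure that the hypothetical transformation may be assumed to output a cut-free proof in a format to which \cite{Hetzl} and the Herbrand-extraction bounds apply. This is where Theorem~\ref{th.3} and the coincidence noted in Corollary~\ref{cor.2} for weak-quantifier-only end-sequents are needed. We also need to check that the substitution step in Claim~\ref{ClaimSU} only uses the dependency structure of the \emph{structural} Skolem terms (each Skolem term has the partner weak term among its arguments). That structure is guaranteed by Definition~\ref{Def: St Skolem}, because the prefixes of \eqref{eqSpeedup2} alternate. By contrast, Andrews Skolemization would drop these dependencies, which is consistent with the paper's remark that an elementary translation does exist for Andrews Skolemization.
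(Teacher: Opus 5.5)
Your proof is correct and is essentially the paper's one-line argument made explicit. The short cut-free $\LKpp$-proofs of \eqref{eqSpeedup2} obtained via Theorem~\ref{PropLKQS}, once Skolemized, prove a sequent with weak quantifiers only and are therefore literally cut-free $\LK$-proofs (no strong quantifier inference can occur in them, so the ``obstacle'' you flag is immediate), and deSkolemizing with \cite{Hetzl} or rerunning Claim~\ref{ClaimSU} yields elementary cut-free $\LK$-, hence $\LKpp$-, proofs, contradicting Claim~\ref{ClaimSU}, Theorem~\ref{th3} or Corollary~\ref{cor.2}. Drop your closing aside, though: \eqref{eqSpeedup2} is prenex and the weak partner of each strong quantifier occurs in the matrix $\hat A_p \to \hat A_p$, so Andrews Skolemization keeps exactly the partner dependency that Claim~\ref{ClaimSU} relies on, and the contrast you draw does not exist for these sequents.
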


\begin{proof}
    As cut-free proofs with weak quantifiers only coincide for $ \LK$ and $\LKpp$ proofs, this would lead to a contradiction as $ \LK$ proofs are for sure $\LKpp$ proofs. \qed
\end{proof}

In consequence, we obtain as further results that there is non-elementary difference between resolution-refutations under structural and Andrews Skolemization (note that cut-free $ \LK$ and $\LKpp$ proofs can be elementarily transformed into ground resolution-refutation and vise versa \cite{BaazLolic}).

%In the sections before, we have seen that certain ways of Skolemization are associated with certain concepts of cut-free proofs in an elementary way. 

\section{More Efficient Proof Systems in the Cut-free Context}

By considering weaker and weaker side-variable conditions, we obtain calculi with shorter and shorter cut-free proofs (for example, a cut-free proof may simulate not only quantifier shifts but also quantifier distributions. In this case, the elementarily associated Skolemization can be found in Baaz, Lolic\cite{Effective}). Note that the sequent calculi always appear in pairs depending on weak and very weak regularity. \\

Epsilon calculus in its sequent formulation allows, however, for a non-elementary speed-up on cut-free $\LKp$ and $\LKpp$ proofs. \\

The language of the formulas in     $\LE$ is the language of $\varepsilon$-calculus.\\
The rules of     $\LE$ are the propositional rules of $ \LK$ expanded by the following quantifier rules:

\begin{itemize}
    \item the weak quantifier inferences $\exists_r$
\begin{prooftree}
	\AxiomC{$\Pi \vdash \Delta, A(t)$}
	\RightLabel{$\exists_r$}
	\UnaryInfC{$\Pi \vdash \Delta, A(\varepsilon_x A(x))$}
\end{prooftree}

	and $\forall_l$
\begin{prooftree}
	\AxiomC{$A(t), \Pi \vdash \Delta$}
	\RightLabel{$\forall_l$}
	\UnaryInfC{$A(\varepsilon_x \neg A(x)), \Pi \vdash \Delta$}
\end{prooftree}
 \item the strong quantifier inferences: \\[1ex]
$\exists_l$: substitution of $\varepsilon_x A'(x)$ for the characteristic variable\\[1ex]
$\forall_r$: substitution of $\varepsilon_x \neg A'(x)$ for the characteristic variable.

\end{itemize}

\begin{definition}
An epsilon term $\varepsilon_x A(x)$ overbinds if it is generated by the following inferences
$$A(f(e_1, \ldots, e_n), g(e_1, \ldots, e_n)) \to A(\varepsilon_x A(x, g(e_1, \ldots, e_n),g(e_1, \ldots e_n)),$$
\begin{prooftree}
	\AxiomC{$\Pi \vdash \Delta, A(t)$}
	\RightLabel{$\exists_r$}
	\UnaryInfC{$\Pi \vdash \Delta, A(\varepsilon_x A(x))$}
\end{prooftree}
\begin{prooftree}
	\AxiomC{$A(t), \Pi \vdash \Delta$}
	\RightLabel{$\forall_l$}
	\UnaryInfC{$A(\varepsilon_x \neg A(x)), \Pi \vdash \Delta$}
\end{prooftree}
\end{definition}

\begin{proposition} \label{prop.1}
Every $ \LK$-derivation possibly with cuts can be translated into an $\LE$-derivation of double exponential size.
\end{proposition}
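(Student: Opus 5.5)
The translation goes through a Herbrand-style expansion of the cut-free proof. The double exponential bound has two sources: cut elimination restricted to cut formulas with a controlled quantifier structure, and the growth of epsilon terms when characteristic variables are replaced by their epsilon terms.

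\textbf{Step 1: make the cuts harmless.} Starting from an $\LK$-derivation $\pi$ with cuts, I first make all cut formulas closed and prenex, following the technique cited in the proof of Theorem~\ref{th3} (Baaz, Leitsch). This costs only an elementary, in fact polynomial, increase.

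\textbf{Step 2: avoid cut elimination altogether.} The sequent formulation $\LE$ keeps the propositional rules of $\LK$, including cut. So I plan to translate $\pi$ directly, with cuts kept as propositional cuts, instead of first eliminating cuts, which would be non-elementary. I proceed inference by inference from the leaves down:
\begin{itemize}
\item every weak quantifier inference $\exists_r$ ($\forall_l$) becomes the corresponding $\LE$ rule;
\item the characteristic variable $a$ of a strong inference with critical formula $\exists x\,A(x)$ (resp.\ $\forall x\,A(x)$) is globally substituted by $\varepsilon_x A'(x)$ (resp.\ $\varepsilon_x\neg A'(x)$), where $A'$ is the epsilon translation of $A$.
\end{itemize}
I would first rename variables so that the proof is regular; weak regularity then makes this substitution well defined. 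The side-variable order $<_\pi$ is acyclic, so I can perform the substitutions along a linear extension of it, starting with the innermost, $<_\pi$-minimal characteristic variables. Each substitution then plugs already-formed epsilon terms into later ones. Cuts on closed prenex formulas $\exists x B(x)$ translate to cuts on $B(\varepsilon_x B(x))$; occurrences on both sides of the cut receive the same translation because the formula is closed.

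\textbf{Step 3: check soundness of the result.} Every translated strong inference becomes an instance where the premise already mentions the epsilon term. Its correctness reduces to the fact that the translated conclusion literally equals the premise after substitution, which is exactly how $\exists_l$ and $\forall_r$ are formulated in $\LE$. Weak inferences with a term $t$ are justified by the $\LE$ weak rules.

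\textbf{Step 4: the size bound, the main obstacle.} Nested substitutions make epsilon terms grow. An epsilon term for a variable of depth $k$ in $<_\pi$ may contain up to $k$ nested copies of earlier epsilon terms, and each copy is a full formula. Hence the size can grow like $\sigma^{\sigma}$ in depth, and with repeated duplication across a chain of length $\sigma$ it grows like $2^{2^{\sigma}}$. I would bound this by showing that a term's size at most squares at each level of the side-variable order, giving a double exponential overall. I expect this to be the hardest part, because of the interaction with overbinding. When an epsilon term is substituted inside another epsilon binder, variables must not be captured, and renaming can force additional copies. I would control this by bounding the number of distinct epsilon terms, which is at most the number of strong inferences, and the nesting depth, which is at most the length of a $<_\pi$-chain, and then counting symbols.
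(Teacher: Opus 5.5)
The core of your proposal, Steps 2--4, is the paper's argument. You translate inference by inference and keep cuts as propositional cuts of $\LE$. You send $\exists_r$ and $\forall_l$ to the $\LE$ weak rules. You replace each strong inference by substituting its epsilon term for the (regularized) characteristic variable, which turns that inference into an identity step. You bound the size by repeated squaring.

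What you add around this core damages the claimed bound. Step 1 is unnecessary, and the paper only asserts that making cuts closed and prenex has an \emph{elementary} cost (proof of Theorem~\ref{th3}). Your ``in fact polynomial'' is therefore unsupported. If this preprocessing is exponential, applying a double exponential translation to a proof of size $2^{\sigma}$ gives a triple exponential bound. Drop Step 1. The translation is a function of the formula, and the substitution of epsilon terms for characteristic variables is applied globally. So the two occurrences of any cut formula, open or closed, prenex or not, remain identical and the cut stays a cut. For the same reason, delete your opening sentence about a Herbrand expansion of the cut-free proof. It contradicts Step 2, and going through cut elimination would cost non-elementarily.

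Your size accounting in Step 4 also misses a source of growth. Epsilon terms arise not only from strong inferences but from every quantified formula, via $\exists_r$, $\forall_l$ and nested quantifiers inside formulas. So the number of distinct epsilon terms is not bounded by the number of strong inferences. Indeed, the translation of a single formula such as $\exists x_1\cdots\exists x_n\, P(x_1,\ldots,x_n)$ is already double exponential in $n$.

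The paper's phrasing covers this: every bound variable, not only the characteristic ones, is replaced (inside-out) by an epsilon expression no larger than the current derivation. There are linearly many such replacement steps in $\sigma$, and each at most roughly squares the size. This gives $\sigma^{2^{O(\sigma)}}$, which is the double exponential bound.

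Finally, the variable-capture problem you anticipate does not arise. Free and bound variables are distinct symbols, and the substituted epsilon terms contain no free bound variables. So no renaming and no extra copies are needed.
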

\begin{proof}
All inference steps are replaced by corresponding inference steps with exception of strong quantifier rules, which are replaced by substitution.
This means, that every bound variable is replaced by an epsilon expression at most of the size of the existing derivation.\qed
\end{proof}
\begin{remark}
Note that the usual form of epsilon proofs can be obtained by deleting the quantifier inferences of $\LE$, and replacing them by 
	\begin{prooftree}
		\AxiomC{$(\psi')$}
		\noLine
		\UnaryInfC{$\Pi' \vdash \Delta', A'(t)$}
		\AxiomC{$A'(\varepsilon_x A(x)) \vdash A'(\varepsilon_x A(x))$}
		\RightLabel{$\exists_r$}
		\BinaryInfC{$A'(t) \to A'(\varepsilon_x A(x)), \Pi' \vdash \Delta'$}
	\end{prooftree}
and
	\begin{prooftree}
		\AxiomC{$(\psi')$}
		\noLine
		\UnaryInfC{$A'(t), \Pi' \vdash \Delta'$}
		\AxiomC{$A'(\varepsilon_x \neg A(x)) \vdash A'(\varepsilon_x \neg A(x))$}
		\RightLabel{$\forall_l$}
		\BinaryInfC{$A'(\varepsilon_x \neg A(x)) \to A'(t), \Pi' \vdash \Delta'$}
		\doubleLine
		\UnaryInfC{$\neg A'(t) \to \neg A'(\varepsilon_x \neg A(x)), \Pi' \vdash \Delta'$}
	\end{prooftree}

\end{remark}
\begin{example}
Consider the sequent 
$$\exists y (A(y) \to \forall x A(x)).$$
Its shortest cut-free $ \LK$-derivation is

\begin{prooftree}
	\AxiomC{$A(a) \vdash A(a)$}
	\RightLabel{$w_r$}
	\UnaryInfC{$A(a) \vdash A(a), \forall x A(x)$}
	\RightLabel{$\to_r$}
	\UnaryInfC{$\vdash A(a), A(a) \to \forall x A(x)$}
	\RightLabel{$\exists_r$}
	\UnaryInfC{$\vdash A(a), \exists y (A(y) \to \forall x A(x))$}
	\RightLabel{$\forall_r$}
	\UnaryInfC{$\vdash \forall x A(x), \exists y (A(y) \to \forall x A(x))$}
	\RightLabel{$w_l$}
	\UnaryInfC{$A(b) \vdash \forall x A(x), \exists y (A(y) \to \forall x A(x))$}
	\RightLabel{$\to_r$}
	\UnaryInfC{$\vdash A(b) \to \forall x A(x), \exists y (A(y) \to \forall x A(x))$}
	\RightLabel{$\exists_r$}
	\UnaryInfC{$\vdash \exists y (A(y) \to \forall x A(x)), \exists y (A(y) \to \forall x A(x))$}
	\RightLabel{$c_r$}
	\UnaryInfC{$\vdash \exists y (A(y) \to \forall x A(x))$}
\end{prooftree}
When we translate the sequent to epsilon calculus we obtain
$$\vdash A(\varepsilon_y (A(y) \to A(\varepsilon_x \neg A(x)))) \to A(\varepsilon_x \neg A(x)),$$
and the translation of the above proof to $\LE$ is 

\begin{prooftree}
	\AxiomC{$A(\varepsilon_x \neg A(x)) \vdash A(\varepsilon_x \neg A(x))$}
	\RightLabel{$w_r$}
	\UnaryInfC{$A(\varepsilon_x \neg A(x)) \vdash A(\varepsilon_x \neg A(x)), A(\varepsilon_x \neg A(x))$}
	\RightLabel{$\to_r$}
	\UnaryInfC{$\vdash A(\varepsilon_x \neg A(x)), A(\varepsilon_x \neg A(x)) \to A(\varepsilon_x \neg A(x))$}
	\RightLabel{$\exists_r$}
	\UnaryInfC{$\vdash A(\varepsilon_x \neg A(x)), A(\varepsilon_y (A(y) \to A(\varepsilon_x \neg A(x)))) \to A(\varepsilon_x \neg A(x))$}
	\RightLabel{$(*) \ + w_l$}
	\UnaryInfC{$A(b) \vdash A(\varepsilon_x \neg A(x)), A(\varepsilon_y (A(y) \to A(\varepsilon_x \neg A(x)))) \to A(\varepsilon_x \neg A(x))$}
	\RightLabel{$\to_r$}
	\UnaryInfC{$\vdash A(b) \to A(\varepsilon_x \neg A(x)), A(\varepsilon_y (A(y) \to A(\varepsilon_x \neg A(x)))) \to A(\varepsilon_x \neg A(x))$}
	\RightLabel{$\exists_r$}
	\UnaryInfC{$\vdash A(\varepsilon_y (A(y) \to A(\varepsilon_x \neg A(x)))) \to A(\varepsilon_x \neg A(x)), A(\varepsilon_y (A(y) \to A(\varepsilon_x \neg A(x)))) \to A(\varepsilon_x \neg A(x))$}
	\RightLabel{$c_r$}
	\UnaryInfC{$\vdash A(\varepsilon_y (A(y) \to A(\varepsilon_x \neg A(x)))) \to A(\varepsilon_x \neg A(x))$}
\end{prooftree}
Note that the formula in the end-sequent above is $[\exists y (A(y) \to \forall x A(x))]^\varepsilon$, and for $(*)$: $\forall_r$ has been replaced by the substitution of $\varepsilon_x \neg A(x)$ for $a$.\\ [1ex]
The shortest cut-free derivation of the sequent
$$\vdash A(\varepsilon_y (A(y) \to A(\varepsilon_x \neg A(x)))) \to A(\varepsilon_x \neg A(x))$$
in $\LE$ is however

\begin{prooftree}
	\AxiomC{$A(\varepsilon_x \neg A(x)) \vdash A(\varepsilon_x \neg A(x))$}
	\RightLabel{$\to_l$}
	\UnaryInfC{$\vdash A(\varepsilon_x \neg A(x)) \to A(\varepsilon_x \neg A(x))$}
	\RightLabel{$\exists_r$}
	\UnaryInfC{$\vdash A(\varepsilon_y (A(y) \to A(\varepsilon_x \neg A(x)))) \to A(\varepsilon_x \neg A(x))$}
\end{prooftree}
\end{example}
%

%%%%%%%%%%%%%%%%%%%%5
%%%%%%%%%%%%%%%%%%%%%%%%
\begin{theorem}
    Consider      $\LE$-sequence which are translation of $ \LK$-sequence with weak quantifiers only. Cut-free proofs can be obtained by the extended first epsilon theorem.    
\end{theorem}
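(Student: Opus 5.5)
Let $\Gamma\vdash\Delta$ be an $\LK$-sequent containing weak quantifiers only, and let $(\Gamma\vdash\Delta)^\varepsilon$ be its $\varepsilon$-translation. Since there are no strong quantifiers, every quantified subformula $\exists x A(x)$ (positive) becomes $A(\varepsilon_x A(x))$, and every $\forall x A(x)$ (negative) becomes $A(\varepsilon_x\neg A(x))$. The plan is not to translate a given $\LK$-proof inference by inference, which Proposition \ref{prop.1} already handles at double-exponential cost and with cuts. Instead, I pass through the extended first epsilon theorem. That theorem says that if $(\Gamma\vdash\Delta)^\varepsilon$ follows from critical formulas, then the original sequent has a Herbrand expansion: terms $t_{ij}$ that are $\varepsilon$-free, or more precisely built from the epsilon terms that occur in the end-sequent. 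For these terms the propositional Herbrand sequent, obtained by replacing weak quantifiers inside-out by finite disjunctions and conjunctions of instances, is a tautology.

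The construction of the cut-free $\LE$-proof then has three steps.

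\textbf{Step 1.} Fix a propositional cut-free $\LK$-proof $\theta$ of the Herbrand sequent, whose instances are of the form $A(t_{ij})$.

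\textbf{Step 2.} Rebuild quantifiers bottom-up as in the proposition on extracting Herbrand expansions, read backwards. Each disjunction $\bigvee_i B(t_i)$ standing for $\exists x B(x)$ is handled as follows. First apply the $\LE$-rule $\exists_r$ to each disjunct separately, turning $B(t_i)$ into $B(\varepsilon_x B(x))$. Then merge the resulting identical copies with $c_r$. Dually, conjunctions on the left standing for $\forall x A(x)$ are handled with $\forall_l$ followed by $c_l$. The order is inside-out: innermost quantifiers first, so that when an outer $\varepsilon$-term is formed, the inner instances have already become the correct $\varepsilon$-terms. This matches the shape of $(\Gamma\vdash\Delta)^\varepsilon$, in which inner $\varepsilon$-terms are subterms of outer ones.

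\textbf{Step 3.} Check that no strong quantifier rule (substitution) is ever needed, because the end-sequent has none. It follows that the resulting proof is cut-free, and its size is elementary in $\theta$.

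The main obstacle is Step 2, namely the splitting of disjunctions and conjunctions. In $\theta$ the Herbrand disjunction $\bigvee_i B(t_i)$ need not be decomposed into its separate disjuncts before the $\exists_r$ steps. To handle this, I would first normalize $\theta$ so that all the instances $B(t_i)$ occur as separate formulas of the succedent (respectively antecedent). For succedent disjunctions this is done by invertibility of $\vee_r$ in classical $\LK$ with contraction, and dually for antecedent conjunctions via $\wedge_l$. This normalization costs at most polynomially.

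A second point is that the terms $t_i$ substituted for weakly quantified variables may themselves contain $\varepsilon$-terms coming from outer quantifiers. In that case the instance must literally be $B(t_i)$ with the outer $\varepsilon$-terms already in place, which is exactly what the extended first epsilon theorem provides: its Herbrand terms are built over the $\varepsilon$-terms of the translated end-sequent. Where the weak quantifiers are nested, I would argue by induction on quantifier depth that after the inner rules are applied, the substituted term coincides syntactically with the $\varepsilon$-term demanded by the translation.
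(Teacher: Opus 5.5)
Your proposal is essentially the paper's proof. The paper first retranslates the given $\LE$-derivation into critical formulas implying $(\Gamma\vdash\Delta)^\varepsilon$, which you take for granted. It then applies the extended first epsilon theorem to get a propositionally valid Herbrand sequent with an exponentially bounded cut-free proof, and reintroduces the weak quantifiers inside-out via $\exists_r$/$\forall_l$ plus contractions, giving a cut-free $\LE$-proof of size elementary in the epsilon-theorem bounds.

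There are two harmless slips. First, the Herbrand terms the theorem supplies are $\varepsilon$-free; the $\varepsilon$-terms of the end-sequent come only from your inner-first rule applications. Second, when a weak quantifier sits under other connectives, its instances become separate formulas only after you also invert those surrounding connectives, not just $\vee_r$ and $\wedge_l$ at the end-sequent.
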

\begin{proof}
    Retranslate      $\LE$-sequent into epsilon calculus and apply extended first epsilon theorem. This leads to a valid propositional sequent without quantifiers. For this sequent a cut-free proof can be bound exponentially. Weak quantifiers can be reintroduced and formulas can be contracted if necessary. Note that the bound of size is elementary in the bounds of the extended first epsilon theorem.    \qed    
    \end{proof}

%%%%%%%%%%%%%%%%%%%%%%%%%%%%%%%%5
%
\begin{theorem}
There is a sequence of cut-free $\LE$-proofs of $S_i$ such that
\begin{enumerate}
	\item the size of $S_i$ is elementary in $i$;
	\item the end-sequents $S_i$ are translations of first-order sequents $S'_i$ with weak quantifiers only;%, i.e. $S_i = [S'_i]^\varepsilon$;
	\item the size of these $\LE$-proofs is elementarily bounded in $i$;
	\item There is no elementary bound on the size of the smallest {$\LKp$}-proofs of $S'_i$.
\end{enumerate}
\end{theorem}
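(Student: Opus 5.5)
We take the worst-case family of Theorem \ref{th3}: sequents $S'_i$ containing weak quantifiers only, of size elementary in $i$, with $\LK$-proofs $\pi_i$ of elementary size whose cut formulas are closed and prenex, but with no elementary bound on cut-free $\LK$-proofs. By Corollary \ref{cor.2} (cut-free proofs of sequents with weak quantifiers only coincide for $\LK$, $\LKp$, $\LKpp$, via Theorem \ref{th.3}), there is likewise no elementary bound on cut-free $\LKp$-proofs of $S'_i$. This gives item 4. Setting $S_i := [S'_i]^\varepsilon$, items 1 and 2 follow. Since $S'_i$ has only weak quantifiers, the translation only introduces epsilon terms of the form $\varepsilon_x A(x)$ and $\varepsilon_x\neg A(x)$ nested along the quantifier structure, so the size stays elementary.

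The substantive part is item 3: producing cut-free $\LE$-proofs of $S_i$ of elementary size. First apply Proposition \ref{prop.1} to translate $\pi_i$ (with cuts) into an $\LE$-derivation of double-exponential size, hence elementary. In this derivation each strong quantifier inference becomes a substitution of an epsilon term for the characteristic variable. The remaining task is to remove the cuts, whose cut formulas are closed prenex formulas. The key observation is that in $\LE$ a cut on the translation $[C]^\varepsilon$ of a closed formula is a cut on a quantifier-free formula of the epsilon language. Everything is then propositional: the quantifier rules of $\LE$ merely rewrite instances $A(t)$ into $A(\varepsilon_x A(x))$ (or $A(\varepsilon_x\neg A(x))$) and carry no eigenvariable restrictions.

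The plan for cut elimination is therefore the following. Push every $\exists_r$ and $\forall_l$ step below the cuts, or equivalently replace each by the critical-formula form given in the Remark after Proposition \ref{prop.1}. This yields a purely propositional derivation, with cuts, of a sequent consisting of critical formulas $A(t)\to A(\varepsilon_xA(x))$ (and their duals) together with $S_i$. Propositional cut elimination then costs at most an exponential in the size of the derivation, since the cut formulas are of bounded logical depth relative to the proof size. Finally, each critical formula in the antecedent is discharged by re-inserting the corresponding $\LE$ quantifier rule: from $\Pi\vdash\Delta,A(t)$ we obtain $\Pi\vdash\Delta,A(\varepsilon_xA(x))$ directly, in the same way the Remark converts between the two presentations, only run in reverse, and contractions are inserted where needed. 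The total size remains elementary in $i$.

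The main obstacle is the step that reinserts the quantifier rules cut-free. A critical formula used inside a cut-free propositional proof may have its instance $A(t)$ occurring on a branch in a position that cannot simply be rewritten by an $\exists_r$ or $\forall_l$ step. This is exactly where $\LE$ should beat $\LKp$, since epsilon terms may occur freely, including in the end-sequent (compare the example above, where the $\LE$-proof is shorter than the $\LK$-proof). I would first try to handle it by observing that the $\LE$ rules act on arbitrary occurrences of the epsilon term. A critical formula in the antecedent can thus be eliminated by the local replacement $\to_l$ $\Rightarrow$ $\exists_r$ on the premise $\Pi\vdash\Delta,A(t)$, which at most doubles the size per critical formula. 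If that fails, the fallback is to invoke the preceding theorem (the extended first epsilon theorem) only for the cut formulas, exploiting their closed prenex form, so that the resulting bound is elementary because the cut formulas have bounded quantifier alternation relative to $i$.
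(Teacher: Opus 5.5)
Items 1, 2 and 4 are fine for your choice of $S'_i$ as the worst-case sequents of Theorem \ref{th3}. Item 3, however, is not established, and your route to it breaks at the step you yourself flag.

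After Proposition \ref{prop.1} the cut formulas are indeed quantifier-free in the $\varepsilon$-language, but the derivation is not propositional. The rules $\exists_r$ and $\forall_l$ are sound only modulo the critical formulas $A(t)\to A(\varepsilon_xA(x))$. A cut on $A(\varepsilon_xA(x))$ whose left premise comes from $\exists_r$ applied to $A(t)$ cannot be permuted upward, because the other premise contains $A(\varepsilon_xA(x))$, not $A(t)$.

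Your discharge step does not repair this. In the propositional proof, a critical formula in the antecedent is introduced by $\to_l$ from $\Pi\vdash\Delta,A(t)$ and $A(\varepsilon_xA(x)),\Pi\vdash\Delta$. After you rewrite the first premise by $\exists_r$, you still need a cut on $A(\varepsilon_xA(x))$ to join the two branches. Removing critical formulas is exactly what the extended first epsilon theorem does. Its cost is a tower whose height grows with the rank of the $\varepsilon$-terms, and after Proposition \ref{prop.1} that rank grows with the proof.

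Your fallback fails as well. In any worst-case family, the quantifier complexity of the cut formulas must be unbounded in $i$, since cuts of bounded quantifier complexity can be eliminated at elementary cost. So no argument from ``bounded alternation'' can yield an elementary bound.

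The missing idea is that the statement lets you choose $S'_i$, and the paper avoids cut elimination altogether. It takes $S'_i$ to be the worst-case sequent with $\forall\bar x\,(A^M(\bar x)\to A^M(\bar x))$ added to the antecedent for each cut matrix $A^M$. These added quantifiers are weak. The added formulas can be removed from a cut-free $\LK$-proof at double-exponential cost, by cutting against the trivially derivable $\vdash\forall\bar x\,(A^M\to A^M)$, whose matrix is quantifier-free. Together with the coincidence of cut-free $\LK$- and $\LKp$-proofs for weak-quantifier sequents, this preserves item 4.

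In the $\LE$-translation, each cut on $A^M(\bar e)$ is replaced by an $\to_l$ inference that places $A^M(\bar e)\to A^M(\bar e)$ on the left. Since $\forall_l$ in $\LE$ may abstract arbitrary terms, repeated $\forall_l$ turns this instance into $[\forall\bar x\,(A^M\to A^M)]^\varepsilon$. Contractions then merge the copies with the corresponding end-sequent formula. The result is a cut-free $\LE$-proof of $[S'_i]^\varepsilon$ of at most double-exponential, hence elementary, size. This exploits precisely what $\LK$ and $\LKp$ lack: in $\LE$ a closed prenex cut formula collapses to a quantifier-free instance of its matrix.
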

\begin{proof}
We choose a sequence of $ \LKp$-proofs from Corollary \ref{cor.2} which has non-elementarily growing cut-free proofs. We add $\forall \bar{x}(A(\bar{x})\rightarrow A(\bar{x}))$ to the antecedent of the sequents, where $A(\bar{x})$ are the matrices of the cuts (they are eliminable at double-exponential expenses in $\LK$ and cut-free proofs with weak quantifiers only coincide for $\LK$ and $\LKp$).
We translate the $ \LKp$-proofs with cut into epsilon calculus. This lengthens the proof at most double exponentially. 
In the $\LE$-proof we replace all cuts on $A$ by inferences of $A \to A$ on the left side.
Then we replace the epsilon terms in $A \to A$ by epsilon terms representing the universal quantifiers to obtain 
$$[\forall \overline{x} (A^M \to A^M)]^\varepsilon$$
on the left side. Finally, we contract with the existing translations of universal formulas in the end-sequent.\qed
\end{proof}
Note that the extended first epsilon theorem provides an upper bound for cut-free $\LE$-derivations:
The number of weak quantifier inferences determines the number of critical formulas.
From the Herbrand expansion according to the extended first epsilon theorem we obtain easily a cut-free $ \LK$ derivation.

%The question remains however, whether $\LE$-derivations with cuts can be translated into $ \LK$-derivations with arbitrary cuts in an elementary way.(posed by T. Arai)

\begin{example}
This example shows how $\LE$ reduces cuts of arbitrary complexity to universal cuts. Consider the following proof in $ \LK$

\begin{prooftree}
	\AxiomC{$P(a,b) \vdash P(a,b)$}
	\RightLabel{$\forall_l$}
	\UnaryInfC{$\forall y P(a,y) \vdash P(a,b)$}
	\RightLabel{$\forall_l$}
	\UnaryInfC{$\forall x \forall y P(x,y) \vdash P(a,b)$}
	\RightLabel{$\forall_r$}
	\UnaryInfC{$\forall x \forall y P(x,y) \vdash \forall y P(a,y)$}
	\RightLabel{$\exists_r$}
	\UnaryInfC{$\forall x \forall y P(x,y) \vdash \exists x \forall y P(x,y)$}
	\AxiomC{$P(c,d) \vdash P(c,d)$}
	\RightLabel{$\exists_r$}
	\UnaryInfC{$P(c,d) \vdash \exists y P(c,y)$}
	\RightLabel{$\exists_r$}
	\UnaryInfC{$P(c,d) \vdash \exists x \exists y P(x,y)$}
	\RightLabel{$\forall_l$}
	\UnaryInfC{$\forall y P(c,y) \vdash \exists x \exists y P(x,y)$}
	\RightLabel{$\exists_l$}
	\UnaryInfC{$\exists x \forall y P(x,y) \vdash \exists x \exists y P(x,y)$}
	\RightLabel{$cut$}
	\BinaryInfC{$\forall x \forall y P(x,y) \vdash \exists x \exists y P(x,y)$}
\end{prooftree}
The cut is replaced by an application of $\to_l$

\begin{prooftree}
	\AxiomC{$P(a,b) \vdash P(a,b)$}
	\RightLabel{$\forall_l$}
	\UnaryInfC{$\forall y P(a,y) \vdash P(a,b)$}
	\RightLabel{$\forall_l$}
	\UnaryInfC{$\forall x \forall y P(x,y) \vdash P(a,b)$}
	\RightLabel{$\forall_r$}
	\UnaryInfC{$\forall x \forall y P(x,y) \vdash \forall y P(a,y)$}
	\RightLabel{$\exists_r$}
	\UnaryInfC{$\forall x \forall y P(x,y) \vdash \exists x \forall y P(x,y)$}
	\AxiomC{$P(c,d) \vdash P(c,d)$}
	\RightLabel{$\exists_r$}
	\UnaryInfC{$P(c,d) \vdash \exists y P(c,y)$}
	\RightLabel{$\exists_r$}
	\UnaryInfC{$P(c,d) \vdash \exists x \exists y P(x,y)$}
	\RightLabel{$\forall_l$}
	\UnaryInfC{$\forall y P(c,y) \vdash \exists x \exists y P(x,y)$}
	\RightLabel{$\exists_l$}
	\UnaryInfC{$\exists x \forall y P(x,y) \vdash \exists x \exists y P(x,y)$}
	\RightLabel{$\to_l$}
	\BinaryInfC{$\forall x \forall y P(x,y), \exists x \forall y P(x,y) \to \exists x \forall y P(x,y)\vdash \exists x \exists y P(x,y)$}
\end{prooftree}
The translation of this derivation to $\LE$ is

\begin{prooftree}
	\AxiomC{$P(a,x_1(a)) \vdash P(a,x_1(a))$}
	\UnaryInfC{$P(x_2,x_1(x_2)) \vdash P(a,x_1(a))$}
	\UnaryInfC{$P(x_2,x_1(x_2)) \vdash P(x_3,x_1(x_3))$}
	\AxiomC{$P(x_3,d) \vdash P(x_3,d)$}
	\UnaryInfC{$P(x_3,d) \vdash P(x_3, x_4(x_3))$}
	\UnaryInfC{$P(x_3,d) \vdash P(x_5,x_4(x_5))$}
	\UnaryInfC{$P(x_3,x_1(x_3)) \vdash P(x_5,x_4(x_5))$}
	\BinaryInfC{$P(x_2,x_1(x_2)), P(x_3,x_1(x_3)) \to P(x_3,x_1(x_3)) \vdash P(x_5,x_4(x_5))$}
	\UnaryInfC{$P(x_2,x_1(x_2)), P(x_3,x_6(x_3)) \to P(x_3,x_6(x_3)) \vdash P(x_5,x_4(x_5))$}
	\UnaryInfC{$P(x_2,x_1(x_2)), P(x_7,x_6(x_7)) \to P(x_7,x_6(x_7)) \vdash P(x_5,x_4(x_5))$}
\end{prooftree}
Note that 
$$[\forall x \forall y (P(x,y) \to P(x,y))]^\varepsilon) = P(x_7,x_6(x_7)) \to P(x_7,x_6(x_7)),$$
and 
\begin{eqnarray*}
x_1(a) &=& \varepsilon_y \neg P(a,y),\\
x_2 &=& \varepsilon_x \neg P(x,x_1(x)),\\
x_3 &=& \varepsilon_x P(x,x_1(x)),\\
x_4(c) &=& \varepsilon_x P(c,x),\\
x_5 &=& \varepsilon_x P(x,x_4(x)),\\
x_6(d) &=& \varepsilon_y \neg P(d,y),\\
x_7 &=& \varepsilon_x \neg (P(x,x_6(x)) \to P(x,x_6(x)). \\
\end{eqnarray*}
\end{example}

\begin{remark}
    The proofs of the theorems above show that no cut-elimination with induction on the complexity of the cut formulas is feasible. In consequence, the first epsilon theorem is an argument of a completely different category.
\end{remark}

%%%%%%%%%%%%%%%%%%%%%%%%%%%%%
This section demonstrates that $\LKp$, $\LKpp$ incorporate only some properties of epsilon calculus, namely the occurrence behavior of epsilon terms, however, the full system is stronger then these calculi by the the possibility to overbind epsilon terms by other epsilon terms. This property is however, in contrast to usual unification so no correspondingly resolution-refutation system is known.

\section{Suitable Quantifier Inferences for Non-classical Logics}

Non-classical versions of $\LKp, \LKpp$ are not sound if the logics do not contain all  classical quantifier shift principles. In addition, they do not admit cut-elimination as the following example shows:

\begin{example}\label{unsound-ex}
Consider the following $\LJpp$-proof: %\footnote{Since only one formula appears on the right-hand side of each sequent. \rah{I would also delete this, it's obvious.}} 
% with cuts that includes existence property:
  \begin{align} \label{example2}
\infer[]{\vdash \exists x\, \big( A(x) \rightarrow A(f(x))\big)}{
\infer[]{\vdash A(a) \to A(f(a))}{
\infer[]{A(a) \vdash A(f(a))}{
	\infer[]{A(a) \vdash \forall x\, A(x)}{A(a)\vdash A(a)}
	&
	\infer[]{\forall x\, A(x) \vdash A(f(a))}{A(f(a)) \vdash A(f(a))}
}
}
}
\end{align}

%Since only one formula appears on the right-hand side of each sequent, this is an $\LJpp$-proof.
If there were a cut-free proof of
\begin{equation} \label{eqcutelimination}
\vdash \exists x\, \big( A(x) \rightarrow A(f(x))\big),
\end{equation} 
by the subformula property, this derivation would consist entirely of subformulas of \eqref{eqcutelimination} and would contain no strong quantifier inferences. Therefore, it would already be an $\LJ$-proof. This is impossible, as \eqref{eqcutelimination} is not intuitionistically valid. %\rah{Why isn't it intuitionistically valid?} \mar{$ \exists x\, \big( A(x) \rightarrow A(f(x))\big)$ is not valid in $\iqc$, therefore it is an example that $\LJpp$ derives unsound results in $\iqc$ with cuts}
\end{example}

\begin{remark}
This example demonstrates, in addition,  that Gentzen-style cut-elimination is not feasible for $\LKp$ and     $\LKpp$, as Gentzen-style cut-elimination transforms intuitionistic derivation automatiacally in intuitionistic derivations.
\end{remark}

In this section, we will concentrate on intuitionistic logic and intermediate extensions. In intuitionistic logic, the following three quantifier shifts are not valid:

%LJ+ and LJ++ the intuitionistic versions of LK+ and LK++ are not sound. The reason is that  [insert]

\begin{itemize}
    \item
 Constant Domain (CD):
    \[\forall x (A(x) \vee B) \to \forall x A(x) \vee B\]
    \item
   Independence of Premise (IP) 
   \[(B \to \exists x A(x)) \to \exists x (B \to A(x))\]
\item
    Quantifier Switch (SW):
    \[(\forall x A(x) \to B) \to \exists x (A(x) \to B)\]
   % If we consider the commuting of quantifiers $\forall, \exists$ with the connectives $\wedge, \vee, \to$, then the quantifier shift formulas are the only ones not valid in $\IQC$.
\end{itemize}

We denote by $\LJp$ and $ \LJpp$ intuitionistic version of $\LKp, \LKpp$.  $\LJp$ and $ \LJpp$ are not sound for intuitionistic logic.

$\LJp$ and $ \LJpp$ become sound if the lacking second and third quantifier shift instances are added as axioms. (The first shift follows from the other two see Baaz, Gamsakhurdia, Iemhoff, Jalali\cite{Skolemization-apal}). This does not change anything 
in the non-eliminability of cuts in any format.

Note that by the example above, no cut-free proof system with the subformula property is possible. The concept, however, can be used to characterize intermediate logics with Andrews Skolemization and deSkolemization.% (and in consequence structural Skolemization and deSkolemization). 

An intermediate logic is considered to be given by finitely many axiom schemes and the axiom schemes and rules of intuitionistic logic.

\begin{theorem} (See \cite{Skolemization-apal})
    An intermediate first-order logic admits structural and Andrews Skolemization if all quantifiers shifts are available.
\end{theorem}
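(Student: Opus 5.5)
We would prove the statement by passing through the intuitionistic analogue of $\LKp$/$\LKpp$, equipped with the quantifier shift axioms. Fix an intermediate logic $L$, given by finitely many schemes over intuitionistic logic, that proves all classical quantifier shifts; by the cited result it suffices that (IP) and (SW) are available, since (CD) follows from them. The statement has two directions, soundness and completeness of Skolemization. We treat Andrews Skolemization in detail; structural Skolemization is handled by the same argument with larger dependency sets.

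\emph{Skolemization.} Given an $L$-derivation of $\Gamma \vdash \Delta$, we first move to a sequent calculus for $L$ (intuitionistic $\LJ$ plus the axiom schemes as initial sequents, with cuts). Using the available shifts we then prenex the end-sequent and all axiom instances. We strengthen the axiom instances into their universal closures, so that their strong quantifiers no longer interact with the end-sequent. The Skolemization of the end-sequent is then obtained in two steps. First, we cut with the Skolem axioms $A(\vec x, f(\vec x)) \to \forall y A(\vec x, y)$, respectively $\exists y A(\vec x,y) \to A(\vec x, f(\vec x))$. Second, we replace each characteristic variable by the corresponding Skolem term. This is exactly the epsilon-style translation described in the introduction, where characteristic variables become terms. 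For Andrews Skolemization we let $f$ depend only on the weakly bound variables that actually occur in the subformula. We must check that the Skolem axioms can be discharged. Since all shifts are valid in $L$, each such formula is equivalent, after shifting, to $\exists y(\ldots)$ of the form $\exists y(A(y)\to \forall x A(x))$, which is derivable using (SW) and (IP).

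\emph{DeSkolemization.} This is the harder direction. Here we mirror the $\LKpp$ deSkolemization procedure sketched in Section \ref{skol}. Take an $L$-proof of the Skolemized sequent and read it as an $\LJpp$-style proof with cuts. The steps are: replace Skolem terms by variables indexed by the terms; order the Skolem terms by size; and reintroduce strong quantifiers at the topmost possible positions. The side-variable condition holds because of the subterm ordering, and very weak regularity holds because equal Skolem terms yield the same critical formula. The result is an $\LJpp$-proof with cuts. We then apply the soundness argument of Theorem \ref{TheoremCorrectness} in its intuitionistic form. Each unsound strong inference becomes an $\to L$ inference whose side formula is $A(a)\to\forall x A(x)$ or $\exists x A(x)\to A(a)$, and these are discharged by cuts against $\vdash \exists y(A(y)\to\forall xA(x))$ and $\vdash\exists y(\exists xA(x)\to A(y))$. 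Both of these are $L$-derivable precisely because the shifts are present.

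\emph{Main obstacle.} We expect the main difficulty in the deSkolemization step. Intuitionistic sequents allow only one formula in the succedent, so reintroducing a $\forall$ on the right at the topmost position can clash with the single-succedent restriction. Moreover, the contractions that Proposition \ref{propcontr} lets us postpone may require multiple succedent formulas. Our first attempt is to work in a multi-succedent intuitionistic calculus in which $\to R$ and $\forall R$ are restricted to singleton contexts. In such a calculus the shift axioms should compensate for the restriction, since a (CD) instance is needed exactly when a $\forall R$ is applied with side formulas in the succedent. If that fails, we fall back on the Hilbert-style epsilon translation. There the full deduction theorem holds, so the critical formulas can be treated as extra hypotheses, each of which is then closed under $\exists$ using the shift principles.
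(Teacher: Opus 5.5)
Your deSkolemization step has a genuine gap. You take an $L$-proof of the Skolemized sequent, with cuts and with instances of the axiom schemes of $L$ as initial sequents, and apply the indexed-variable procedure of Section~\ref{skol} to it directly. That procedure postpones contractions via Proposition~\ref{propcontr}, replaces each Skolem term $s$ by a variable $x_s$, and inserts the strong inference at the topmost ancestor of the end-sequent formula. It only works on a cut-free proof, where every Skolem symbol lies either in an end-sequent formula or in a closed term $f(\vec t)$ that traces back to the end-sequent. In your proof, a cut formula or an axiom instance may contain, say, $\forall z\,C(z,f(z))$. Then there is no variable to put in place of $f(z)$ and no strong inference to restore. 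Nor can you remove such formulas by cut elimination: in ``$\LJ$ plus axiom schemes as initial sequents'', cuts on axiom instances are not eliminable.

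The missing idea is the one the paper uses in its later theorem that a logic deriving all shifts is sound for Andrews Skolemization and deSkolemization (the one-line proof printed under this statement argues the converse). First move the finitely many axiom instances $A_1,\dots,A_n$ into the antecedent, so that $A_1,\dots,A_n\vdash A^S$ holds in pure $\LJ$. Pass by substitution to $A_1^s,\dots,A_n^s\vdash A^S$ and eliminate cuts in $\LJ$. Only then deSkolemize the cut-free $\LJ$-proof, which is in particular an $\LJpp$-proof, into a cut-free $\LJpp$-proof of $A_1,\dots,A_n\vdash A$. From there your last step is exactly the paper's: discharge $A(a)\to\forall xA(x)$ and $\exists xA(x)\to A(a)$ against the $L$-derivable $\exists y(A(y)\to\forall xA(x))$ and $\exists y(\exists xA(x)\to A(y))$, as in Theorem~\ref{TheoremCorrectness}. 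The single-succedent issue you flag is not the crux. In single-succedent $\LJ$, $\forall R$ on $\Gamma\vdash A(x_s)$ is always available, and the paper reintroduces quantifiers as in the classical case. So your hedged fallbacks are unnecessary.

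Your Skolemization direction is also mis-argued, although the claim is easy. The Skolem axioms $A(\vec x,f(\vec x))\to\forall y\,A(\vec x,y)$ and $\exists y\,A(\vec x,y)\to A(\vec x,f(\vec x))$, with $f$ fresh, are not derivable in $L$; they are not even classically valid. They are also not equivalent to drinker-type formulas $\exists z(A(\vec x,z)\to\forall y\,A(\vec x,y))$, and only the latter follow from (SW) and (IP). These axioms are what would take you from $A^S$ back to $A$, and dispensing with them is the whole content of deSkolemization.

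Skolemization needs no axioms at all. Deleting a strong quantifier and substituting a Skolem term is an instantiation: $\forall y\,B\to B(f(\vec x))$ at positive positions and $B(f(\vec x))\to\exists y\,B$ at negative ones. Hence $A\to A^S$ is intuitionistically provable by monotonicity; this is the paper's ``Skolemization is sound by substitution''. Your prenexing detour is moreover harmful, since prenex operations can change the scopes of weak quantifiers and hence both the structural and the Andrews Skolem form.

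The same confusion affects your epsilon fallback. Critical formulas contain epsilon terms and cannot simply be existentially closed. In the paper, conservativity of critical formulas is a consequence of deSkolemization, not a route to it.
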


\begin{proof}
    Note that deSkolemization is not unique. Therefore, all quantifier shifts can be generated by Skolemizing and deSkolemizing.\qed
\end{proof}

Note that the deSkolemization process is inherently non-deterministic, yet equivalent, reconstructed proofs may result depending on how Skolem functions are replaced and quantifiers are reintroduced. This flexibility does not affect correctness or completeness, as the logical structure and dependencies are preserved in all cases.

%Consider the shift rule $\CD= \forall x (A(x) \vee B) \to \forall x (A(x) \vee B)$
%in Skolemization is \[
%\CD^S = \forall x (A(x) \vee B) \to A(c) \vee B,
%\]
%and as deSkolemization, we may well choose 

%\[\forall x (A(x) \vee B) \to \forall x A(x) \vee B\]
%Similar for the other shift rules. Therefore, all quantifier shift rules follow from the soundness of Skolemization and deSkolemization. 

\begin{theorem}
Let $L$ be an intermediate logic which is sound for Andrews Skolemization and deSkolemization, then $L$ derives all classical quantifier shifts.
\end{theorem}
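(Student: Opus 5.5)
We take the quantifier-shift instances that are not intuitionistically valid, namely CD, IP and SW. For each we show that, in any logic $L$ sound for Andrews Skolemization and deSkolemization, the shift follows from a Skolemized formula that is already intuitionistically provable. The shifts valid in $\iqc$ need no treatment. By the cited result of Baaz, Gamsakhurdia, Iemhoff, Jalali, CD already follows from IP and SW, so it is enough to obtain IP and SW. We handle CD directly as well, as a check.

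For SW, $(\forall x A(x)\to B)\to \exists x(A(x)\to B)$, the only strong quantifier is the $\forall x$ in the antecedent of the premise. It occurs negatively inside an implication that sits in the antecedent of the main implication, so overall it is positive and therefore strong. No weak quantifier governs it. Its Andrews Skolemization replaces $x$ by a fresh constant $c$, giving
\[
(A(c)\to B)\to \exists x(A(x)\to B).
\]
This is intuitionistically provable by $\exists$-introduction with witness $c$. Soundness of deSkolemization in $L$ then makes the original SW instance $L$-provable. Since the instance is arbitrary, all instances of SW are derivable.

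For IP, $(B\to\exists x A(x))\to\exists x(B\to A(x))$, the $\exists x$ in the premise is strong, and its Andrews Skolemization gives $(B\to A(c))\to\exists x(B\to A(x))$. This is intuitionistically provable with witness $c$, and deSkolemization yields IP. For CD, $\forall x(A(x)\vee B)\to\forall xA(x)\vee B$, the strong quantifier is the $\forall x$ in the consequent, and Skolemizing gives $\forall x(A(x)\vee B)\to A(c)\vee B$. This is intuitionistically valid by instantiating $x$ with $c$, so CD follows too. Shifts with $\wedge$ and the remaining directions are either intuitionistically valid or dual, and are treated the same way. Parameters $\vec y$ are handled by adding outer universal closures. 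These closures are weak in the relevant position, but under Andrews Skolemization the Skolem function depends only on the variables that actually occur free in the dominated subformula, so one gets $c(\vec y)$, and the witness argument is unchanged.

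The step I expect to be the main obstacle is fixing exactly what ``sound for deSkolemization'' asserts. My plan reads it as: if $L\vdash F^{\SA}$ then $L\vdash F$. The proof needs this for arbitrary closed formulas with strong quantifiers nested under implication, so the first thing I would do is check that the polarity computations above match Definition~\ref{Def: And Skolem}. I would also make sure the Skolem symbol is fresh, so that deSkolemization applies. This is the point where non-uniqueness of deSkolemization, which is noted after the previous theorem, is used: the same Skolemized formula deSkolemizes both to a trivially valid formula and to the shift instance, and soundness then forces $L$ to prove the shift.
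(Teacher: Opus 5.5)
Your proposal is correct and matches the paper's proof: you Skolemize CD, IP (the paper's $\mathrm{ED}$) and SW, replacing the strong quantifier by a fresh constant $c$. The resulting formulas are intuitionistically valid, hence $L$-provable, and soundness of deSkolemization then yields the shifts themselves. One harmless slip: the outer universal closure over the parameters of a formula proved on the right is strong, not weak. It is therefore Skolemized to constants, the inner Skolem term is again a constant, and the same witness argument goes through unchanged.
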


\begin{proof}
Skolemization is sound by substitution, therefore we only have to prove that
$L \vdash\CD $, $L \vdash\ED $, $L \vdash\SW $.

$\bullet$ The case $L \vdash\CD $. It is easy to see that the only strong quantifier in $\CD= \forall x (A(x) \vee B) \to \forall x (A(x) \vee B)$ is the universal quantifier in the succedent of the implication, which is not in the scope of any weak quantifiers. Therefore,
\[
\CD^S = \forall x (A(x) \vee B) \to A(c) \vee B,
\]
where $c$ is a fresh constant symbol not occurring in $\CD$. It is easy to see that $\iqc \vdash \CD^S$. Thus, $L \vdash \CD^S$, and by the assumption we have $L \vdash \CD$.

$\bullet$ For the other two cases, with similar reasoning, we have 
\begin{center}
$\ED^S= (B \to A(c)) \to \exists x (B \to A(x))$, \\ 
$\SW^S= (A(c) \to B) \to \exists x (A(x) \to B)$,
\end{center}
where $c$ is a fresh constant not appearing in $A(x)$ and $B$. As $\ED^S$ and $\SW^S$ are valid in $\iqc$, thus $L \vdash \ED$ and $L \vdash \SW$.
%Let $\mathsf{IQC} \subseteq L \subseteq \mathsf{CQC}$. If $L \vdash A^S$ then $L \vdash A$ holds, i.e.,  $L$ enjoys \emph{deSkolemization}, then $L \vdash \CD$, $L \vdash \SW$, and $L \vdash \ED$. We prove the lemma for each case. For $\CD$, we have \[\CD^S= \forall x (A(x) \vee B) \to A(c) \vee B,\]where $c$ is a fresh constant not appearing in $A(x)$ and $B$. Since $L \vdash \forall x \phi(x) \to \phi(c)$, for any formula $\phi(x)$, we have $L \vdash \CD^S$, when $\phi(x)$ is substituted by $A(x) \vee B$. By the hypothesis, we have $L \vdash \CD$.

%For $\SW$, we have \[\SW^S = (A(c)\to B)\to \exists x (A(x) \to B),\]where $c$ is a fresh constant not appearing in $A(x)$ and $B$. As $L \vdash \phi(c) \to \exists x \phi(x)$, for any formula $\phi(x)$, we have $L \vdash \SW^S$, when $\phi(x)$ is substituted by $A(x) \to B$. By the hypothesis, we have $L \vdash \SW$.

%For $\ED$, we have \[\ED^S = (B \to A(c)) \to \exists x (B \to A(x)),\]where $c$ is a fresh constant not appearing in $A(x)$ and $B$. As $L \vdash \phi(c) \to \exists x \phi(x)$, for any formula $\phi(x)$, we have $L \vdash \ED^S$, when $\phi(x)$ is substituted by $A(x) \to B$. By the hypothesis, we have $L \vdash \ED$.
Note that the deSkolemization of quantifier shift principles is the same for structural and Andrews Skolemization.\qed
\end{proof}
We now intend to show the soundness of the deSkolemization in the presence of all quantifies shift rules, i.e., lacking ones.

\begin{theorem}
    Let $L$ be an intermediate logic deriving all classical quantifier shift rules then $L$ is sound for Andrews Skolemization and deSkolemization.
\end{theorem}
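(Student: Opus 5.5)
The theorem has two directions, and I would treat them separately. \emph{Skolemization soundness} says that $L \vdash A$ implies $L \vdash A^{\SA}$. \emph{DeSkolemization} says that $L \vdash A^{\SA}$ implies $L \vdash A$.

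\textbf{Skolemization.} The plan is to prove, by induction on the number of strong quantifiers, that $A \to A^{\SA}$ is derivable already in $\iqc$. Strong quantifiers occur positively, so eliminating a strong $\forall y$ in positive position amounts to instantiating $\forall y B(\bar x, y)$ to $B(\bar x, f(\bar x))$. Dually, a strong $\exists y$ in negative position is handled by $\exists$-introduction read contravariantly. Monotonicity of positive (resp. negative) contexts in $\iqc$ then lifts the local implication to the whole formula. The weak quantifiers binding $\bar x$ are passed through by the usual intuitionistic rules. The only point to note is that Andrews Skolem functions depend only on the weak variables that actually occur in the dominated subformula. This is harmless: the instantiation is still a substitution instance, so no quantifier shift is needed in this direction.

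\textbf{DeSkolemization.} Here I would use the quantifier shifts in an essential way, in three steps.

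First, using the shift principles $\CD$, $\ED$ (independence of premise) and $\SW$ together with the intuitionistically valid shifts, every formula is $L$-equivalent to a prenex form. Since Andrews Skolemization of a formula coincides with the structural Skolemization of a suitably \emph{miniscoped} prenex form, I would first push quantifiers inward as far as possible. Each strong quantifier then lies in the scope only of those weak quantifiers whose variables occur in its matrix. After that, prenex the formula in an order that keeps these dependencies minimal.

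Second, I would show that $L \vdash A^{\SA}$ yields, via the derivation in $L$ (which is a Hilbert-style system), a derivation in a sequent formulation $\LJpp$ extended with the shift axioms. Concretely, I would replace Skolem terms $f(\bar t)$ by fresh variables indexed by the terms, exactly as in the deSkolemization procedure of Section \ref{skol} for $\LKpp$. I would then reintroduce strong quantifiers top-most for maximal Skolem terms first. The dependency structure of the Skolem terms guarantees the side-variable condition, and very weak regularity holds because equal Skolem terms come from the same strong quantifier.

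Third, I would invoke the soundness argument of Theorem \ref{TheoremCorrectness}, transposed to the intermediate setting. Each unsound strong inference is replaced by $\to L$ with $A(a)\to\forall x A(x)$ or $\exists x A(x)\to A(a)$. These are then discharged by cuts with $\exists y(A(y)\to\forall xA(x))$ and $\exists y(\exists xA(x)\to A(y))$, which are instances of $\SW$ and $\ED$ (applied to $\forall x A(x)\to\forall xA(x)$, resp. $\exists xA(x)\to\exists xA(x)$) and hence derivable in $L$. The resulting derivation is a cut-containing derivation in $L$, which suffices because we only need $L \vdash A$.

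\textbf{Main obstacle.} I expect the main difficulty to lie in the second step. $L$ is only given by axiom schemes, so the proof of $A^{\SA}$ may contain arbitrary instances of the intermediate axioms involving Skolem terms, and replacing those terms by variables must preserve the axiom instances. I would handle this by noting that axiom schemes are closed under substitution of terms. Since $\SA$-terms appear only as arguments, the replacement by indexed variables is a uniform renaming and keeps instances as instances. The remaining check is the acyclicity of the side-variable relation, which follows from the proper-subterm ordering on Skolem terms, as in the proof of Theorem \ref{th.3}.
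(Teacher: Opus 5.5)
Your Skolemization direction and your Step 3 agree with the paper. Skolemization is sound by instantiation. The final repair is the argument of Theorem~\ref{TheoremCorrectness}, with $\exists y(A(y)\to\forall xA(x))$ and $\exists y(\exists xA(x)\to A(y))$ correctly obtained from $\SW$ and $\ED$.

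The gap is in Step 2. The deSkolemization procedure of Section~\ref{skol} moves contractions down via Proposition~\ref{propcontr} and reintroduces strong quantifiers at the top-most possible place. It presupposes a \emph{cut-free} proof. Only there, by the subformula property, does every occurrence of a Skolem term $f(\bar t)$ descend from the instantiation of a Skolemized strong quantifier of the end-sequent, so that one knows where $x_{f(\bar t)}$ has to be generalized. A derivation read off from a Hilbert-style $L$-proof contains modus ponens, i.e.\ cuts, on arbitrary formulas in which Skolem symbols may occur applied to bound variables. Examples are the quantifier axiom $B(t,f(t))\to\exists x\,B(x,f(x))$ used to reach $A^{\SA}$, or an auxiliary lemma such as $\forall x\,(P(f(x))\to P(f(f(x))))$. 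Replacing $f(t)$ by $x_{f(t)}$ turns the first into $B(t,x_{f(t)})\to\exists x\,B(x,f(x))$, which is no longer an instance. Formulas of the second kind cannot be deSkolemized at all, because they are not Skolemizations of anything. So the replacement is not a ``uniform renaming'', and the obstacle you single out is not the real one.

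The missing idea, which is the heart of the paper's proof, is to eliminate cuts \emph{before} deSkolemizing, and to do it in plain $\LJ$. Collect the finitely many axiom instances $A_1,\dots,A_n$ used and pass to an $\LJ$-derivation of $A_1,\dots,A_n\vdash A^{\SA}$. Skolemize the antecedent too, obtaining $A_1^{\SA},\dots,A_n^{\SA}\vdash A^{\SA}$ with weak quantifiers only, and apply cut elimination in $\LJ$. The resulting cut-free $\LJ$-proof is trivially a cut-free $\LJpp$-proof. Only now does the procedure of Section~\ref{skol} apply, giving a cut-free $\LJpp$-proof of $A_1,\dots,A_n\vdash A$, which your Step 3 then repairs.

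The order matters. You propose to work in $\LJpp$ extended with the shift axioms, but once strong inferences are liberalized or shift axioms are present, cut elimination is no longer available (Example~\ref{unsound-ex} and the remarks after it). So the shifts may enter only in the last step.

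Finally, drop Step 1. It is not needed, and its key claim is false: Andrews Skolemization is in general not the structural Skolemization of any prenex form. For instance, $\exists x\exists y(\forall u\,P(x,u)\wedge\forall v\,R(y,v)\wedge S(x,y))$ gives $P(x,f(x))\wedge R(y,g(y))\wedge S(x,y)$. No linear prefix makes $u$ depend on $x$ alone and $v$ on $y$ alone, since that would require $\exists x$ before $\forall u$ before $\exists y$ before $\forall v$ before $\exists x$. Besides, $L\vdash A\leftrightarrow A'$ would not by itself transfer $L\vdash A^{\SA}$ to $L\vdash (A')^{S}$; that transfer is again a deSkolemization statement.
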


\begin{proof}
 Skolemization is sound by substitution, deSkolemization follows from the following:
Assume in $L$ a Skolemized form $A^s(a)$ is provable using axioms $A_1\dots A_n$. Then  $A_1\dots A_n\vdash A^s$ is derivable in $\lj$. Then $A^s_1\dots A^s_n\vdash A^s$ is derivable in $\lj$ by substitution again. Then we eliminate cuts in this $\lj$-proof. This cut-free $\lj$-proof is also a cut-free $\LJpp$-proof. Now we introduce strong quantifiers exactly as the deSkolemization in classical logic (cf. end of Section \ref{speed}). The result is a cut-free $\LJpp$-proof of $A_1\dots A_n\vdash A$. The resulting proof is unsound for intuitionistic logic, however, it can be made sound by adding quantifier shifts principles to the antecedent of the end-sequent. This is analogous to the soundness proof of $\LKpp$(see Theorem \ref{TheoremCorrectness}) using quantifier shifts which are derivable in $L$.\qed
\end{proof}

\begin{remark}
    Note that $\LJpp$ does not provide a cut-free system for intuitionistic logic expanded with quantifier shifts.
\end{remark}

Therefore, there is an extension of the second epsilon theorem for intermediate logic which derive all classical shifts. The price to pay is however that the construction in this theorem does not lead to cut-free proofs. Furthermore, the $\LJpp$-proofs lead to expansions as usual, which can be considered as extensions of the Herbrand Theorem in the sense that their soundness is not based on propositional validity only but is derived from variants of the shift principles. This is not uncommon in logics, cf. the number theoretic Herbrand disjunction in the analysis of the theorem of Roth which is validitated by Peano arithmetic (Luckhardt \cite{Luckhardt1989-LUCHZB}).

%%%%%%%%%%%%%%%%%%%%%%%%%%%%%%%%%%%%%
%%%%%%%%%%%%%%%%%%%%%5

%For the other direction, we argue as follows: the derivation of any Skolemized formula $A$ from formulas $P$ in the intermediate logic, is represented by the derivation of $P \limp A$ in $\LJ$. We Skolemize proof with end-sequent and apply cut-elimination. We deSkolemize the cut-free proof w.r.t Andrews Skolemization. A cut-free proof with structura Skolemization can be easily transformed to cut-free proof with Andrews Skolemization by deleting positions of the Skolem terms. Any $\LJ$ proof is an $ \LJpp$ proof and we reintroduce quantifiers as in (Section about Deskolemization). This proof might be incorrect but it becomes correct in the same way as $\LKp$ and $\LKpp$ are shown to be \textbf{sound \cite{}}. If it is possible to cut with consequences of the quantifier shift axioms. Note that the constructed derivation is not analytic. This is an example for a application of a proof theory in a non-analytic context. 

\begin{remark}
    The same argumentation holds for most logics with an underlying analytic calculus whose + and ++ variants are given. Consequently, an extension of $S4$ admits Skolemization and deSkolemization iff all classical quantifier shifts are present. That is iff $\forall$ and $\exists$ can be shifted through $\Box$. 
\end{remark}

%\subsection{Conservativity of Epsilon Calculus in Non classical Logics}

As a consequence of the soundness of Skolemization and deSkolemization, we obtain, in addition, the conservativity of epsilon calculus understood as addition of critical formulas to the quantifier-free parts. Here, we deal with non-classical logics and, therefore, we have to represent 
 $\forall x A(x)$  by  $A(\tau _x(A))$ and critical fomrulas $A(\tau_xA(x))\rightarrow A(t)$  (cf. Bell \cite{Bell2006}).

First, we analyze the structure of epsilon terms.

\begin{definition}
An $\varepsilon$-term (no variables are bound from outside the term) $e$ is included in an $\varepsilon$-term $e'$ if $e$ is a proper subterm of $e'$. 

An $\varepsilon$-term $e$ is subordinate to an $\varepsilon$-term $e' = \varepsilon_x A(x)$ if $e$ occurs in $e'$ and $x$ occurs in $A$.

 A ground function of an epsilon term is the skeleton remaining when all included subterms are removed. 

The same definition holds for $\tau$.

\end{definition}

\begin{example}
    Consider the $\varepsilon$-term:
\[
e:=\varepsilon_ x\, A(x, \varepsilon _y\, B(x,y), \varepsilon _z\, C(z,0) )
\]

for some constant $0$. \\

$ \varepsilon _y\, B(x,y)$ is subordinated to $e$.\\

$\varepsilon _z\, C(z,0)$ is are included in $e$.\\
$0$ is included, but $\varepsilon _z\, C(z,0)$ is the only maximal term included in $e$.\\
The ground function is \[
\varepsilon_ x\, A(x, \varepsilon _y\, B(x,y), u )
\]

\end{example}

Ground functions can be identified with Skolem functions, composed epsilon terms can be identified with Skolem terms, therefore epsilon terms can be translated into Skolem terms.

\begin{example}
The corresponding Skolem term for the above example is $f(g(0))$.
\end{example}

We now sketch the argument of the conservativity of epsilon calculus.

\begin{theorem}
    For every intermediate logic, where structural  Skolemization and deSkolemization is sound, translation to epsilon calculus adding critical formulas is conservative.
\end{theorem}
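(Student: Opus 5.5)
To prove the theorem, I would start from a derivation in the logic $L$ extended by critical formulas. Suppose $L$ derives $A^\varepsilon$ (the $\varepsilon/\tau$-translation of a first-order formula $A$) from finitely many critical formulas $C_1,\dots,C_k$. Each $C_j$ has the form $B(t)\to B(\varepsilon_x B(x))$ or $B(\tau_x B(x))\to B(t)$. By the deduction theorem noted in the introduction, this is a derivation in $L$ of the quantifier-free implication $\bigwedge_j C_j \to A^\varepsilon$. The epsilon calculus adds no new propositional principles, so the derivation is an $L$-derivation from finitely many axiom instances.

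The first step is to read all $\varepsilon$- and $\tau$-terms as Skolem terms. Using the definitions above, each ground function is identified with a fresh function symbol, and composed epsilon terms become Skolem terms, as in the example where $e$ becomes $f(g(0))$. Because identical ground functions get identical symbols, the translation is a uniform substitution. It therefore maps the $L$-derivation to an $L$-derivation of $\bigwedge_j C_j^{Sk}\to (A^\varepsilon)^{Sk}$. I would then check two things:
\begin{itemize}
\item $C_j^{Sk}$ is exactly the structural Skolemization of the quantifier axiom $B(t)\to\exists x B(x)$, resp.\ $\forall x B(x)\to B(t)$, whose Skolemized strong quantifier gets the symbol assigned to $\varepsilon_x B(x)$, resp.\ $\tau_x B(x)$.
\item $(A^\varepsilon)^{Sk}$ is the structural Skolemization of $A$, with the weak-quantifier witnesses supplied by the translated epsilon terms. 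Strictly, this is an instance of $A^S$ in which weak quantifiers are instantiated.
\end{itemize}
Together these give that $L$ proves the Skolemization of $\bigwedge_j Q_j \to A$, after weak instantiations, where the $Q_j$ are the quantifier axioms. That Skolemization still carries the instantiated weak quantifiers, which are recovered by $B(t)\to\exists xB(x)$ type steps valid in $L$.

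Next I would apply the hypothesis that structural deSkolemization is sound for $L$. This gives $L\vdash \bigwedge_j Q_j\to A$. Since each $Q_j$ is intuitionistically valid, $L\vdash A$, which is conservativity. For formulas of the pure language with no quantifiers, $A^\varepsilon=A$ and the conclusion is immediate.

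The main obstacle is the bookkeeping in the identification step. Nested and subordinated epsilon terms must produce Skolem terms whose dependencies match what structural Skolemization of the $Q_j$ and of $A$ prescribes. A term subordinate to $\varepsilon_x B(x)$ carries the bound $x$, so it must become a Skolem function applied to the weak variable, not a constant. Getting this right also requires that distinct critical formulas sharing an epsilon term receive the same Skolem symbol, and that this is consistent with the very weak regularity used for deSkolemization in $\LKpp$. I would handle this by induction on the nesting depth of epsilon terms, assigning symbols to ground functions bottom-up. If the matching with $A^S$ fails exactly, I would instead run the $\LJpp$ deSkolemization procedure from the end of Section \ref{speed} directly on a cut-free $\lj$-proof of the Skolemized implication. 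As in the preceding theorem, I would then repair soundness by cutting with the quantifier shifts, which $L$ derives by the previous theorem.
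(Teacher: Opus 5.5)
There is a genuine gap. Your argument rests on a \emph{single} application of the deSkolemization hypothesis, and that step does not go through. In general the translated implication $\bigwedge_j C_j^{Sk}\to(A^\varepsilon)^{Sk}$ is not an instance of the structural Skolemization of $\bigwedge_j Q_j\to A$ for any intuitionistically valid $Q_j$ of the original language. There are two reasons.

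First, structural Skolemization gives every strong quantifier occurrence its own fresh symbol. In the translation, however, one epsilon term, hence one symbol $f$, is shared by all critical formulas $B(t_1)\to B(e)$, $B(t_2)\to B(e),\dots$, and also by $A^\varepsilon$ if $A$ has strong quantifiers. Provability with the shared $f$ does not give provability with distinct $f_1,f_2$.

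Second, and decisively, take a subordinate term, $B(x)=D(x,\varepsilon_yE(x,y))$. The critical formula translates to $D(t',g(t'))\to D(f,g(f))$, where $g$ is the symbol of the ground function $\varepsilon_yE(u,y)$ and is governed by other critical formulas. The natural axiom $\exists xD(x,g(x))\to\exists xD(x,g(x))$ contains $g$, so it is not in the original signature. A valid $Q_j$ without $g$ cannot do the job either: its Skolem constant, chosen independently of $g$, would have to witness $\exists xD(x,g(x))$ for every admissible $g$. So nesting is not bookkeeping that an induction on nesting depth can settle: the target formula of a one-shot deSkolemization does not exist.

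Your fallback inherits this problem, since the $\LJpp$ deSkolemization procedure also needs the unskolemized formula. Moreover, the translated implication is only $L$-provable, so the axioms of $L$ must be carried into the antecedent before you have an $\lj$-proof.

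The missing idea is to deSkolemize one Skolem function at a time, in the order given by subordination. The paper restricts to $A$ with weak quantifiers only and first normalizes each translated critical formula:
\begin{itemize}
\item the term $t$ is replaced by an existentially bound variable on the left, so all critical formulas of one epsilon term coincide and can be contracted;
\item the result is universally closed over the maximal included subterms.
\end{itemize}
This gives $\forall\bar z(\exists yB(y,\bar z)\to B(f(\bar z),\bar z))$, where $B$ may still contain other Skolem symbols. By strictness of subordination there is an $f$ that occurs in no other remaining matrix. Treating all other Skolem symbols as ordinary function symbols, the end-sequent is then exactly a structural Skolemization with respect to $f$ alone. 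The hypothesis therefore replaces this formula by $\forall\bar z(\exists yB\to\exists yB)$, which is $\lj$-derivable and can be cut away. Iterating until no Skolem functions remain yields $L\vdash A$.
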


\begin{proof}(Sketch.)

$\bullet$ Consider a proof of a formula in epsilon calculus which is an epsilon-translation of a formula with weak quantifiers only. Let $\IE$ be defined as $\LE$ with all sequents restricted at most to one on right side and $\forall_l$
\begin{prooftree}
	\AxiomC{$A(t), \Pi \vdash \Delta$}
	\RightLabel{$\forall_l$}
	\UnaryInfC{$A(\varepsilon_x \neg A(x)), \Pi \vdash \Delta$}
\end{prooftree}
 
replaced by

\begin{prooftree}
	\AxiomC{$A(t), \Pi \vdash \Delta$}
	\RightLabel{$\tau_l$}
	\UnaryInfC{$A(\tau _x A(x)), \Pi \vdash \Delta$}
\end{prooftree}

We obtain a derivation of a sequent with al critical formulas in the antecedent and the epsilon-translation of the formula in the succedent. Remove all propositional cuts. Now we replace all epsilon terms by Skolem terms using the identification of ground functions on Skolem functions depending on the maximal included subterms. We are now in $\LJ$.

$\bullet$ Introduce weak quantifiers to reconstruct original end-sequent.

$\bullet$ 
Introduce existential quantifier on the left (right side fo the transformed critical formula) if the original inference has be on the right (left) side.

$\bullet$  Replacing the term $t$ by its bound variables, where $t$ is a term replaced by epsilon term in the critical formula.

$\bullet$
Introduce $\forall$ on transformed critical formula for all maximal included terms. 

Confer the example: 
\[A(t,0) \rightarrow A(\varepsilon_x A(x,s),s),\, \Pi\vdash \Gamma\]
\[\Downarrow\]
\[A(t',0) \rightarrow A(f(s'),s'),\, \Pi'\vdash \Gamma'\]
\[\Downarrow\]
\[\exists y A(y,0)\rightarrow A(f(s'),s')\]
\[\Downarrow\]
\[\forall x(\exists y A(y,x)\rightarrow A(f(x),x))\]

$\bullet$ There is a Skolem function not ocurring in the domain of other Skolem functions (this is guaranteed by the strictness of the order of subordination). DeSkolemize with respect to these Skolem functions.

We obtain according to the example
\[\forall x(\exists y A(x,y)\rightarrow \exists y A(y,x)), \Pi'\vdash \Gamma'\]

$\bullet$
Derive the transformed critical formula in $\LJ$ and cut it with the corresponding formula in the antecedent. 

$\bullet$
Eliminate these cuts. (This is necessary as a transformed critical formula might contain other Skolem functions).

$\bullet$
Repeat the procedure until exhausting the Skolem functions.

%Consider the translation into epsilon calculus where all epsilon terms are replaced by Skolem terms in the sense above. Now we have to show that this is in fact conservative.

%Now choose a translated critical formula\[A(t) \limp A(s)\] where $s$ is a Skolem term and where no Skolem function occurs in any matrix $ P(x)$ of any critical formula. Such a critical formula exists because of the subordination of the epsilon terms which excludes circular behaviour. We infer 
\qed
\end{proof}

Note that more and more quantifeir shifts are added to the antecedents during the deSkolemization process.

\begin{remark}
    The conservativity of epsilon calculus in intermediate logics with quantifier shifts does not mean that the (extended) first epsilon theorem holds, which is feasible only for the finite-valued first-order G\"odel logics. 
\end{remark}

    %generality of the method for S4 ... does not matter lawats the epsilon calculus is sound

\section{Conclusion}

Another application of $\LKpp$ is the possibility to handle macros of quantifiers as a single quantifier of its own.
This is not possible in $\LK$ as shown in Baaz, Lolic \cite{DBLP:conf/wollic/BaazL19}. 
The same principle holds for higher-order systems and corresponding partial systems, e.g., for Henkin quantifiers defined in the sense of

\[
\left(
\begin{array}{ll}
\forall x & \exists u \\
\forall y & \exists v
\end{array}
\right)
\, A(x,y,u,v).
\]

being defined as second-order formula
$\exists f \,\exists g \,\forall x \,\forall y \; A(x,y,f(x),g(y))$, 
where $f$ and $g$ are function variables.

This demonstrates the adaption of $ \LK$ to achieve specific mathematical insights.
Maybe in future mathematical theories will be associated with their specific proof systems in proof theory in the same way as mathematical theories being associated with specific classes of models in model theory.

\bibliographystyle{abbrv}
\bibliography{References}

\end{document}